
\documentclass[journal]{IEEEtran}
%\documentclass[journal]{IEEEtran}
%
% If IEEEtran.cls has not been installed into the LaTeX system files,
% manually specify the path to it like:
% \documentclass[journal]{../sty/IEEEtran}

% Some very useful LaTeX packages include:
% (uncomment the ones you want to load)
\usepackage{amsmath,amsthm,graphicx,amssymb,cite,enumerate,empheq, mathtools,multirow,float,caption,subcaption, tabularx, booktabs, threeparttable, color,xcolor, hyperref}
\mathtoolsset{showonlyrefs} %
% Definitions.
% --------------------
\theoremstyle{plain}

\usepackage{algorithm}%
\usepackage{algpseudocode}

\newtheorem{proposition}{Proposition}[section]

\newtheorem{cor}{Corollary}
\theoremstyle{remark}
\newtheorem{rem}{Remark}
\newtheorem{fact}{Fact}
\floatstyle{plaintop}
\restylefloat{table}
\newcommand{\ip}[2]{\left\langle #1,\,#2\right\rangle}
\newcommand{\myOp}[1]{\mathbf{#1}}
\newcommand{\mySetOp}[1]{#1}
\newcommand{\reply}[1]{#1}
\newcommand{\replySec}[1]{#1}
\newcommand{\replyThird}[1]{#1}
%\newcommand{\replyThird}[1]{\textcolor{blue}{#1}}
% augmenting the commands of algorithmicx package
\algnewcommand\algorithmicinput{{\bfseries Input}}%
\algnewcommand\algorithmicoutput{{\bfseries Output}}%
\algnewcommand\AlgInput{\item[\algorithmicinput]}%
\algnewcommand\AlgOutput{\item[\algorithmicoutput]}%
\algrenewcommand\Return{\State\textbf{return} }%

%\renewcommand{\baselinestretch}{0.95} 

% *** MISC UTILITY PACKAGES ***
%
%\usepackage{ifpdf}
% Heiko Oberdiek's ifpdf.sty is very useful if you need conditional
% compilation based on whether the output is pdf or dvi.
% usage:
% \ifpdf
%   % pdf code
% \else
%   % dvi code
% \fi
% The latest version of ifpdf.sty can be obtained from:
% http://www.ctan.org/pkg/ifpdf
% Also, note that IEEEtran.cls V1.7 and later provides a builtin
% \ifCLASSINFOpdf conditional that works the same way.
% When switching from latex to pdflatex and vice-versa, the compiler may
% have to be run twice to clear warning/error messages.

% *** CITATION PACKAGES ***
%re
%\usepackage{cite}
% cite.sty was written by Donald Arseneau
% V1.6 and later of IEEEtran pre-defines the format of the cite.sty package
% \cite{} output to follow that of the IEEE. Loading the cite package will
% result in citation numbers being automatically sorted and properly
% "compressed/ranged". e.g., [1], [9], [2], [7], [5], [6] without using
% cite.sty will become [1], [2], [5]--[7], [9] using cite.sty. cite.sty's
% \cite will automatically add leading space, if needed. Use cite.sty's
% noadjust option (cite.sty V3.8 and later) if you want to turn this off
% such as if a citation ever needs to be enclosed in parenthesis.
% cite.sty is already installed on most LaTeX systems. Be sure and use
% version 5.0 (2009-03-20) and later if using hyperref.sty.
% The latest version can be obtained at:
% http://www.ctan.org/pkg/cite
% The documentation is contained in the cite.sty file itself.

% *** GRAPHICS RELATED PACKAGES ***
%
\ifCLASSINFOpdf
  % \usepackage[pdftex]{graphicx}
  % declare the path(s) where your graphic files are
  % \graphicspath{{../pdf/}{../jpeg/}}
  % and their extensions so you won't have to specify these with
  % every instance of \includegraphics
  % \DeclareGraphicsExtensions{.pdf,.jpeg,.png}
\else
  % or other class option (dvipsone, dvipdf, if not using dvips). graphicx
  % will default to the driver specified in the system graphics.cfg if no
  % driver is specified.
  % \usepackage[dvips]{graphicx}
  % declare the path(s) where your graphic files are
  % \graphicspath{{../eps/}}
  % and their extensions so you won't have to specify these with
  % every instance of \includegraphics
  % \DeclareGraphicsExtensions{.eps}
\fi
\hyphenation{op-tical net-works semi-conduc-tor}

\begin{document}
\bstctlcite{IEEEexample:BSTcontrol}
%
% paper title
% Titles are generally capitalized except for words such as a, an, and, as,
% at, but, by, for, in, nor, of, on, or, the, to and up, which are usually
% not capitalized unless they are the first or last word of the title.
% Linebreaks \\ can be used within to get better formatting as desired.
% Do not put math or special symbols in the title.
\title{Convergent Primal-Dual Plug-and-Play Image Restoration: A General Algorithm and Applications}
%
%
% author names and IEEE memberships
% note positions of commas and nonbreaking spaces ( ~ ) LaTeX will not break
% a structure at a ~ so this keeps an author's name from being broken across
% two lines.
% use \thanks{} to gain access to the first footnote area
% a separate \thanks must be used for each paragraph as LaTeX2e's \thanks
% was not built to handle multiple paragraphs
%

\author{Yodai~Suzuki,~\IEEEmembership{Student Member,~IEEE,}
        Ryosuke~Isono,~\IEEEmembership{Graduate Student Member,~IEEE,}
        and Shunsuke~Ono,~\IEEEmembership{Senior Member,~IEEE}% <-this % stops a space
%\thanks{Manuscript received XXX, XXX; revised XXX XXX, XXX.}
\thanks{Y. Suzuki is with the Department of Computer Science, Institute of Science Tokyo, Yokohama, 226-8503, Japan (e-mail: suzuki.y.bd8c@m.isct.ac.jp).}% <-this % stops a space
\thanks{R. Isono is with the Department of Computer Science, Institute of Science Tokyo, Yokohama, 226-8503, Japan (e-mail: isono.r.1f44@m.isct.ac.jp).}% <-this % stops a space
\thanks{S. Ono is with the Department of Computer Science, Institute of Science Tokyo, Yokohama, 226-8503, Japan (e-mail: ono@comp.isct.ac.jp).}% <-this % stops a space
%This work was supported in part by JST FOREST under Grant JPMJFR232M and JST AdCORP under Grant JPMJKB2307, and in part by JSPS KAKENHI under Grant 22H03610, 22H00512, 23H01415, 23K17461, 24K03119, 24K22291, 25H01296, and 25K03136.

\thanks{This work was supported in part by JST FOREST under Grant JPMJFR232M, JST AdCORP under Grant JPMJKB2307, JST CREST under Grant JPMJCR25Q5, JST ACT-X under Grant JPMJAX24C1, JST BOOST, Japan Grant Number JPMJBS2430, and in part by JSPS KAKENHI under Grant 22H03610, 22H00512, 23H01415, 23K17461, 24K03119, 24K22291, 25H01296, and 25K03136.
%by JST PRESTO under Grant JPMJPR21C4, JST AdCORP under Grant JPMJKB2307, JST ACT-X under Grant JPMJAX24C1, JST BOOST, Japan Grant Number JPMJBS2430, and in part by JSPS KAKENHI under Grant 22H03610, 22H00512, 23H01415, 23K17461, 24K03119, and 24K22291.
}
\thanks{The code is available at \url{https://github.com/MDI-TokyoTech/Convergent_Primal-Dual_Plug-and-Play_Image_Restoration_A_General_Algorithm_and_Applications}.}
%\thanks{Y. Suzuki is with the Department of Computer Science, Tokyo Institute of Technology, Yokohama, 226-8503, Japan (e-mail: suzuki.y.du@m.titech.ac.jp).}
%\thanks{R. Isono is with the Department of Computer Science, Tokyo Institute of Technology, Yokohama, 226-8503, Japan (e-mail: isono.r.ac@m.titech.ac.jp).}
%\thanks{S. Ono is with the Department of Computer Science, Tokyo Institute of Technology, Yokohama, 226-8503, Japan (e-mail: ono@c.titech.ac.jp).}
% <-this % stops a space
}
% note the % following the last \IEEEmembership and also \thanks - 
% these prevent an unwanted space from occurring between the last author name
% and the end of the author line. i.e., if you had this:
% 
% \author{....lastname \thanks{...} \thanks{...} }
%                     ^------------^------------^----Do not want these spaces!
%
% a space would be appended to the last name and could cause every name on that
% line to be shifted left slightly. This is one of those "LaTeX things". For
% instance, "\textbf{A} \textbf{B}" will typeset as "A B" not "AB". To get
% "AB" then you have to do: "\textbf{A}\textbf{B}"
% \thanks is no different in this regard, so shield the last } of each \thanks
% that ends a line with a % and do not let a space in before the next \thanks.
% Spaces after \IEEEmembership other than the last one are OK (and needed) as
% you are supposed to have spaces between the names. For what it is worth,
% this is a minor point as most people would not even notice if the said evil
% space somehow managed to creep in.

% The paper headers
\markboth{IEEE TRANSACTIONS ON COMPUTATIONAL IMAGING}%
{Shell \MakeLowercase{\textit{et al.}}: Bare Demo of IEEEtran.cls for IEEE Journals}
% The only time the second header will appear is for the odd numbered pages
% after the title page when using the twoside option.
% 
% *** Note that you probably will NOT want to include the author's ***
% *** name in the headers of peer review papers.                   ***
% You can use \ifCLASSOPTIONpeerreview for conditional compilation here if
% you desire.

% If you want to put a publisher's ID mark on the page you can do it like
% this:
%\IEEEpubid{0000--0000/00\$00.00~\copyright~2015 IEEE}
% Remember, if you use this you must call \IEEEpubidadjcol in the second
% column for its text to clear the IEEEpubid mark.

% use for special paper notices
%\IEEEspecialpapernotice{(Invited Paper)}

% make the title area
\maketitle

% As a general rule, do not put math, special symbols or citations
% in the abstract or keywords.
\begin{abstract}
We propose a general deep plug-and-play (PnP) algorithm with a theoretical convergence guarantee. PnP strategies have demonstrated outstanding performance in various image restoration tasks by exploiting the powerful priors underlying Gaussian denoisers. However, existing PnP methods often lack theoretical convergence guarantees under realistic assumptions due to their ad-hoc nature, resulting in inconsistent behavior. Moreover, even when convergence guarantees are provided, they are typically designed for specific settings or require a considerable computational cost in handling non-quadratic data-fidelity terms and additional constraints, which are key components in many image restoration scenarios. To tackle these challenges, we integrate the PnP paradigm with primal-dual splitting (PDS), an efficient proximal splitting methodology for solving a wide range of convex optimization problems, and develop a general convergent PnP framework. Specifically, we establish theoretical conditions for the convergence of the proposed PnP algorithm under a reasonable assumption. Furthermore, we show that the problem solved by the proposed PnP algorithm is not a standard convex optimization problem but a more general monotone inclusion problem, where we provide a mathematical representation of the solution set. Our approach efficiently handles a broad class of image restoration problems with guaranteed theoretical convergence. Numerical experiments on specific image restoration tasks validate the practicality and effectiveness of our theoretical results.
\end{abstract}

% Note that keywords are not normally used for peerreview papers.
\begin{IEEEkeywords}
Image restoration, plug-and-play (PnP) algorithms, primal-dual splitting (PDS), convergence guarantee.
\end{IEEEkeywords}

% For peer review papers, you can put extra information on the cover
% page as needed:
% \ifCLASSOPTIONpeerreview
% \begin{center} \bfseries EDICS Category: 3-BBND \end{center}
% \fi
%
% For peerreview papers, this IEEEtran command inserts a page break and
% creates the second title. It will be ignored for other modes.
\IEEEpeerreviewmaketitle

\section{Introduction}
% The very first letter is a 2 line initial drop letter followed
% by the rest of the first word in caps.
% 
% form to use if the first word consists of a single letter:
% \IEEEPARstart{A}{demo} file is ....
% 
% form to use if you need the single drop letter followed by
% normal text (unknown if ever used by the IEEE):
% \IEEEPARstart{A}{}demo file is ....
% 
% Some journals put the first two words in caps:
% \IEEEPARstart{T}{his demo} file is ....
% 
% Here we have the typical use of a "T" for an initial drop letter
% and "HIS" in caps to complete the first word.

% image restoration
\subsection{Image Restoration by Proximal Splitting Algorithms}
\IEEEPARstart{I}{mage} restoration is a longstanding and essential problem with diverse applications, ranging from remote sensing, geoscience, and astronomy to biomedical imaging. In general, it is reduced to the inverse problem of estimating an original image from an observed image, which is degraded by some linear observation process and noise contamination (e.g., additive white Gaussian noise and Poisson noise). 

Since this inverse problem is often ill-posed or ill-conditioned, a standard approach formulates the restoration task as a convex optimization problem to characterize a desirable solution. In this scheme, we typically minimize the sum of a data-fidelity term with respect to a given observed image and a regularization term reflecting prior knowledge on natural images. Typically, the data-fidelity is chosen as the negative log-likelihood of the statistical distribution assumed for the noise contamination. Conversely, the choice of the regularization term is highly diverse. One of the most popular techniques is Total Variation (TV)~\cite{TV_original} and its extensions\cite{TV1,TVExtension,TV6,TV5}, which promote the spatial piecewise smoothness of images. In any case, these data-fidelity and regularization terms are possibly nonsmooth, which makes it difficult to solve the convex optimization problem analytically or using gradient-based methods.

This difficulty has been overcome by proximal splitting algorithms, which assume that the proximal operators associated with these terms can be computed efficiently~\cite{ProximalSplittingAlgorithm1,ProximalSplittingAlgorithm2,ProximalSplittingAlgorithm3}. Notable examples include Forward-Backward Splitting (FBS)~\cite{FBS}, Alternating Direction Method of Multipliers (ADMM)~\cite{ADMM}, and Primal-Dual Splitting (PDS)~\cite{PDS1, PDS4, PDS2, PDS3}. Among these algorithms, PDS offers significant flexibility and efficiency, as it can handle nonsmooth convex terms involving linear operators without requiring matrix inversions or performing inner iterations, which is generally impossible in the case of FBS or ADMM. These features offer several advantages for image restoration applications~\cite{PDSApplication1, PDSApplication2} (refer to Section~\ref{ApplicationToImageRestoration} for more detailed explanation with specific formulations).

\subsection{Plug-and-Play (PnP) Algorithms}
\begin{table*}[t]
    \begin{center}
        \caption{Pros and Cons of Existing PnP Methods and the Proposed Method.}
        \label{PnPmethodsComparison}
            \scalebox{0.73}{
                \begin{tabular}{llcccccc} 
                    \toprule
                   Algorithms & Authors &  Data-fidelity &  
                   \begin{tabular}{c} Additional\\ constraints\end{tabular}  & 
                   \begin{tabular}{c} Inversion\\ free\end{tabular}
                   & \begin{tabular}{c} Convergence\\guarantees \end{tabular} &
                    \begin{tabular}{c} Solution\\ characterization\end{tabular}& 
                    \begin{tabular}{c} Assumption\\ on denoiser $J$ \end{tabular}
                    \vspace{0mm} \\ \midrule 
                   \multirow{3}{*}{PnP-FBS} & \replySec{Sun et.al.~\cite{Sun2019TCI}} & \replySec{Quadratic} & \replySec{$-$} & \replySec{\checkmark} & \replySec{\checkmark} & \replySec{$-$} & \replySec{$J$ is averaged.} \\  
                    & Pesquet et.al.~\cite{Pesquet2021} & Quadratic & $-$ & \checkmark & \checkmark & \checkmark &  $J$ is firmly nonexpansive. \\
                   & Ebner et al.~\cite{PnPisConvergent} & Quadratic &  $-$ & \checkmark &  \checkmark & \checkmark & $J$ is contractive.
                    \\  \midrule 
                   \multirow{6}{*}{PnP-ADMM} 
                    & Venkatakrishnan et al.~\cite{PnP0} & Any$^\text{\reply{*}}$ & $-$ & $-$ & $-$ & $-$ & $-$\\
                    & Rond et al.~\cite{PoissonPnP} & GKL div. & $-$ & $-$ & $-$ & $-$ & $-$\\
                    & Sreehari et al.~\cite{ADMMPnPConvergence1} & Any$^\text{\reply{*}}$ & $-$ & $-$  & \checkmark & $-$ & $\nabla J$ is a symmetric matrix, etc. \\
                    & S. H. Chan et al.~\cite{ADMMPnPConvergence2} & Any$^\text{\reply{*}}$ & $-$ & $-$  & \checkmark & $-$ & $J$ is bounded, etc. \\
                    & P. Nair et al.~\cite{IstaAndADMMPnPConvergence} & Any$^\text{\reply{*}}$ & $-$ & $-$  & \checkmark & $-$ & $J$ is firmly nonexpansive. \\ 
                    & Sun et al.~\cite{ConvergentPnPADMM} & Any$^\text{\reply{*}}$ & \checkmark & $-$ & \checkmark & \checkmark & $J$ is firmly nonexpansive. \\ \midrule 
                   \multirow{3}{*}{PnP-PDS} & Ono\cite{PnPPDS}& Any$^\text{\reply{*}}$ & \checkmark & \checkmark & $-$ & $-$ & $-$ \\
                   & Garcia et al.~\cite{PnPPDSByPesquet} & Constraint & \checkmark & \checkmark & \reply{\checkmark} & \reply{\checkmark} & $J$ is firmly nonexpansive.\\
                   & Ours & Any$^\text{\reply{*}}$ & \checkmark & \checkmark &\checkmark & \checkmark & $J$ is firmly nonexpansive.\vspace{-1mm}\\
                   \bottomrule 
                   \multicolumn{8}{l}{\small{$^{*}$ \reply{Data-fidelity functions should be proper lower-semicontinuous convex, and their proximal operators should have closed-form expressions or be efficiently computable.}}}
               \end{tabular}
            }
    \end{center}
\end{table*}

Recently, regularization using Gaussian denoisers has become an active area of research. This line of work was pioneered by a PnP paradigm that was proposed in~\cite{PnP0} and later extended, e.g., in~\cite{PnPLikeWithBM3D, PnPLikeWithBM3DExtended, PoissonPnP}. The basic idea of PnP approaches is to replace the proximity operator of the regularization term used in proximal splitting algorithms with an off-the-shelf Gaussian denoiser, such as Block-Matching and 3D Filtering (BM3D)~\cite{BM3D}, Non-Local Means (NLM)~\cite{NLM}, Trainable Nonlinear Reaction Diffusion (TNRD)~\cite{TNRD}, and Denoising Convolutional Neural Networks (DnCNN)\cite{DnCNN1}. Numerous studies have shown that the PnP strategy achieves superior performance across a variety of applications, including tomographic imaging~\cite{PnP0, ADMMPnPConvergence1}, biomedical imaging~\cite{PnPBio}, magnetic resonance imaging~\cite{PnP1}, and remote sensing~\cite{PnPRemoteSensing} (see~\cite{PnP3} for a recent review). 

Following these impressive results, several works explore attempts to plug in more sophisticated denoisers. Zhang et al. proposed a neural network specifically designed for use in PnP methods~\cite{DRUNet}, known as DRUNet. Duff et al. proposed \textit{generative regularizers}~\cite{GenerativePnP}, realized by Autoencoders (AE), Variational Autoencoders (VAE)~\cite{VAE}, and Generative Adversarial Networks (GAN)~\cite{GAN}. Zhu et al. showed that diffusion models~\cite{DiffusionModel} also have promising performance as prior knowledge used in PnP methods~\cite{DiffusionPnP}. We should mention that despite such extensive research, the theoretical convergence guarantee for PnP methods remains a challenging open question~\cite{RealSN, ADMMPnPConvergence1, ADMMPnPConvergence2, Sun2019TCI, PnPisConvergent}. This is mainly because replacing the proximity operator of the regularization term with a denoiser does not necessarily guarantee the existence of the underlying convex optimization problem.

To explore explicit data-driven regularization, alternative frameworks including regularization by denoising (RED)~\cite{RED,REDPRO}, gradient step denoisers (GSD)~\cite{GSD_Cohen, GSD_Hurault,GSD_Hurault2,GSD_Quasi}\reply{, and its generalization in Bregman geometry~\cite{BregmanProximalGradient}} have been proposed. Unlike PnP strategies, they explicitly define a regularization term derived from the output of a denoiser, incorporating it directly into the formulation of the optimization problem. Subsequently, they construct algorithms to solve this problem as a special case of traditional proximal splitting algorithms. Nonetheless, it should be noted that they still rely on conditions that may not be \reply{desirable} in real-world settings. For example, RED imposes strict assumptions on the denoiser to guarantee its convergence, which are often impractical for deep denoisers~\cite{REDAnalysis} (see Section~\ref{ComparisonWithRED} for details). The GSD approach introduces a non-convex regularization term, \replySec{which makes it difficult to guarantee convergence to globally optimal solutions and may cause the final solution to depend on the initialization}.

% Convergence of PnP 
In contrast, within the context of PnP methods, some studies have been conducted to establish their convergence under realistic assumptions, focusing on the firm nonexpansiveness of denoisers. Pesquet et al. showed theoretically that a subclass of maximally monotone operators (MMO) can be fully represented by the firmly nonexpansive DnCNN with a certain network structure. They also constructed a PnP method based on forward-backward splitting (PnP-FBS) with convergence guarantees for image restoration~\cite{Pesquet2021}. \reply{As highlighted by the authors, their technique can be extended to any algorithm, as long as its analysis relies on the theory of MMOs, including PDS.} Sun et al. proposed a PnP based on ADMM (PnP-ADMM) with the convergence guarantee by plugging in a firmly nonexpansive denoiser~\cite{ConvergentPnPADMM}. \reply{Hertrich et al.~\cite{HERTRICH2021203} developed a method to train averaged CNNs by stochastic gradient descent on the Stiefel manifold.} %It is worth mentioning that both of these works also provide a mathematical characterization of limit points of sequences generated by PnP algorithms.

% PnP-PDS
Regarding the flexibility of PnP algorithms, PnP based on PDS (PnP-PDS) offers the advantage of efficiently handling a wider range of image restoration problems than PnP-FBS and PnP-ADMM, as mentioned above. Ono proposed the first PnP-PDS and demonstrated its practicality, while no theoretical convergence guarantees were provided~\cite{PnPPDS}. On the other hand, Garcia et al. proposed a convergent PnP-PDS with a firmly nonexpansive denoiser for a specific case and achieved successful results in the restoration of images obtained by a photonic lantern~\cite{PnPPDSByPesquet}. We remark that the denoiser assumption of firmly nonexpansiveness is reasonable in actual applications, as shown in~\cite{Pesquet2021}. Nevertheless, the convergence of PnP-PDS in its general form has not yet been fully analyzed. Once a general PnP-PDS with convergence guarantees is established, various image restoration problems can be approached with rich priors of Gaussian denoisers in an efficient and flexible fashion.

% Research question and contribution
\subsection{Contributions and Paper Organization}
Now, the following natural question arises: \textit{Can we construct a general PnP-PDS with theoretical convergence guarantees under realistic assumptions on denoisers?} In this paper, we answer this question by proposing a general convergent PnP-PDS strategy with a theoretical guarantee that has a wide range of practical applications. The main contributions of our paper are outlined below:\begin{itemize}
\setlength{\leftskip}{-10pt}
    \item We study a general PnP-PDS algorithm that extends the framework proposed in~\cite{PnPPDS} and establish its theoretical convergence conditions, building upon the foundational results presented in~\cite{Pesquet2021}.
    \item We show that the problem solved by our PnP-PDS algorithm is not a standard convex optimization problem but a certain monotone inclusion problem, and provide a mathematical representation of the solution set.
    \item \reply{Based on the above foundations, we show that our PnP-PDS efficiently solves typical inverse problems formulated with non-quadratic data-fidelity, with convergence guarantees. In particular, it allows the use of an $\ell_2$-ball data-fidelity constraint for image restoration under Gaussian noise, offering more flexibility in parameter selection and improving the versatility of the denoiser. Moreover, it effectively handles the generalized Kullback–Leibler (GKL) divergence for image restoration under Poisson noise.}
%    \item Based on the above theoretical foundations, we present specific results on the convergence properties of PnP-PDS to address typical image restoration problems, including non-quadratic data-fidelity terms and additional constraints.
 \end{itemize} 
In Table~\ref{PnPmethodsComparison}, we briefly summarize how the proposed method differs from related works (see Section~\ref{ComparisonWithOtherMethods} for detailed and comprehensive discussion). The advantages of the proposed convergent PnP-PDS are illustrated by experimental results on a number of typical image restoration tasks.

The remainder of this paper is organized as follows. In Section~\ref{sec:preliminaries}, we set up proximal tools, the PDS algorithm, and the firmly nonexpansive denoiser to construct our PnP-PDS. Section~\ref{sec:proposedmethod} presents our main results. We first establish a convergence guarantee for the general PnP-PDS algorithm and characterize its solutions. Subsequently, we introduce several specific forms of the algorithm designed for image restoration tasks, along with their corresponding solution characterizations. In Section~\ref{sec:experiments}, we conduct image restoration experiments, focusing on deblurring and inpainting tasks under two noise conditions: Gaussian noise and Poisson noise, and demonstrate that our PnP-PDS achieves state-of-the-art performance and high stability.

The preliminary version of this paper, without the applications to deblurring and inpainting under Poisson noise, the introduction of a box constraint, and the complete theoretical proof, can be found in the conference proceedings~\cite{PnPPDSOurs}.

\section{Preliminaries}
\label{sec:preliminaries}
\subsection{Notations}
%Vectors and matrices are denoted by bold lower and upper case letters, respectively. 
Let $\|\cdot\|_1$ and $\|\cdot\|_2$ denote the $\ell_1$-norm and $\ell_2$-norm, respectively. The transpose of a vector is denoted by $(\cdot)^\top$. The notation $\left[\cdot\right]_m$ indicates the $m$-th element of a vector. The adjoint of a bounded linear operator is denoted by $(\cdot)^*$, and the operator norm is represented as $\|\cdot\|_{\mathrm{op}}$.

Let $\mathcal{H}$ be a real Hilbert space with its inner product $\langle \cdot, \cdot \rangle$ and norm $\|\cdot\|$. The notation $\Gamma_0(\mathcal{H})$ represents the set of proper lower-semicontinuous convex functions from $\mathcal{H}$ to $\mathbb{R}\cup\{\infty\}$. For $\mathcal{F}\in\Gamma_0(\mathcal{H})$, its effective domain is $\mathrm{dom}(\mathcal{F}):=\{\mathbf{x}\in\mathcal{H}:\mathcal{F}(x)<\infty\}$. 

For $D\subset\mathcal{H}$, we consider an operator $T:D\to\mathcal{H}$. If $\|T\mathbf{x}_1-T\mathbf{x}_2\|\leq\|\mathbf{x}_1-\mathbf{x}_2\|$ holds for every $(\mathbf{x}_1,\,\mathbf{x}_2)\in D^2$, $T$ is said to be nonexpansive. When $2T-\mathrm{Id}$ is nonexpansive, $T$ is called firmly nonexpansive~\cite[Proposition~4.4]{ConvexBook} ($\mathrm{Id}$ denotes an identity operator). If $\kappa T$ is firmly nonexpansive for some $\kappa\in(0,\,\infty)$, then $T$ is $\kappa$-cocoercive.

Let $\mySetOp{A}:\mathcal{H}\to2^{\mathcal{H}}$ be a set-valued operator. Its graph is $\mathrm{gra}\mySetOp{A}=\{(\mathbf{x},\,\mathbf{y})\in\mathcal{H}^2:\mathbf{y}\in \mySetOp{A}\mathbf{x}\}$. If $\ip{\mathbf{x}_1-\mathbf{x}_2}{\mathbf{y}_1-\mathbf{y}_2}\geq0$ holds for every $(\mathbf{x}_1,\,\mathbf{y}_1)\in\mathrm{gra}\mySetOp{A}$ and $(\mathbf{x}_2,\,\mathbf{y}_2)\in\mathrm{gra}\mySetOp{A}$, $\mySetOp{A}$ is said to be monotone. The operator $\mySetOp{A}$ is called maximally monotone if there exists no monotone operator $\mySetOp{B}:\mathcal{H}\to2^{\mathcal{H}}$ such that $\mathrm{gra}\mySetOp{B}$ properly contains $\mathrm{gra}\mySetOp{A}$.

We introduce several specific convex functions and sets that we use in this paper. Let $C$ be a nonempty closed convex set on $\mathcal{H}$. The indicator function of $C$, denoted by $\iota_C$, is defined as
\begin{align}
\iota_C(\mathbf{x}):=
\begin{cases}
0,&\mathrm{if}\, \mathbf{x}\in C,\\
\infty,&\mathrm{otherwise}.
\end{cases}
\end{align}
Let $B_{2,\,\varepsilon}^{\mathbf{v}}$ be the $\mathbf{v}$-centered $\ell_2$-norm ball with radius $\varepsilon>0$, which is given by
\begin{align}
    B_{2,\,\varepsilon}^{\mathbf{v}}:=\left\{\mathbf{x}\in\mathcal{H}\mid\|\mathbf{x}-\mathbf{v}\|_2\leq\varepsilon\right\}.
\end{align}
The Generalized Kullback-Leibler (GKL) divergence between $\mathbf{x} \in \mathbb{R}^N$ and $\mathbf{v} \in \mathbb{R}^N$ is defined as
\begin{align}
&\hspace{-4mm}\mathrm{GKL}_\mathbf{v}(\mathbf{x})\\
&\hspace{-4mm}:=\displaystyle\sum_{i=1}^{N}\begin{cases}
\eta [\mathbf{x}]_i-[\mathbf{v}]_i\ln{\eta [\mathbf{x}]_i},\,&\hspace{-2mm}\text{if $[\mathbf{v}]_i>0$ and $[\mathbf{x}]_i>0$},\\
\eta [\mathbf{x}]_i,\,&\hspace{-2mm}\text{if $[\mathbf{v}]_i=0$ and $[\mathbf{x}]_i\geq0$,}\\
\infty,\,&\hspace{-2mm}\text{otherwise,}
\end{cases}
\label{definitionOfGKL}
\end{align}
where $\eta>0$ is a certain scaling parameter, which will be detailed in Section \ref{PoissonApplication}.

\subsection{Proximal Tools}
\label{ssec:prox}
For $\mathcal{F} \in\Gamma_0(\mathcal{H})$, the proximity operator of an index $\gamma>0$ is defined as follows\cite{prox1}:
\begin{align}
\label{definitionOfProx}
\mathrm{prox}_{\gamma \mathcal{F}}(\mathbf{x}):=\underset{\mathbf{y}\in\mathcal{H}} {\operatorname{argmin}} \,\mathcal{F}(\mathbf{y})+\dfrac{1}{2\gamma}\|\mathbf{y}-\mathbf{x}\|_2^2.
\end{align}
Let us consider the convex conjugate function of $\mathcal{F}$, defined as follows: 
\begin{align}
\mathcal{F}^{*}(\mathbf{x}):=\underset{\mathbf{\mathbf{y}}\in \mathcal{H}}{\operatorname{sup}} \,\ip{\mathbf{x}}{\mathbf{y}}-\mathcal{F}(\mathbf{y}).
\end{align}
According to the Moreau's identity\cite[Theorem 3.1 (ii)]{moreausIdentitiy}, $\mathrm{prox}_{\gamma \mathcal{F}^{*}}$ can be computed via $\mathrm{prox}_{\mathcal{F}/\gamma}$ as
\begin{align}
\mathrm{prox}_{\gamma \mathcal{F}^{*}}(\mathbf{x})=\mathbf{x}-\gamma\mathrm{prox}_{\mathcal{F}/\gamma}(\mathbf{x}/\gamma).
\end{align}

The proximity operator of $\iota_C$ equals to the metric projection onto $C$, i.e.,
\begin{align}
\label{proxOfIndicationFunction}
\mathrm{prox}_{\iota_C}(\mathbf{x})=P_C(\mathbf{x}):=\underset{\mathbf{y}\in C}{\operatorname{argmin}} \,\|\mathbf{y}-\mathbf{x}\|_2.
\end{align}
In particular, it can be calculated for $C:=B_{2,\,\varepsilon}^{\mathbf{v}}$ as follows:
\begin{align}
\mathrm{prox}_{\iota_{B_{2,\,\varepsilon}^{\mathbf{v}}}}(\mathbf{x})=
\begin{cases}
\mathbf{x},&\mathrm{if}\, \mathbf{x}\in C,\\
\mathbf{v}+\dfrac{\varepsilon(\mathbf{x}-\mathbf{v})}{\|\mathbf{x}-\mathbf{v}\|_2},&\mathrm{otherwise}.
\end{cases}
\end{align}
The proximity operator of the GKL divergence can be computed as follows:
\begin{align}
\label{proxOfGKL}
\left[\mathrm{prox}_{\gamma\mathrm{GKL}_{\mathbf{v}}}(\mathbf{x})\right]_i=\dfrac{1}{2}([\mathbf{x}]_i-\gamma\eta+\sqrt{([\mathbf{x}]_i-\gamma\eta)^2+4\gamma [\mathbf{v}]_i}).
\end{align}

\subsection{Primal-Dual Splitting Algorithm}

\label{PDShead}
Let us review a PDS algorithm~\cite{PDS2},\reply{\cite{PDS3}}, which is a basis of our PnP strategy. Let $\mathcal{X}$ and $\mathcal{Y}$ be two real Hilbert spaces. Suppose that $f\in\Gamma_0(\mathcal{X})$ is differentiable with its gradient $\nabla f$ being $\beta$-Lipschitz continuous with $\beta>0$, $g\in\Gamma_0(\mathcal{X})$ and $h\in\Gamma_0(\mathcal{Y})$ are proximable functions, i.e., the proximity operators have a closed form solution or can be computed efficiently, and $\myOp{L}:\mathcal{X}\to\mathcal{Y}$ is a bounded linear operator. Consider the following convex optimization problem:
\begin{align}
\label{PDSprimal}
\underset{\mathbf{x}\in\mathcal{X}}{\operatorname{minimize}} \,f(\mathbf{x})+g(\mathbf{x})+h(\myOp{L}\mathbf{x}),
\end{align}
where the set of solutions to~\eqref{PDSprimal} is assumed to be nonempty. Its dual formulation is given as follows \cite[Chap. 15]{ConvexBook}:
\begin{align}
&\label{PDSdual}\underset{\mathbf{y}\in\mathcal{Y}}{\operatorname{minimize}} \,(f+g)^{*}(-\myOp{L}^{*}\mathbf{y})+h^{*}(\mathbf{y}).
\end{align}

The PDS algorithm jointly solves both \eqref{PDSprimal} and \eqref{PDSdual} by the following iterative procedure:
\begin{empheq}[left=\empheqlfloor]{alignat=2}
\begin{split}
\label{PDSupdate}
&\mathbf{x}_{n+1}=\mathrm{prox}_{\gamma_1 g}\left(\mathbf{x}_{n}-\gamma_1(\nabla f(\mathbf{x}_{n})+\myOp{L}^{*}\mathbf{y}_{n})\right),\\
&\mathbf{y}_{n+1}=\mathrm{prox}_{\gamma_2 h^{*}}\left(\mathbf{y}_{n}+\gamma_2 \myOp{L}(2\mathbf{x}_{n+1}-\mathbf{x}_{n})\right),
\end{split}
\end{empheq}
where $\gamma_1>0$ and $\gamma_2>0$ are \reply{parameters} that satisfy 
\begin{align}
    \label{PDSinequality}
    \gamma_1\left(\dfrac{\beta}{2}+\gamma_2\|\myOp{L}\|_{\mathrm{op}}^2\right)<1.
\end{align}

Additionally, we consider the following monotone inclusion problem derived from the optimality conditions of \eqref{PDSprimal} and \eqref{PDSdual}:
\begin{align}
\label{PDSSolutionSet}
\begin{split}
\mathrm{Find}\,&(\hat{\mathbf{x}},\,\hat{\mathbf{y}})\in\,\mathcal{X}\times\mathcal{Y}\\
\mathrm{s.t.}&
\begin{pmatrix}\mathbf{0}\\\mathbf{0}\end{pmatrix}\in
\begin{pmatrix}
\partial g(\hat{\mathbf{x}})+\myOp{L}^{*}\hat{\mathbf{y}}+\nabla f(\hat{\mathbf{x}})\\
-\myOp{L}\hat{\mathbf{x}}+\partial h^{*}(\hat{\mathbf{y}})
\end{pmatrix},
\end{split}
\end{align}
where $\partial(\cdot)$ denotes the subdifferential of a convex function. Under a certain mild condition, the solutions to \eqref{PDSSolutionSet} coincide with the solutions to \eqref{PDSprimal} and \eqref{PDSdual} (see~\cite{PDS2} and \reply{\cite{PDS3}} for details).

\subsection{Firmly Nonexpansive Denoiser\cite{Pesquet2021}}
\label{firmlyNonexpansiveDenoiser}
We review the method proposed in~\cite{Pesquet2021} for training a firmly nonexpansive denoiser represented by DnCNN, which is essential for the construction of convergent PnP algorithms. Basically, they add a specific penalty during training to ensure the firm nonexpansiveness of the denoiser.

Let $J:\mathbb{R}^{K}\to\mathbb{R}^K$ be a Gaussian denoiser represented by DnCNN. To train a firmly nonexpansive denoiser, we first consider the operator $Q:=2J-\mathrm{Id}$, which is assumed to be differentiable.

By the definition of firm nonexpansiveness, $J$ is firmly nonexpansive if and only if $Q$ is nonexpansive. Thus, our objective is to ensure the nonexpansiveness of $Q$ rather than the firm nonexpansiveness of $J$. If $Q$ is differentiable, its nonexpansiveness is equivalent to its Jacobian $\nabla Q$ satisfying the following condition for all $\mathbf{x}\in\mathbb{R}^{K}$:
\begin{align}
    \label{NonexpansivenessAndJacobian}
    \|\nabla Q(\mathbf{x})\|_\mathrm{sp}\leq1,
\end{align}
where $\|\cdot\|_\mathrm{sp}$ represents the spectral norm.

Based on the above fact, the authors of \cite{Pesquet2021} proposed a penalty to encourage $Q$ to satisfy the condition in \eqref{NonexpansivenessAndJacobian}, and defined a new loss function expressed by the sum of the squared prediction error and the penalty. Specifically, the loss function is calculated for each image as
\begin{align}
    \label{PenaltyOfDNCNN}
    \|J\left(\mathbf{x}_\ell)-\overline{\mathbf{x}}_\ell\|^2_2+\tau\,\mathrm{max}(\|\nabla Q(\mathbf{\tilde{x}}_\ell)\|_\mathrm{sp}^2,\,1-\xi\right).
\end{align}
Here, $\tau>0$ is a penalization parameter, $\xi>0$ is a parameter that defines a safety margin, $\overline{\mathbf{x}}_\ell$ is the $\ell$-th ground truth image in the training dataset, $\mathbf{x}_\ell$ is the image corrupted with Gaussian noise, and $\tilde{\mathbf{x}}_\ell$ is the image calculated by the following equation: 
\begin{align}
\label{Qselection}
\tilde{\mathbf{x}}_\ell=\rho\overline{\mathbf{x}}_\ell+(1-\rho)J(\mathbf{x}_\ell),
\end{align}
where $\rho$ is a random value chosen according to a uniform distribution in the range from $0$ to $1$. 

Note that $\|\nabla Q(\mathbf{\tilde{x}}_\ell)\|_\mathrm{sp}$ can be computed using the power iterative method\cite{PowerIteration}. For more detailed information on the training method, see \cite[Section 3.2]{Pesquet2021}.
 
\section{Proposed Method}
\label{sec:proposedmethod}
We propose a general convergent PnP-PDS, demonstrate its application to image restoration with specific cases of our PnP-PDS, and provide detailed discussions on its differences from related methods. For clarity, we present a flowchart in Fig.~\ref{flowchart} that illustrates the relationships between our theoretical results.

\begin{figure}[t]
    \centering
    \includegraphics[keepaspectratio, width=\linewidth]{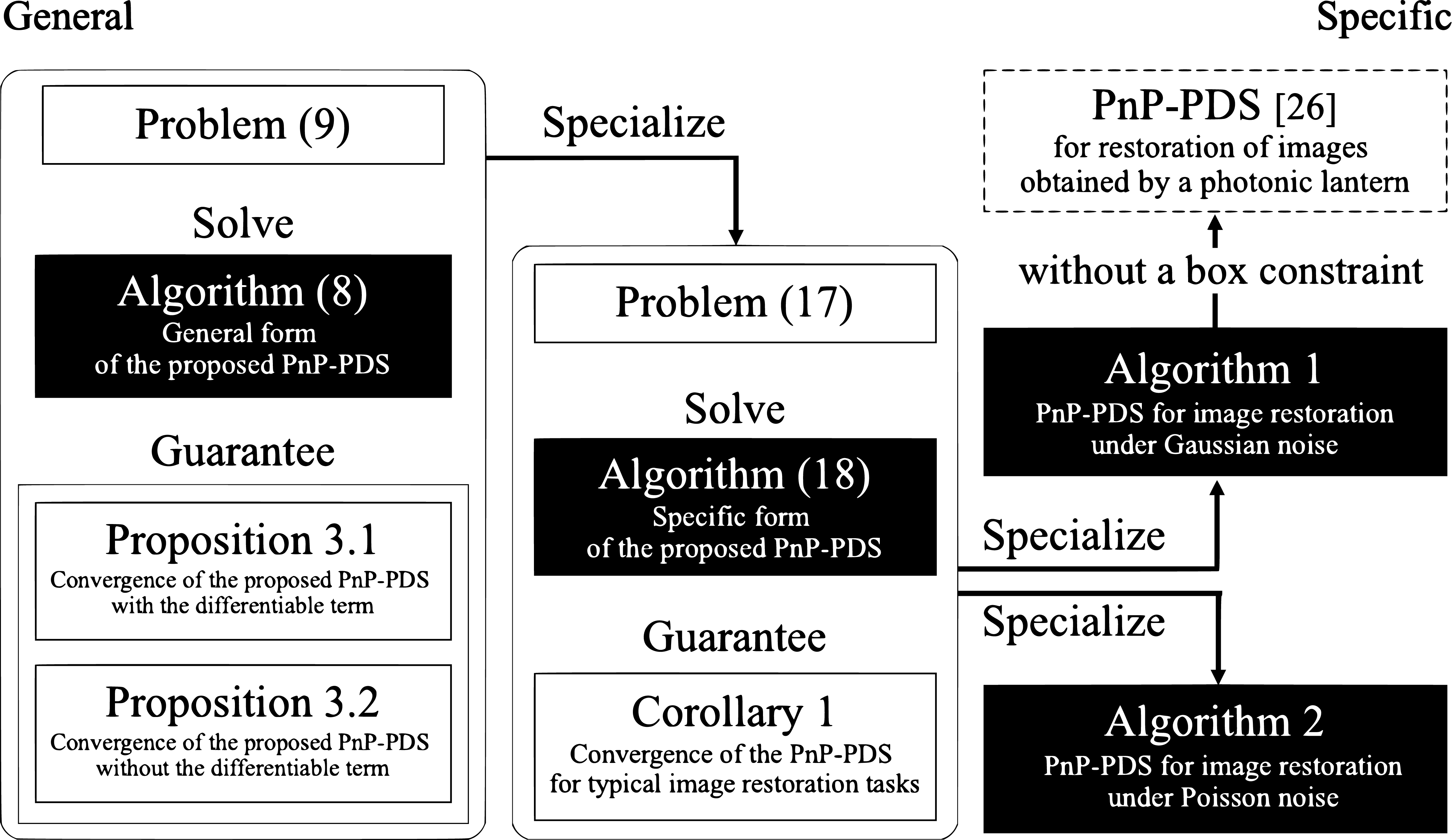}
    \caption{A flowchart showing the basic structure of the theoretical discussions in this paper. The dotted and solid lines represent existing and new results, respectively.}
    \label{flowchart}
\end{figure}
\subsection{General Form of Convergent PnP-PDS and Its Convergence Analysis}
We propose a convergent PnP-PDS that substitutes the proximity operator of $g$ in~\eqref{PDSupdate} with a firmly nonexpansive denoiser $J$ represented by DnCNN. The iterative procedures of our PnP-PDS is given by
    \begin{empheq}[left=\empheqlfloor]{alignat=2}
    \label{PnP-PDSIteration}
    \begin{split}
                &\widetilde{\mathbf{x}}_{n+1}=J\left(\mathbf{x}_n-\gamma_1(\nabla f(\mathbf{x}_n)+\mathbf{L}^{*}\mathbf{y}_{n})\right),\\
    &\widetilde{\mathbf{y}}_{n+1}=\mathrm{prox}_{\gamma_2 h^{*}}\left(\mathbf{y}_{n}+\gamma_2 \mathbf{L}(2\widetilde{\mathbf{x}}_{n+1}-\mathbf{x}_{n})\right),\\
        &\mathbf{x}_{n+1}=\rho_n\widetilde{\mathbf{x}}_{n+1}+(1-\rho_n)\mathbf{x}_{n},\\
        &\mathbf{y}_{n+1}=\rho_n\widetilde{\mathbf{y}}_{n+1}+(1-\rho_n)\mathbf{y}_{n},
    \end{split}
\end{empheq}
where $\gamma_1,\,\gamma_2$ and $\{\rho_n\}$ are the \reply{parameters} and the definitions of $f$, $h$, and $\mathbf{L}$ are the same as in~\eqref{PDSprimal}. \replySec{The parameters $\gamma_1$ and $\gamma_2$ are fixed constants across the iterations.}

The following two propositions present our main theoretical results on the convergence of PnP-PDS in~\eqref{PnP-PDSIteration}. Proposition~\ref{propOfConvergence} corresponds to the case with the differentiable term $f$, while Proposition~\ref{propOfConvergenceWhenBetaIsZero} corresponds to the case without it.

\begin{proposition}
\label{propOfConvergence}
Suppose that $\beta>0$. Let $J:\mathcal{X}\to\mathcal{X}$ be a firmly nonexpansive denoiser and defined over the full domain. Then, there exists a \textit{maximally monotone} operator $A_J$ such that $J=(A_J+\mathrm{Id})^{-1}$. Assume that the solution set of the following monotone inclusion problem:
\begin{align}
\label{PnP-PDSSolutionSet}
\begin{split}
\mathrm{Find}\,&(\hat{\mathbf{x}},\,\hat{\mathbf{y}})\in\,\mathcal{X}\times\mathcal{Y}\\
&\mathrm{s.t.}
\begin{pmatrix}\mathbf{0}\\\mathbf{0}\end{pmatrix}\in
\begin{pmatrix}
\gamma_1^{-1}A_J(\hat{\mathbf{x}})+\mathbf{L}^{*}\hat{\mathbf{y}}+\nabla f(\hat{\mathbf{x}})\\
-\mathbf{L}\hat{\mathbf{x}}+\partial h^{*}(\hat{\mathbf{y}})
\end{pmatrix}
\end{split}
\end{align}
is nonempty, and suppose that the following conditions hold:
\begin{enumerate}[(i)]
\item $\dfrac{1}{\gamma_1}-\gamma_2\|\mathbf{L}\|_{\mathrm{op}}^2>\dfrac{\beta}{2}$,
\item $\rho_n\in\left(0,\,\delta\right)$,
\item $\sum_{n\in\mathbb{N}}\rho_n(\delta-\rho_n)=\infty$,
\end{enumerate}
where $\gamma_1>0$, $\gamma_2>0$, and $\delta$ is defined as follows:
\begin{align}
\delta:=2-\dfrac{\beta}{2}\left(\dfrac{1}{\gamma_1}-\gamma_2\|\mathbf{L}\|_{\mathrm{op}}^{2}\right)^{-1}.
\end{align} 
Then, the sequence $\{\mathbf{x}_n,\,\mathbf{y}_n\}_{n\in\mathbb{N}}$ generated by~\eqref{PnP-PDSIteration} weakly converges to a solution to~\eqref{PnP-PDSSolutionSet}.
\end{proposition}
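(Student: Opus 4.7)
The plan is to recast the iteration~\eqref{PnP-PDSIteration} as a relaxed Krasnosel'ski\u{\i}--Mann (KM) iteration driven by an averaged operator on the product Hilbert space $\mathcal{X}\times\mathcal{Y}$ endowed with a suitable weighted inner product, and then invoke the standard KM convergence theorem.

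First, I would use the classical fact (Minty's theorem) that a firmly nonexpansive operator defined on the entire space is the resolvent of a maximally monotone operator. Specifically, setting $A_J:=J^{-1}-\mathrm{Id}$ yields $J=(A_J+\mathrm{Id})^{-1}$ with $A_J$ maximally monotone. Substituting this identification into~\eqref{PnP-PDSIteration} with $\rho_n\equiv 1$ and writing out the defining inclusions of the resolvent and of $\mathrm{prox}_{\gamma_2 h^{*}}$, I would show that $(\hat{\mathbf{x}},\hat{\mathbf{y}})$ is a fixed point of the non-relaxed operator $T$ underlying~\eqref{PnP-PDSIteration} if and only if it solves~\eqref{PnP-PDSSolutionSet}; hence the assumed non-emptiness of the solution set translates to the existence of a fixed point of $T$.

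Next, following the V\~u--Condat analysis, I would introduce on $\mathcal{X}\times\mathcal{Y}$ the self-adjoint linear operator
\begin{align}
V=\begin{pmatrix}\gamma_1^{-1}\mathrm{Id} & -\mathbf{L}^{*}\\ -\mathbf{L} & \gamma_2^{-1}\mathrm{Id}\end{pmatrix},
\end{align}
which, via a Schur-complement argument, is strictly positive under condition~(i) because $\gamma_1^{-1}-\gamma_2\|\mathbf{L}\|_{\mathrm{op}}^{2}>\beta/2>0$. In the weighted inner product $\langle\cdot,\cdot\rangle_V$, the non-relaxed iteration can be identified with a preconditioned forward--backward splitting applied to the sum $B+M$, where $B(\mathbf{x},\mathbf{y})=(\nabla f(\mathbf{x}),\mathbf{0})$ is the cocoercive forward part and
\begin{align}
M(\mathbf{x},\mathbf{y})=\bigl(\gamma_1^{-1}A_J(\mathbf{x})+\mathbf{L}^{*}\mathbf{y},\,\partial h^{*}(\mathbf{y})-\mathbf{L}\mathbf{x}\bigr)
\end{align}
is maximally monotone as the sum of a separable maximally monotone operator and a bounded skew-adjoint coupling. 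A standard computation then shows that $T$ is $(1/\delta)$-averaged with respect to $\langle\cdot,\cdot\rangle_V$, with $\delta$ arising naturally from the cocoercivity modulus of $B$ in the $V$-metric and matching the expression in the statement.

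Finally, conditions~(ii) and~(iii) are precisely the admissibility conditions for a KM iteration driven by a $(1/\delta)$-averaged operator, so the KM theorem in the $V$-weighted topology yields weak convergence of $\{(\mathbf{x}_n,\mathbf{y}_n)\}$ to a fixed point of $T$, and hence, by the first step, to a solution of~\eqref{PnP-PDSSolutionSet}. The main obstacle will be the careful book-keeping in the $V$-metric: establishing maximal monotonicity of $M$ and tracking the cocoercivity constant of $B$ through the change of metric must be carried out without appealing to any convex-analytic potential for $A_J$, since $A_J$ is only guaranteed to be maximally monotone and need not be a subdifferential. The argument must therefore proceed purely through monotone-operator techniques, in the spirit of~\cite{Pesquet2021}.
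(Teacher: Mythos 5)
Your proposal is correct and follows essentially the same route as the paper: the same product space with the same preconditioner $P=V$, the same splitting into the cocoercive forward part $(\nabla f,\mathbf{0})$ and the maximally monotone part combining $\gamma_1^{-1}A_J$, $\partial h^{*}$, and the skew coupling, with Minty's theorem supplying $A_J$ and the fixed-point/zero-set identification supplying the link to~\eqref{PnP-PDSSolutionSet}. The only cosmetic difference is that the paper delegates the final step to Condat's relaxed forward--backward convergence lemma, whereas you re-derive the $(1/\delta)$-averagedness constant and invoke the Krasnosel'ski\u{\i}--Mann theorem directly --- the same computation, one level unpacked.
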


\begin{proposition}
\label{propOfConvergenceWhenBetaIsZero}
    Suppose that $f=0$. Let $J:\mathcal{X}\to\mathcal{X}$ be a firmly nonexpansive denoiser and defined over the full domain. Then, there exists a \textit{maximally monotone} operator $A_J$ such that $J=(A_J+\mathrm{Id})^{-1}$. Assume that the solution set of~\eqref{PnP-PDSSolutionSet} is nonempty, and the following conditions hold:
    \begin{enumerate}[(i)]
    \item $\dfrac{1}{\gamma_1}-\gamma_2\|\mathbf{L}\|_{\mathrm{op}}^2>0$,
    \item $\rho_n\in\left(0,\,2\right)$,
    \item $\sum_{n\in\mathbb{N}}\rho_n(2-\rho_n)=\infty$,
    \end{enumerate}
    where $\gamma_1>0$ and $\gamma_2>0$.
Then, the sequence $\{\mathbf{x}_n,\,\mathbf{y}_n\}_{n\in\mathbb{N}}$ generated by~\eqref{PnP-PDSIteration} weakly converges to a solution to~\eqref{PnP-PDSSolutionSet}.
\end{proposition}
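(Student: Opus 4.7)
The plan is to recast the iteration \eqref{PnP-PDSIteration} (with $f=0$) as a Krasnosel'ski{\u\i}--Mann (KM) relaxation of a firmly nonexpansive operator on the product space $\mathcal{X}\times\mathcal{Y}$ endowed with a suitable preconditioned inner product, and then invoke classical monotone operator convergence results. The hypothesis that $J$ is firmly nonexpansive and everywhere defined yields, by the same Minty-type characterization used in~\cite{Pesquet2021}, the existence of a maximally monotone operator $A_J$ with $J=(A_J+\mathrm{Id})^{-1}$, as stated. Hence the $\widetilde{\mathbf{x}}$-update with $f=0$ is equivalent to the inclusion $\gamma_1^{-1}(\mathbf{x}_n-\widetilde{\mathbf{x}}_{n+1})-\mathbf{L}^{*}\mathbf{y}_n\in\gamma_1^{-1}A_J\widetilde{\mathbf{x}}_{n+1}$, and the $\widetilde{\mathbf{y}}$-update (by the characterization of the proximity operator of $h^{*}$) to $\gamma_2^{-1}(\mathbf{y}_n-\widetilde{\mathbf{y}}_{n+1})+\mathbf{L}(2\widetilde{\mathbf{x}}_{n+1}-\mathbf{x}_n)\in\partial h^{*}(\widetilde{\mathbf{y}}_{n+1})$.

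Setting $\mathbf{z}:=(\mathbf{x},\mathbf{y})$ and introducing
\begin{equation*}
\mathbf{V}:=\begin{pmatrix}\gamma_1^{-1}\mathrm{Id} & -\mathbf{L}^{*}\\ -\mathbf{L} & \gamma_2^{-1}\mathrm{Id}\end{pmatrix},\qquad\mathcal{M}\mathbf{z}:=\begin{pmatrix}\gamma_1^{-1}A_J\mathbf{x}+\mathbf{L}^{*}\mathbf{y}\\ -\mathbf{L}\mathbf{x}+\partial h^{*}(\mathbf{y})\end{pmatrix},
\end{equation*}
a direct algebraic check shows that the two inclusions above combine into the single preconditioned resolvent step $\mathbf{V}(\mathbf{z}_n-\widetilde{\mathbf{z}}_{n+1})\in\mathcal{M}\widetilde{\mathbf{z}}_{n+1}$, i.e.\ $\widetilde{\mathbf{z}}_{n+1}=(\mathbf{V}+\mathcal{M})^{-1}\mathbf{V}\mathbf{z}_n$. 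I would then verify the structural facts needed: $\gamma_1^{-1}A_J$ and $\partial h^{*}$ are both maximally monotone, so their Cartesian product is maximally monotone; the bounded skew-symmetric part $(\mathbf{x},\mathbf{y})\mapsto(\mathbf{L}^{*}\mathbf{y},-\mathbf{L}\mathbf{x})$ is monotone and single-valued with full domain; hence by the standard sum rule $\mathcal{M}$ is maximally monotone on $\mathcal{X}\times\mathcal{Y}$. Condition (i) is equivalent (by a Schur-complement argument) to $\mathbf{V}$ being self-adjoint and strongly positive, so it defines an equivalent inner product $\langle\cdot,\cdot\rangle_{\mathbf{V}}:=\langle\mathbf{V}\cdot,\cdot\rangle$ on $\mathcal{X}\times\mathcal{Y}$.

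Equipped with $\langle\cdot,\cdot\rangle_{\mathbf{V}}$, the operator $\mathbf{V}^{-1}\mathcal{M}$ is maximally monotone, and its resolvent $T:=(\mathrm{Id}+\mathbf{V}^{-1}\mathcal{M})^{-1}=(\mathbf{V}+\mathcal{M})^{-1}\mathbf{V}$ is the forward map $\mathbf{z}_n\mapsto\widetilde{\mathbf{z}}_{n+1}$; in particular $T$ is firmly nonexpansive in $\|\cdot\|_{\mathbf{V}}$, and its fixed-point set coincides with $\{\mathbf{z}:\mathbf{0}\in\mathcal{M}\mathbf{z}\}$, which is exactly the solution set of \eqref{PnP-PDSSolutionSet} (nonempty by assumption). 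Since the last two lines of \eqref{PnP-PDSIteration} read $\mathbf{z}_{n+1}=(1-\rho_n)\mathbf{z}_n+\rho_n T\mathbf{z}_n$, we are in the setting of a KM iteration of a firmly nonexpansive operator; conditions (ii) and (iii) are precisely the hypotheses of the classical KM theorem (e.g.\ \cite[Proposition~5.16]{ConvexBook}), giving weak convergence of $\{\mathbf{z}_n\}$ to some fixed point of $T$ and therefore to a solution of \eqref{PnP-PDSSolutionSet}.

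The main technical obstacle will be the careful bookkeeping in the preconditioning step: showing that condition (i) makes $\mathbf{V}$ strongly (not merely weakly) positive on the product Hilbert space, that $\|\cdot\|_{\mathbf{V}}$ is equivalent to the canonical product norm so that weak convergence in the two metrics coincides, and that the sum-rule for maximal monotonicity applies despite $A_J$ arising from a learned denoiser rather than from a subdifferential. Once these ingredients are in place, the remainder reduces to a direct application of standard KM machinery and requires no cocoercivity, which is why the relaxation range enlarges from $(0,\delta)$ in Proposition~\ref{propOfConvergence} to the full $(0,2)$ here.
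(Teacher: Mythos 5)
Your proposal is correct and follows essentially the same route as the paper: the paper also rewrites the iteration as $\mathbf{z}_{n+1}=\rho_n(\mathrm{Id}+P^{-1}\circ T_1)^{-1}\mathbf{z}_n+(1-\rho_n)\mathbf{z}_n$ on the preconditioned space $\mathcal{Z}_P$ (your $\mathbf{V}$ is its $P$, your $\mathcal{M}$ is its $T_1$), establishes maximal monotonicity of the product-plus-skew decomposition by the same sum rule, and concludes via the relaxed proximal point convergence result of Condat \cite[Lemma 4.2]{PDS2}, which is exactly the KM theorem for the (firmly nonexpansive) resolvent that you invoke. The bookkeeping items you flag (strict positivity of $\mathbf{V}$ from (i), equivalence of weak convergence in the two norms, maximal monotonicity without $A_J$ being a subdifferential) are precisely the points the paper checks, so no gap remains.
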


\begin{proof}[Proof Sketch of Propositions 3.1 and 3.2] 
\reply{Recall that the set-valued operator $A_J=J^{-1}-\mathrm{Id}$ is maximally monotone if and only if $J$ is firmly nonexpansive and has the full domain~\cite[Proposition 23.8 (iii)]{ConvexBook}. Hence, $\gamma_1^{-1}A_J$ is also maximally monotone \cite[Proposition 20.22]{ConvexBook}. Therefore, Proposition~\ref{propOfConvergence} and Proposition~\ref{propOfConvergenceWhenBetaIsZero} are justified by~\cite[Theorem 3.1]{PDS2} and~\cite[Theorem 3.2]{PDS2}, respectively. These proofs build upon and directly apply the standard convergence analysis of the PDS algorithm. For completeness and ease of reference, we provide the detailed proofs of these propositions in the Appendix.}
\end{proof}

\begin{rem}[Connection to the convex optimization problem in~\eqref{PDSprimal}]
\label{remConnectionToMinimization}
The monotone inclusion problem in~\eqref{PnP-PDSSolutionSet} has the same form as~\eqref{PDSSolutionSet} with $\partial g=\gamma_1^{-1}A_J$. The classical monotone operator theory states that there exists $g\in\Gamma_0(\mathbb{R}^N)$ such that $\partial g=\gamma_1^{-1}A_J$ if and only if $\gamma_1^{-1}A_J$ is a \textit{maximal cyclically monotone operator}, a specific type of maximally monotone operator~\cite[Theorem 22.18]{ConvexBook}. However, the operator $\gamma_{1}^{-1}A_J$, where $J=(A_J+\mathrm{Id})^{-1}$, is not necessarily maximal cyclically monotone only because $J$ is firmly nonexpansive. Thus, the existence of $g\in\Gamma_0(\mathbb{R}^N)$ cannot always be guaranteed. Furthermore, it should be noted that $A_J$ is scaled by $\gamma_1^{-1}$ in the solution set of the monotone inclusion problem~\eqref{PnP-PDSSolutionSet}. This implies that the value of $\gamma_1$ affects the solution set, unlike the standard PDS algorithm. 
\end{rem}
% CondatのTheorem3.3と同じように、beta=0でもパラメータの不等号を等号つき不等号にできると補足する？
%\begin{rem}[Connection to the convex optimization problem in \eqref{PDSprimal}]
%\label{remConnectionToMinimization}
%\end{rem}
\subsection{Applications to Image Restoration}
\label{ApplicationToImageRestoration}
Now we move on to the applications of our general PnP-PDS to image restoration. First, we derive a special case of our PnP algorithm in~\eqref{PnP-PDSIteration}, which is set up for typical image restoration problems. Then, we demonstrate its application to specific image restoration tasks with two types of noise contamination: Gaussian and Poisson. In the following, $\mathbf{u}\in\mathbb{R}^{K}$ and $\mathbf{v}\in\mathbb{R}^{K}$ denote original and observed images, respectively. Let $\mathbf{\Phi}:\mathbb{R}^{K}\to\mathbb{R}^{K}$ be an operator representing some observation process, which may include blurring or random sampling. We assume that the pixel values of each image are normalized to the range of 0 to 1.

First, we introduce an observation model that covers many image restoration scenarios as follows:
\begin{align}
    \label{observationModelOfTypiclaIR}
    \mathbf{v}=\mathcal{N}(\mathbf{\Phi} \mathbf{u}),
\end{align}
where $\mathcal{N}:\mathbb{R}^K\to\mathbb{R}^K$ represents noise contamination (not necessarily additive). Image restoration under~\eqref{observationModelOfTypiclaIR} can be formulated as the following optimization problem:
\begin{align}
    \label{PnP-PDSCorollaryMinimization}
    \begin{split}
        \underset{\mathbf{\hat{u}}\in\mathbb{R}^K}{\operatorname{minimize}} \,\mathcal{D}(\mathbf{\Phi}\mathbf{\hat{u}})+\mathcal{R}_J(\mathbf{\hat{u}})\quad\mathrm{s.t.}\,\mathbf{\hat{u}}\in C,
    \end{split}
\end{align}
or equivalently, 
\begin{align}
    \label{PnP-PDSCorollaryMinimizationIndicator}
    \begin{split}
        \underset{\mathbf{\hat{u}}\in\mathbb{R}^K}{\operatorname{minimize}} \,\mathcal{D}(\mathbf{\Phi}\mathbf{\hat{u}})+\mathcal{R}_J(\mathbf{\hat{u}})+\iota_C(\mathbf{\hat{u}}),
    \end{split}
\end{align}
where $\mathcal{D}\in\Gamma_0(\mathbb{R}^K)$ is a proximable data-fidelity term derived from the noise model $\mathcal{N}$, $\mathcal{R}_J$ is a proximable regularization term, $C$ is a nonempty closed convex set on $\mathbb{R}^K$ to which $\hat{\mathbf{v}}$ is assumed to belong, and $\iota_C$ is the indicator function of $C$. \reply{We also assume that $\mathcal{R}_J$ models the prior distribution associated with the Gaussian denoiser $J$ with $\mathrm{prox}_{\mathcal{R}_J}=J$, since $\mathrm{prox}_{\mathcal{R}_J}$ corresponds to the MAP estimate under additive white Gaussian noise~\cite{PnP0}. Note that $\mathcal{R}_J$ is introduced for convenience as an explicit regularization term, but the existence of such a function is not guaranteed (see Remark~\ref{remConnectionToMinimization}).}

Then, we can derive a convergent PnP-PDS algorithm for this type of image restoration as a special case of our general PnP-PDS in~\eqref{PnP-PDSIteration}. The algorithm and its convergence property are summarized in the following corollary.
\begin{cor}
\label{PnP-PDSCorollary}
Let $J:\mathbb{R}^K\to\mathbb{R}^K$ be a firmly nonexpansive denoiser and defined over the full domain. Then, there exists a \textit{maximally monotone} operator $A_J$ such that $J=(A_J+\mathrm{Id})^{-1}$. Assume that the solution set of the following monotone inclusion problem:
    \begin{align}
    \label{PnP-PDSCorollarySolutionSet}
    \begin{split}
    \mathrm{Find}\,&(\hat{\mathbf{x}},\,\hat{\mathbf{y}}^{(1)},\,\hat{\mathbf{y}}^{(2)})\in\,\mathbb{R}^{3K}\\
    &\mathrm{s.t.}
    \begin{pmatrix}\mathbf{0}\\\mathbf{0}\\\mathbf{0}\end{pmatrix}\in
    \begin{pmatrix}
    \gamma_1^{-1}A_J(\hat{\mathbf{x}})+\mathbf{\Phi}^{*}\hat{\mathbf{y}}^{(1)} + \hat{\mathbf{y}}^{(2)}\\
    -\mathbf{\Phi}\hat{\mathbf{x}}+\partial \mathcal{D}^{*}\left(\hat{\mathbf{y}}^{(1)}\right)\\
    -\hat{\mathbf{x}} + \partial \iota^{*}_C\left(\hat{\mathbf{y}}^{(2)}\right)
    \end{pmatrix}
    \end{split}
    \end{align}
    is nonempty, and suppose that the following conditions hold for $\gamma_1>0$ and $\gamma_2>0$:
    \begin{align}
    \dfrac{1}{\gamma_1}-\gamma_2\left(\|\mathbf{\Phi}\|_{\mathrm{op}}^2+1\right)>0,
    \end{align}
and $\{\rho_n\}$ satisfies the condition in (ii) and (iii) of Proposition~\ref{propOfConvergenceWhenBetaIsZero}.
Then, the sequence $\{\mathbf{x}_n,\,\mathbf{y}^{(1)}_n,\,\mathbf{y}^{(2)}_n\}_{n\in\mathbb{N}}$ generated by the following algorithm:
    \begin{empheq}[left=\empheqlfloor]{alignat=2}
    \label{PnP-PDSCorollaryIteration}
    \begin{split}
    &\widetilde{\mathbf{x}}_{n+1}=J\left(\mathbf{x}_n-\gamma_1\left(\mathbf{\Phi}^{*}\mathbf{y}_{n}^{(1)}+\mathbf{y}_{n}^{(2)}\right)\right),\\
    &\widetilde{\mathbf{y}}_{n+1}^{(1)}=\mathrm{prox}_{\gamma_2 \mathcal{D}^{*}}\left(\mathbf{y}_{n}^{(1)}+\gamma_2 \mathbf{\Phi}(2\widetilde{\mathbf{x}}_{n+1}-\mathbf{x}_{n})\right),\\
    &\widetilde{\mathbf{y}}_{n+1}^{(2)}=\mathrm{prox}_{\gamma_2 \iota_C^{*}}\left(\mathbf{y}_{n}^{(2)}+\gamma_2 (2\widetilde{\mathbf{x}}_{n+1}-\mathbf{x}_{n})\right),\\
    &\mathbf{x}_{n+1}=\rho_n\widetilde{\mathbf{x}}_{n+1}+(1-\rho_n)\mathbf{x}_{n},\\
    &\mathbf{y}_{n+1}^{(1)}=\rho_n\widetilde{\mathbf{y}}_{n+1}^{(1)}+(1-\rho_n)\mathbf{y}_{n}^{(1)},\\
    &\mathbf{y}_{n+1}^{(2)}=\rho_n\widetilde{\mathbf{y}}_{n+1}^{(2)}+(1-\rho_n)\mathbf{y}_{n}^{(2)},
    \end{split}
\end{empheq}
converges to a solution to~\eqref{PnP-PDSCorollarySolutionSet}.
\end{cor}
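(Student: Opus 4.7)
The plan is to derive Corollary 3.1 as a direct specialization of Proposition 3.2 by reformulating the restoration problem (3.7) into the template (3.1) with $f=0$. The key idea is to absorb the two composite terms $\mathcal{D}(\mathbf{\Phi}\hat{\mathbf{u}})$ and $\iota_C(\hat{\mathbf{u}})$ into a single function-of-a-linear-operator term. To do this, I would work in the product space $\mathcal{Y}=\mathbb{R}^{K}\times\mathbb{R}^{K}$ and define the stacked linear operator $\mathbf{L}:\mathbb{R}^K\to\mathbb{R}^{2K}$ by $\mathbf{L}\mathbf{x}=(\mathbf{\Phi}\mathbf{x},\mathbf{x})$, together with the separable function $h(\mathbf{y}^{(1)},\mathbf{y}^{(2)}):=\mathcal{D}(\mathbf{y}^{(1)})+\iota_C(\mathbf{y}^{(2)})$. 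With $g=\mathcal{R}_J$ and $f=0$, the problem (3.7) takes exactly the form (3.1), and $\mathrm{prox}_{\mathcal{R}_J}=J$ by assumption.

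Next I would verify the three ingredients needed to invoke Proposition 3.2. First, the adjoint is $\mathbf{L}^{*}(\mathbf{y}^{(1)},\mathbf{y}^{(2)})=\mathbf{\Phi}^{*}\mathbf{y}^{(1)}+\mathbf{y}^{(2)}$, and $\|\mathbf{L}\mathbf{x}\|^2=\|\mathbf{\Phi}\mathbf{x}\|^2+\|\mathbf{x}\|^2\le(\|\mathbf{\Phi}\|_{\mathrm{op}}^2+1)\|\mathbf{x}\|^2$, so the step-size condition $\tfrac{1}{\gamma_1}-\gamma_2\|\mathbf{L}\|_{\mathrm{op}}^2>0$ of Proposition 3.2 is implied by the stated hypothesis $\tfrac{1}{\gamma_1}-\gamma_2(\|\mathbf{\Phi}\|_{\mathrm{op}}^2+1)>0$. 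Second, because $h$ is separable across the two blocks, both $\partial h^{*}$ and $\mathrm{prox}_{\gamma_2 h^{*}}$ decouple componentwise, giving $\partial h^{*}(\mathbf{y}^{(1)},\mathbf{y}^{(2)})=\partial\mathcal{D}^{*}(\mathbf{y}^{(1)})\times\partial\iota_C^{*}(\mathbf{y}^{(2)})$ and analogously for the proximity operator. Third, the conditions on $\{\rho_n\}$ are imported verbatim from Proposition 3.2.

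With these in place, the single dual update in (3.2) splits into the two coordinatewise updates for $\widetilde{\mathbf{y}}^{(1)}_{n+1}$ and $\widetilde{\mathbf{y}}^{(2)}_{n+1}$ appearing in (3.9), while the primal update becomes $\widetilde{\mathbf{x}}_{n+1}=J(\mathbf{x}_n-\gamma_1(\mathbf{\Phi}^{*}\mathbf{y}^{(1)}_n+\mathbf{y}^{(2)}_n))$, exactly matching the stated iteration. Likewise, rewriting (3.3) with the stacked $\mathbf{L}$ and decomposing $\partial h^{*}$ yields precisely the three-block monotone inclusion (3.8). Invoking Proposition 3.2 then delivers weak convergence of $\{(\mathbf{x}_n,\mathbf{y}^{(1)}_n,\mathbf{y}^{(2)}_n)\}_{n\in\mathbb{N}}$ to a solution of (3.8); since the ambient space is finite-dimensional, weak convergence coincides with convergence in norm, matching the wording of the corollary.

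The only nontrivial obstacle is bookkeeping: ensuring that the product-space construction faithfully reproduces both the algorithmic steps and the inclusion (3.8), and in particular that the operator-norm bound $\|\mathbf{L}\|_{\mathrm{op}}^2\le\|\mathbf{\Phi}\|_{\mathrm{op}}^2+1$ is sharp enough to match the stated step-size condition. Beyond this, no new analytical ingredients are required, as maximal monotonicity of $\gamma_1^{-1}A_J$ and all convergence machinery are already supplied by Proposition 3.2 and the standing assumption that $J$ is firmly nonexpansive on the full domain.
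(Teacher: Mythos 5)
Your proposal is correct and follows essentially the same route as the paper: both reduce the corollary to Proposition~\ref{propOfConvergenceWhenBetaIsZero} by taking $f=0$, stacking $\mathbf{L}=(\mathbf{\Phi},\,\mathbf{I})$, using the separable $h(\mathbf{y}^{(1)},\mathbf{y}^{(2)})=\mathcal{D}(\mathbf{y}^{(1)})+\iota_C(\mathbf{y}^{(2)})$, and bounding $\|\mathbf{L}\|_{\mathrm{op}}^2\leq\|\mathbf{\Phi}\|_{\mathrm{op}}^2+1$ (the paper cites a block-operator-norm theorem where you compute the bound directly, but the result is identical). Your additional remarks on the componentwise decoupling of $\mathrm{prox}_{\gamma_2 h^*}$ and on weak convergence coinciding with norm convergence in finite dimensions are consistent with, and slightly more explicit than, the paper's argument.
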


\reply{This is a direct consequence of Proposition~\ref{propOfConvergenceWhenBetaIsZero}.} Let us define $f(\mathbf{x}),\,h\left(\mathbf{y}^{(1)},\,\mathbf{y}^{(2)}\right)$ in~\eqref{PnP-PDSSolutionSet} as follows:
\begin{align}
f\left(\mathbf{x}\right)&:=0,\\
h(\mathbf{y}^{(1)},\,\mathbf{y}^{(2)})&:=\mathcal{D}(\mathbf{y}^{(1)})+\iota_C(\mathbf{y}^{(2)}),\\
\mathbf{L}:&=(\mathbf{\Phi},\,\mathbf{I}),
\end{align}
where $\mathbf{I}$ represents an identity matrix. Then, the monotone inclusion problem in~\eqref{PnP-PDSSolutionSet} is reduced to \eqref{PnP-PDSCorollarySolutionSet} \cite[Proposition 13.30, Proposition 16.9, Corollary 16.48]{ConvexBook}, and we obtain Algorithm \eqref{PnP-PDSCorollaryIteration} from \eqref{PnP-PDSIteration}. \reply{The condition regarding $\gamma_1$ and $\gamma_2$ is obtained from the assumption in (i) of Proposition~\ref{propOfConvergenceWhenBetaIsZero} and the following fact~\cite[Theorem 1]{blockOperatorNorm}:}
\begin{align}
\|\mathbf{L}\|_{\mathrm{op}}^2\leq \|\mathbf{\Phi}\|_{\mathrm{op}}^2+\|\mathbf{I}\|_{\mathrm{op}}^2=\|\mathbf{\Phi}\|_{\mathrm{op}}^2+1.
\end{align}

\subsubsection{Image Restoration under Gaussian Noise}
\label{GaussianApplication}
Consider the observation model in~\eqref{observationModelOfTypiclaIR} with
$\mathcal{N}(\mathbf{x}):=\mathbf{x}+\mathbf{n}$, where $\mathbf{n}\in\mathbb{R}^{K}$ is additive white Gaussian noise. Image restoration under this observation model can be formulated as the following optimization problem:
\begin{align}
\label{GaussianPDSFormulation}
\underset{\mathbf{\hat{u}}\in\mathbb{R}^K}{\operatorname{minimize}} \,\mathcal{R}_J(\mathbf{\hat{u}}) \quad\mathrm{s.t.}\,
\begin{cases}
    \|\mathbf{\Phi} \mathbf{\hat{u}}-\mathbf{v}\|_2\leq\varepsilon,\\
    \mathbf{\hat{u}}\in\left[0,\,1\right]^{K},
\end{cases}
\end{align}
where $\varepsilon>0$ is the radius of the $\ell_2$ data-fidelity constraint, which can be determined by the standard deviation of $\mathbf{n}$.

To apply the algorithm in~\eqref{PnP-PDSCorollaryIteration}, we define $\mathcal{D}(\cdot)$ and $C$ in~\eqref{PnP-PDSCorollaryMinimization} as follows:
\begin{align}
\mathcal{D}(\cdot)&:=\iota_{B_{2,\,\varepsilon}^{\mathbf{v}}}(\cdot),\\
C&:=\left[0,\,1\right]^{K},
\end{align}
where $B_{2,\,\varepsilon}^{\mathbf{v}}$ is a $\mathbf{v}$-centered $\ell_2$-norm ball with the radius $\varepsilon$. Then, the problem in~\eqref{GaussianPDSFormulation} can be expressed as the form of~\eqref{PnP-PDSCorollaryMinimization}, allowing the application of~\eqref{PnP-PDSCorollaryIteration}. If we set $\rho_n = 1$ for all $n \in \mathbb{N}$, we obtain Algorithm~\ref{GaussianPDSAlgorithm}.

\begin{rem}[Constrained formulation of data-fidelity]
In~\eqref{GaussianPDSFormulation}, data-fidelity is expressed as a hard constraint using the $\ell_2$-norm ball. The benefits of expressing data-fidelity or regularization terms as hard constraints have been illustrated in the literature~\cite{ConstraintPDS1, ConstraintPDS2, constraint0, constraint4, l0proj}. We will discuss such advantages in the specific case of~\eqref{GaussianPDSFormulation} for PnP methods in Section~\ref{ComparisonWithPnPFBS}.
\end{rem}

\begin{algorithm}[t]
  \caption{Image restoration under Gaussian noise by PnP-PDS.}
  \label{GaussianPDSAlgorithm}
  \begin{algorithmic}[1]
    \AlgInput $\mathbf{u}_{0},\,\mathbf{w}_{0}^{(1)},\,\mathbf{w}_{0}^{(2)},\,\gamma_1>0,\,\gamma_2>0,\,\varepsilon>0$
    \While {a stopping criterion is not satisfied}
      \State $\mathbf{u}_{n+1} \leftarrow J\left(\mathbf{u}_{n}-\gamma_1\left(\mathbf{\Phi}^{*} \mathbf{w}_{n}^{(1)} + \mathbf{w}_{n}^{(2)}\right)\right)$;
      \State $\overline{\mathbf{w}}_{n}^{(1)} \leftarrow \mathbf{w}_{n}^{(1)}+\gamma_2\mathbf{\Phi}(2\mathbf{u}_{n+1}-\mathbf{u}_{n})$;
    \State $\overline{\mathbf{w}}_{n}^{(2)} \leftarrow \mathbf{w}_{n}^{(2)}+\gamma_2(2\mathbf{u}_{n+1}-\mathbf{u}_{n})$;
      \State $\mathbf{w}_{n+1}^{(1)} \leftarrow \overline{\mathbf{w}}_n^{(1)} - \gamma_2 P_{B^{\mathbf{v}}_{2,\,\varepsilon}}(\frac{1}{\gamma_2}\overline{\mathbf{w}}_n^{(1)})$;
      \State $\mathbf{w}_{n+1}^{(2)} \leftarrow \overline{\mathbf{w}}_n^{(2)} - \gamma_2 P_{[0,\,1]^K}(\frac{1}{\gamma_2}\overline{\mathbf{w}}_n^{(2)})$;
    \EndWhile
    \AlgOutput $\mathbf{u}_{n}$
  \end{algorithmic}
\end{algorithm}

\subsubsection{Image Restoration under Poisson Noise}
\label{PoissonApplication}
Image restoration under Poisson noise has been studied across various domains, including medical imaging, astronomical imaging, and remote sensing. The observation process considering Poisson noise contamination can be modeled by~\eqref{observationModelOfTypiclaIR} when $\mathcal{N}(\mathbf{x}):=\mathcal{P}_\eta(\mathbf{x})$, where $\mathcal{P}_\eta$ represents the corruption by Poisson noise with the scaling coefficient $\eta$.

In this observation model, the data-fidelity is expressed as the GKL divergence (refer to \cite{2007CombettsPoisson}), and image restoration can be reduced to the following optimization problem:
\begin{align}
\label{PoissonPDSFormulation}
\underset{\mathbf{\hat{u}}\in\mathbb{R}^K}{\operatorname{minimize}} \,\lambda\mathrm{GKL}_{\mathbf{v}}(\mathbf{\Phi}\mathbf{\hat{u}})+\mathcal{R}_J(\mathbf{\hat{u}})
\quad\mathrm{s.t.}\,
    \mathbf{\hat{u}}\in\left[0,\,1\right]^{K}
\end{align}
where $\lambda>0$ is a parameter that determines the balance between data-fidelity and regularization (see~\eqref{definitionOfGKL} for the definition of $\mathrm{GKL}$).

Let us define $\mathcal{D}(\cdot)$ and $C$ in~\eqref{PnP-PDSCorollaryMinimization} as follows:
\begin{align}
\mathcal{D}(\cdot)&:=\lambda\mathrm{GKL}_{\mathbf{v}}(\cdot),\\
C&:=\left[0,\,1\right]^{K}.
\end{align}
Then, the problem in~\eqref{PoissonPDSFormulation} is reformulated as~\eqref{PnP-PDSCorollaryMinimization}, and the iterative procedure in~\eqref{PnP-PDSCorollaryIteration} becomes applicable. Setting $\rho_n = 1$ for all $n \in \mathbb{N}$, the resulting algorithm is given as Algorithm~\ref{PoissonPDSAlgorithm}.

\begin{algorithm}[t]
  \caption{Image restoration under Poisson noise by PnP-PDS.}
  \label{PoissonPDSAlgorithm}
  \begin{algorithmic}[1]
    \AlgInput $\mathbf{u}_{0},\,\mathbf{w}_{0}^{(1)},\,\mathbf{w}_{0}^{(2)},\,\gamma_1>0,\,\gamma_2>0,\,\lambda>0$
    \While {a stopping criterion is not satisfied}
      \State $\mathbf{u}_{n+1} \leftarrow J\left(\mathbf{u}_{n}-\gamma_1\left(\mathbf{\Phi}^{*} \mathbf{w}_{n}^{(1)}+\mathbf{w}_{n}^{(2)}\right)\right)$;
      \State $\overline{\mathbf{w}}_{n}^{(1)} \leftarrow \mathbf{w}_{n}^{(1)}+\gamma_2\mathbf{\Phi}(2\mathbf{u}_{n+1}-\mathbf{u}_{n})$;
      \State $\overline{\mathbf{w}}_{n}^{(2)} \leftarrow \mathbf{w}_{n}^{(2)}+\gamma_2(2\mathbf{u}_{n+1}-\mathbf{u}_{n})$;
      \State $\mathbf{w}_{n+1}^{(1)} \leftarrow \overline{\mathbf{w}}_n^{(1)} - \gamma_2 \mathrm{prox}_{\frac{\lambda}{{\gamma_2}} \mathrm{GKL}}(\frac{1}{\gamma_2}\overline{\mathbf{w}}_n^{(1)})$;
    \State $\mathbf{w}_{n+1}^{(2)} \leftarrow \overline{\mathbf{w}}_n^{(2)} - \gamma_2 {P_{[0,\,1]^K}}(\frac{1}{\gamma_2}\overline{\mathbf{w}}_n^{(2)})$;
    \EndWhile
    \AlgOutput $\mathbf{u}_{n}$
  \end{algorithmic}
\end{algorithm}

\subsection{Comparison with Prior Works}
\label{ComparisonWithOtherMethods}
\subsubsection{Convergent PnP-FBS \cite{Pesquet2021}}
\label{ComparisonWithPnPFBS}
Pesquet et al. proposed a PnP-FBS algorithm using DnCNN as a convergent PnP approach~\cite{Pesquet2021}. This method can be applied to image restoration under Gaussian noise, addressing the following optimization problem:
\begin{gather}
\label{GaussianFBSFormulation}
\underset{\mathbf{\hat{u}}\in\mathbb{R}^K}{\operatorname{minimize}} \, \frac{\lambda}{2}\|\mathbf{\Phi\hat{u}}-\mathbf{v}\|_2^2+\mathcal{R}_J(\mathbf{\hat{u}}),
\end{gather}
where the first term is an $\ell_2$ data-fidelity term, and $\lambda>0$ is a balancing parameter. 

This convergent PnP-FBS is a landmark method that first ensures the convergence of PnP-FBS with the characterization of the solution set under realistic assumptions. At the same time, it should be noted that the problem in~\eqref{GaussianPDSFormulation}, where the data-fidelity term is represented as a hard constraint, cannot be solved by PnP-FBS. 

\reply{Consequently, the proposed PnP-PDS~(Algorithm~\ref{GaussianPDSAlgorithm}) inherits the versatility of a once-trained denoiser in this setting, compared to PnP-FBS. This advantage arises from its high flexibility in parameter selection. In the case of the constrained formulation in~\eqref{GaussianPDSFormulation}, a natural choice for $\varepsilon$ is given as follows~\cite{ConstraintPDS2}:
\begin{align}
\label{idealParameterForPnPPDSGaussian}
\varepsilon = \sigma \sqrt{K},
\end{align}
where $\sigma$ denotes the standard deviation of the Gaussian noise in the observed images. We remark that the convergence of the proposed PnP-PDS is guaranteed regardless of the choice of $\varepsilon$.}

\reply{In contrast, PnP-FBS limits the flexibility of the denoiser in certain cases, due to constraints on the parameter to ensure convergence. Assume that the denoiser is trained with a noise level of $\sigma_J$, and that $\mathbf{\Phi}$ represents a circular convolution with kernel $h$. In the additive formulation in~\eqref{GaussianFBSFormulation}, the authors suggest the following value for $\lambda$~\cite{Pesquet2021}:
\begin{align}
\label{HeuristicsForLambdaInPnPFBS}
\lambda_{\mathrm{opt}} := \frac{\sigma_J}{2 \sigma \|h\|}.
\end{align}
However, $\lambda$ must also satisfy the condition to ensure the convergence of PnP-FBS:
\begin{align}
\label{ConditionForLambdaInPnPFBS}
\lambda < \frac{2}{\|\boldsymbol{\Phi}\|_{\mathrm{op}}^2}.
\end{align}
Hence, if $\lambda_{\mathrm{opt}}$ does not satisfy the constraint in~\eqref{ConditionForLambdaInPnPFBS}, the convergence of PnP-FBS cannot be ensured with its optimal parameter. This limitation was observed in our experiments (see Section~\ref{ComparisonOfParameterRobustness}).}

\reply{Another limitation of PnP-FBS is its relatively narrow applicability to problem formulations. Specifically, it does not handle additional constraints explicitly, such as the box constraint in~\eqref{GaussianPDSFormulation}, and cannot be applied to the Poisson noise setting in~\eqref{PoissonPDSFormulation}, which is formulated with the generalized Kullback–Leibler (GKL) divergence.}

\subsubsection{Convergent PnP-ADMM~\cite{ConvergentPnPADMM, IstaAndADMMPnPConvergence}}
\label{ComparisonWithPnPADMM}
Several works have proposed PnP algorithms based on ADMM (PnP-ADMM) and established their convergence guarantees. Nair et al. have proposed a convergent PnP-ADMM and provided extensive analysis on its fixed-point~\cite{IstaAndADMMPnPConvergence}. Sun et al. have proposed a convergent PnP-ADMM algorithm that handles multiple data-fidelity terms, making it practical for large scale datasets~\cite{ConvergentPnPADMM}. In both studies, the convergence properties are ensured under the firm nonexpansiveness of the denoiser. PnP-ADMM can address a wider range of tasks than PnP-FBS, including image restoration with Poisson noise.

The main advantage of the proposed PnP-PDS over PnP-ADMM is the elimination of inner iterations. To solve a subproblem in PnP-ADMM algorithms, it is often necessary to perform inner iterations, which results in increasing computational cost and unstable numerical convergence. On the other hand, the proposed PnP-PDS does not need to perform them thanks to the fact that the computations associated with the linear operator $L$ are fully decoupled from the proximity operator of $h^*$ in the algorithm.

\subsubsection{Convergent PnP-PDS\cite{PnPPDSByPesquet}}
Garcia et al. proposed a convergent PDS-based PnP algorithm for restoration of images obtained by a photonic lantern~\cite{PnPPDSByPesquet}, published contemporaneously with our conference paper~\cite{PnPPDSOurs}. They consider the observation process $\mathbf{v}=\mathbf{\Phi}\mathbf{u}+\mathbf{n}$,
where $\mathbf{v}$ is an observed image, $\mathbf{u}$ is an original image, $\mathbf{\Phi}$ is a linear degradation process, and $\mathbf{n}\in\mathbb{R}^{K}$ is an additive noise with an unknown distribution, which is assumed to satisfy $\|\mathbf{n}\|_2<\varepsilon$ for $\varepsilon>0$. Based on this observation process, they address the following optimization problem by a convergent PnP-PDS:
\begin{align}
\label{ProblemOfPnPPDSByPesquet}
\underset{\mathbf{\hat{u}}\in\mathbb{R}^K}{\operatorname{minimize}} \,\mathcal{R}_{J}(\mathbf{\hat{u}}) \quad\mathrm{s.t.}\,
    \|\mathbf{\Phi} \mathbf{\hat{u}}-\mathbf{v}\|_2\leq\varepsilon.
\end{align}

We show the relationship between this algorithm and our PnP-PDS algorithm in Fig.~\ref{flowchart}. In essence, the algorithm in~\cite{PnPPDSByPesquet} is a special case of our Algorithm~\ref{GaussianPDSAlgorithm} without the box constraint.

\reply{We note that introducing a box constraint ensures that the solution to the optimization (or monotone inclusion) problem remains within a certain convex set, which in turn promotes more stable behavior of PnP iterations for some experimental settings. The experimental results in Section~\ref{sec:experiments} illustrate this effect.}

\subsubsection{Regularization by Denoising (RED)\cite{RED,REDAnalysis}}
\label{ComparisonWithRED}
Romano et al. proposed a framework called RED, which defines an explicit objective function by incorporating the output of a denoiser. For image restoration under Gaussian noise and image restoration under Poisson noise, the optimization problems with RED are given respectively as
\begin{align}
\label{GaussianREDFormulation}
&\underset{\mathbf{\hat{u}}\in\mathbb{R}^K}{\operatorname{minimize}} \,\lambda\|\mathbf{\Phi} \mathbf{\hat{u}}-\mathbf{v}\|^2_2+\mathbf{\hat{u}}^\top(\mathbf{\hat{u}}-\mathfrak{G}(\mathbf{\hat{u}})),\\
\label{PoissonREDFormulation}
&\underset{\mathbf{\hat{u}}\in\mathbb{R}^K}{\operatorname{minimize}} \,\lambda\mathrm{GKL}_\mathbf{v}(\hat{\mathbf{u}})+\mathbf{\hat{u}}^\top(\mathbf{\hat{u}}-\mathfrak{G}(\mathbf{\hat{u}})),
\end{align}
\reply{where $\mathfrak{G}$ represents a Gaussian denoiser. In RED, the optimization algorithms are constructed by computing the gradient of the second terms in~\eqref{GaussianREDFormulation} and~\eqref{PoissonREDFormulation} using the following expression:}

\reply{
\begin{align}
    \label{REDapprox}
    \nabla(\mathbf{\hat{u}}^\top(\mathbf{\hat{u}}-\mathfrak{G}(\mathbf{\hat{u}})))=
    2(\hat{\mathbf{u}}-\mathfrak{G}(\mathbf{\hat{u}})).
\end{align}
For differentiable $\mathfrak{G}$, this equation holds if and only if $\mathfrak{G}$ is locally homogeneous and has a symmetric Jacobian~\cite{RED, REDAnalysis}. For non-differentiable $\mathfrak{G}$, a more recent study~\cite{REDPRO} demonstrates that~\eqref{REDapprox} still holds under the assumption that $\mathrm{Id} - \mathfrak{G}$ is \textit{maximally cyclically monotone}, or that $\mathfrak{G}$ is \textit{cyclically firmly nonexpansive}.  Furthermore, if $\mathfrak{G}$ is also nonexpansive in addition to either of these conditions, we can establish a convergent algorithm based on gradient computations, such as gradient descent or FBS.} 

The main advantage of using RED is the simplicity resulting from the existence of a clear convex objective function. In contrast, PnP algorithms including our PnP-PDS do not necessarily have a convex objective function. Instead, they solve a monotone inclusion problem expressed as~\eqref{PnP-PDSSolutionSet}.

However, RED also has a considerable difficulty: the assumptions for the denoiser. First, local homogeneity is an unrealistic property, especially for deep denoisers. In fact, the gradients of the regularization terms computed in the RED algorithms often contain substantial computational errors due to the lack of local homogeneity in the denoisers~\cite{REDAnalysis}. \reply{Second, \textit{cyclically firm nonexpansiveness~\cite{REDPRO}} is even stronger assumption than the firm nonexpansiveness assumed in our work. These challenges make it difficult to ensure the convergence of RED algorithms when we employ state-of-the-art denoisers, including DnCNN.}

\section{Experiments}
\label{sec:experiments}

\begin{table}[t]
    \centering
    \resizebox{\linewidth}{!}{
    \begin{tabular}{lll}
        \toprule
        \textbf{Method} & \textbf{Algorithm} & \textbf{Regularization} \\
        \midrule
        PnP-FBS & FBS & FNE\textsuperscript{*} DnCNN \\
        PnP-ADMM & ADMM & FNE\textsuperscript{*} DnCNN \\
        PnP-PDS (Unstable) & PDS & DnCNN (not FNE\textsuperscript{*}) \\
        PnP-PDS (\replySec{NoBox}) & PDS & FNE\textsuperscript{*} DnCNN \\
        TV & PDS & TV \\
        RED (Gaussian) & SD\textsuperscript{†} & FNE\textsuperscript{*} DnCNN \\
        RED (Poisson) & PDS & FNE\textsuperscript{*} DnCNN \\
        DRUNet & HQS\textsuperscript{‡} & DRUNet \\
        BPG & BPG & Bregman Score Denoiser \\
        Proposed & PDS & FNE\textsuperscript{*} DnCNN \\
        \bottomrule
        \multicolumn{3}{l}{\footnotesize{\textsuperscript{*}FNE: firmly nonexpansive \qquad \textsuperscript{†}SD: steepest descent}}\vspace{-1mm}\\ 
        \multicolumn{3}{l}{\footnotesize{\textsuperscript{‡}HQS: half-quadratic splitting}}
    \end{tabular}
    }
    \caption{Summary of Compared Methods, Their Optimization Algorithms, and Regularization.}
    \label{tab:method-summary}
\end{table}
We performed two types of experiments to demonstrate the stability, performance, and versatility of the proposed method. The first experiment involves deblurring/inpainting under Gaussian noise, and the second experiment involves deblurring/inpainting under Poisson noise. The purpose of these experiments is to confirm the following two facts:
\begin{itemize}
\setlength{\leftskip}{-10pt}
    \item The proposed method operates stably and converges in all experimental settings.
    \item In comparison to other state-of-the-art methods, the proposed method demonstrates higher restoration performance thanks to its stability.
 \end{itemize} 
 
\subsection {Experimental Setup}

\def\kernelsize{0.08}
\begin{figure*}[t]
    \centering
    \captionsetup[subfigure]{justification=centering}

    \begin{minipage}[t]{\textwidth}
        \centering
        \begin{subfigure}[b]{\kernelsize\textwidth}
            \includegraphics[width=\linewidth]{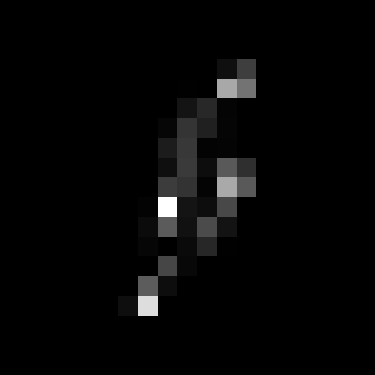}
            \caption*{(a)\\$0.2246$}
        \end{subfigure}
        \begin{subfigure}[b]{\kernelsize\textwidth}
            \includegraphics[width=\linewidth]{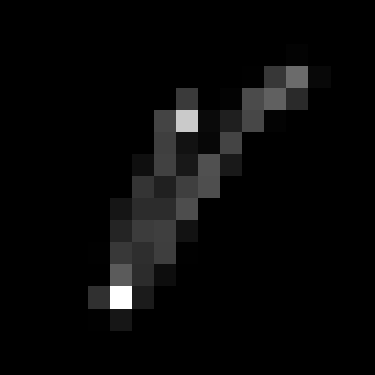}
            \caption*{(b)\\$0.1933$}
        \end{subfigure}
        \begin{subfigure}[b]{\kernelsize\textwidth}
            \includegraphics[width=\linewidth]{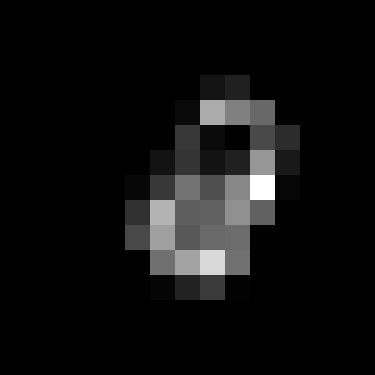}
            \caption*{(c)\\$0.1907$}
        \end{subfigure}
        \begin{subfigure}[b]{\kernelsize\textwidth}
            \includegraphics[width=\linewidth]{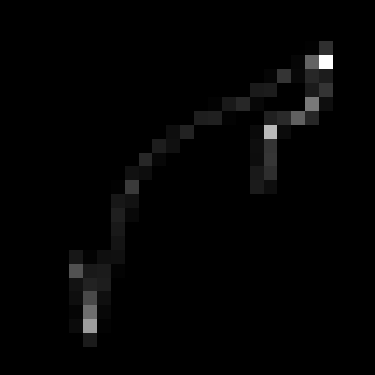}
            \caption*{(d)\\$0.1778$}
        \end{subfigure}
        \begin{subfigure}[b]{\kernelsize\textwidth}
            \includegraphics[width=\linewidth]{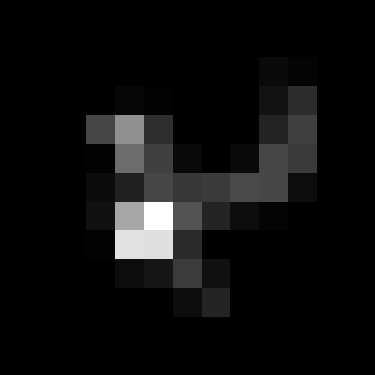}
            \caption*{(e)\\$0.2255$}
        \end{subfigure}
        \begin{subfigure}[b]{\kernelsize\textwidth}
            \includegraphics[width=\linewidth]{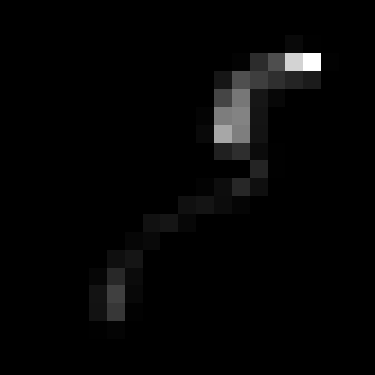}
            \caption*{(f)\\$0.2163$}
        \end{subfigure}
        \begin{subfigure}[b]{\kernelsize\textwidth}
            \includegraphics[width=\linewidth]{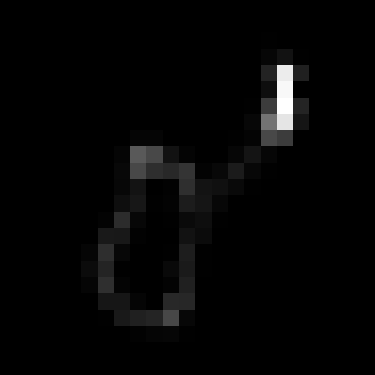}
            \caption*{(g)\\$0.1917$}
        \end{subfigure}
        \begin{subfigure}[b]{\kernelsize\textwidth}
            \includegraphics[width=\linewidth]{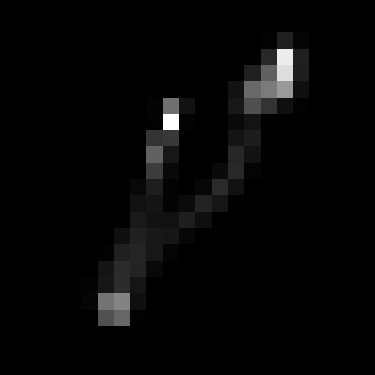}
            \caption*{(h)\\$0.1737$}
        \end{subfigure}
        \begin{subfigure}[b]{\kernelsize\textwidth}
            \includegraphics[width=\linewidth]{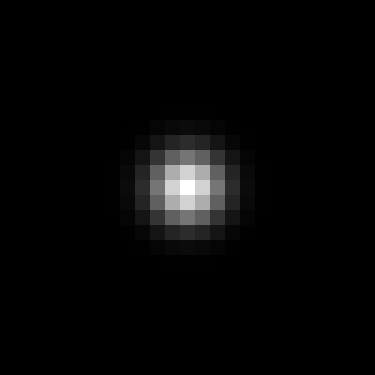}
            \caption*{(i)\\$0.1763$}
        \end{subfigure}
        \begin{subfigure}[b]{\kernelsize\textwidth}
            \includegraphics[width=\linewidth]{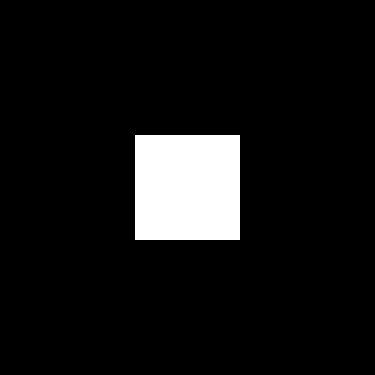}
            \caption*{(j)\\$0.1429$}
        \end{subfigure}        
    \end{minipage}

    \caption{\replySec{Blur kernels used in our experiments. (a)–(h) correspond to kernels (1)–(8) in\cite{kernel}, while (i) and (j) correspond to the ``GaussianA'' and ``Square'' setup in~\cite{Bertocchi_2020}, respectively. The Frobenius norm of each kernel is shown in the second row.}}
    \label{kernel_visualize}
\end{figure*}

In the two experiments, we considered two types of observation operators: blur and random sampling. 
\replySec{For the blur case, we used circular convolution with the ten kernels (a)–(j) from~\cite{kernel,Bertocchi_2020}, visualized in Fig.~\ref{kernel_visualize}, and we scaled them so that \(\|\mathbf{\Phi}\|_{\mathrm{op}}=1\). In Sec.~IV-B-1 we evaluated all ten kernels (a)–(j), whereas in the other sections we used kernel~(a) as a representative case.
For the random sampling case, $20\%$ of the image pixels were randomly selected and masked. The operator norm also satisfies $\|\mathbf{\Phi}\|_{\mathrm{op}}=1$ in this case.}

We evaluated the restoration performance by Peak Signal to Noise Ratio (PSNR) and Structural Similarity Index Measure (SSIM). To investigate stability, the update rate $c_n$ for $n\in\mathbb{N}\backslash\{0\}$ was defined by $c_n := {\|\mathbf{u}_n-\mathbf{u}_{n-1}\|_2} / {\|\mathbf{u}_{n-1}\|_2},$ where $\mathbf{u}_n$ is the intermediate solution at the $n$-th iteration by each algorithm.

We used color images for Gaussian noise case and grayscale images for Poisson noise case. The color images were randomly sampled from ImageNet~\cite{ImageNet}, consisting of $7$ images cropped to $128\times128$ pixels. The grayscale images were selected from 3 widely used image sets \cite{Pesquet2021}, all with a size of $256\times256$ pixels.

\newcommand\bestfbs[1]{\mathbf{#1}}
\begin{table*}[t]
\centering
\setlength{\tabcolsep}{8pt}
\resizebox{\linewidth}{!}{%
\replySec{
\begin{tabular}{lcccccccccccc} \toprule
Noise level  & Method & (a) & (b) & (c) & (d) & (e) & (f) & (g) & (h) & (i) & (j) & Average \\ \midrule
\multirow{2}{*}{$\sigma=0.0025$} 
 & PnP-FBS & \replyThird{$31.28^*$} & \replyThird{$30.42^*$} & \replyThird{$30.28^*$} & \replyThird{$30.11^*$} & \replyThird{$31.76^*$} & \replyThird{$- -$} & \replyThird{$30.14^*$}  & \replyThird{$30.02^*$}  & \replyThird{$28.15^*$}  & \replyThird{$27.75^*$}  & $- -$ \\
 & Proposed       & $\bestfbs{37.27}$ & $\bestfbs{36.63}$ & $\bestfbs{35.68}$ & $\bestfbs{36.64}$ & $\bestfbs{36.69}$ & $\bestfbs{37.49}$ & $\bestfbs{35.43}$ & $\bestfbs{35.58}$ & $\bestfbs{29.70}$ & $\bestfbs{30.85}$ & $\bestfbs{35.20}$ \\ \midrule
\multirow{2}{*}{$\sigma=0.005$} 
 &PnP-FBS & \replyThird{$31.16^*$}  & \replyThird{$30.35^*$}  & \replyThird{$30.22^*$}  & \replyThird{$30.04^*$}  & \replyThird{$31.67^*$}  & $- -$ & \replyThird{$30.07^*$}  & \replyThird{$29.95^*$}  & \replyThird{$28.12^*$}  & \replyThird{$27.71^*$}  & $- -$ \\
 & Proposed       & $\bestfbs{34.04}$ & $\bestfbs{33.46}$ & $\bestfbs{32.95}$ & $\bestfbs{33.37}$ & $\bestfbs{34.01}$ & $\bestfbs{34.46}$ & $\bestfbs{32.71}$ & $\bestfbs{32.70}$ & $\bestfbs{28.87}$ & $\bestfbs{29.33}$ & $\bestfbs{32.59}$\\ \midrule
\multirow{2}{*}{$\sigma=0.01$}  
 & PnP-FBS & $30.58$ & \replyThird{$29.98^*$}  & \replyThird{$29.95^*$}  & \replyThird{$29.69^*$}  & $31.24$ & $31.26$ & \replyThird{$29.76^*$}  & \replyThird{$29.60^*$}  & \replyThird{$28.00^*$}  & \replyThird{$27.56^*$}  & \replyThird{$29.76$} \\
 & Proposed       & $\bestfbs{30.97}$ & $\bestfbs{30.52}$ & $\bestfbs{30.50}$ & $\bestfbs{30.37}$ & $\bestfbs{31.55}$ & $\bestfbs{31.74}$ & $\bestfbs{30.20}$ & $\bestfbs{30.09}$ & $\bestfbs{28.16}$ & $\bestfbs{27.85}$ & $\bestfbs{30.20}$ \\ \midrule
\multirow{2}{*}{$\sigma=0.02$}  
 & PnP-FBS & $\bestfbs{28.26}$ & $\bestfbs{27.92}$ & $\bestfbs{28.35}$ & $\bestfbs{27.66}$ & $29.34$ & $29.27$ & $\bestfbs{28.18}$ & $\bestfbs{27.94}$ & $\bestfbs{27.43}$ & $\bestfbs{26.80}$ & $\bestfbs{28.12}$\\
 & Proposed       & $28.15$ & $27.70$ & $28.14$ & $27.41$ & $\bestfbs{29.36}$ & $\bestfbs{29.28}$ & $28.15$ & $27.73$ & $27.07$ & $26.39$ & $27.94$\\ \midrule
\multirow{2}{*}{$\sigma=0.04$}  
 &PnP-FBS & $25.55$ & $24.96$ & $25.99$ & $24.34$ & $26.98$ & $26.43$ & $25.79$ & $25.09$ & $\bestfbs{26.41}$ & $25.18$ & $25.67$\\
 & Proposed       & $\bestfbs{26.00}$ & $\bestfbs{25.73}$ & $\bestfbs{26.29}$ & $\bestfbs{25.58}$ & $\bestfbs{27.05}$ & $\bestfbs{26.80}$ & $\bestfbs{26.13}$ & $\bestfbs{25.94}$ & $26.18$ & $\bestfbs{25.42}$ & $\bestfbs{26.11}$\\ \bottomrule
\multicolumn{13}{l}{\small{$^{*}$ \replySec{\replyThird{As the algorithm did not converge for $\lambda$ values in the range $0.8\lambda_\mathrm{opt}$ to $1.25\lambda_\mathrm{opt}$, we present the results for $\lambda=1.99$.}}}}

\end{tabular}
}
}
\caption{\replySec{PSNR [dB] Values for 10 Blur Kernels Across Different Noise Levels. The Symbol ``$- -$'' Denotes That the Algorithm Diverged and PSNR Values Could not be Obtained. The Better Results are Highlighted in Bold.}}
\label{tab:PSNR_blurKernels}
\end{table*}

In each experiment, we selected the appropriate methods from PnP-FBS \cite{Pesquet2021}, PnP-ADMM \cite{ConvergentPnPADMM}, PnP-PDS (Unstable) \cite{PnPPDS}, PnP-PDS (\replySec{NoBox})~\cite{PnPPDSByPesquet}, TV \cite{constraint0}, RED \cite{RED}, \reply{DRUNet~\cite{DRUNet}}%, \replyThird{DRUNet (8 iter)~\cite{DRUNet}}
\reply{, and Bregman Proximal Gradient (BPG)~\cite{BregmanProximalGradient}}, and compared them with the proposed method. \reply{The optimization algorithms and regularization for each method are summarized in Table~\ref{tab:method-summary}. Details of PnP-FBS, PnP-ADMM, and RED are given in Section \ref{ComparisonWithOtherMethods}. \replySec{PnP-PDS (NoBox) corresponds to the proposed method without the box constraint, with all other settings aligned. We adopted the stopping criterion proposed by the authors for BPG. For the other methods, a fixed iteration count was used. Specifically, in the Poisson noise case with $\eta\in\{1,\,2,\,10\}$, the number of iterations was fixed to $4800$ for deblurring and $12000$ for inpainting. In the other cases, the number of iterations was fixed at $1200$ for the blur case and $3000$ for the random sampling case.}} \replySec{For DRUNet, we employed the half-quadratic splitting algorithm, and used the same fixed iteration counts as the other methods.} \replyThird{In addition, we also evaluate DRUNet (8 iter), where DRUNet is executed with 8 iterations of the HQS algorithm, to examine its performance in the low-iteration regime commonly used in practice.}

\replySec{For all settings, we used the DnCNN~\cite{Pesquet2021} trained with the noise level $\sigma_J$, for the regularization in PnP-FBS, PnP-ADMM, PnP-PDS (\replySec{NoBox}), RED, and the proposed method. This DnCNN is learned as described in Section~\ref{firmlyNonexpansiveDenoiser} to promote its firm nonexpansiveness. For a fair comparison with PnP-FBS, whose the optimal parameter depends on $\sigma_J$, we adopted the model distributed by Pesquet et al., trained with $\sigma_J=0.01$ in Section~\ref{GaussianExperiment}, and used the same setting in Section~\ref{PoissonExperiment}. In Section~\ref{PriorExperiment}, we investigate the influence of $\sigma_J$ by examining two cases: (i) $\sigma_J$ fixed to one of $\{0.0075,\,0.01,\,0.05,\,0.1\}$ and (ii) $\sigma_J$ randomly sampled from a uniform distribution over $[0,\,0.1]$ for each training sample.}

Throughout our experiments, the \reply{parameters} $\gamma_1$ and $\gamma_2$ in Algorithm~\ref{GaussianPDSAlgorithm} and~Algorithm~\ref{PoissonPDSAlgorithm} were set to $0.5$ and $0.99$, respectively, satisfying the condition in Proposition~\ref{propOfConvergenceWhenBetaIsZero}. All of our experiments were conducted on a Windows 11 PC, equipped with an Intel Core i9 3.70GHz processor and 32.0GB of RAM. 

\subsection {Deblurring/Inpainting under Gaussian Noise}
\label{GaussianExperiment}
\reply{
We performed experiments on deblurring/inpainting under Gaussian noise. Here, we investigate the versatility of a once-trained denoiser in the proposed method, and compare its performance with existing methods.
}

\subsubsection{\reply{Versatility of a Once-Trained Denoiser}}
\label{ComparisonOfParameterRobustness}
\replySec{We investigate the versatility of the once-trained denoiser with the fixed value of $\sigma_J$ in different algorithms. Since the theoretical convergence guarantees in this work are built upon PnP-FBS, let us confirm that the versatility is enhanced by extending PnP-FBS to PnP-PDS.}

\reply{As discussed in Section~\ref{ComparisonWithPnPFBS}, the proposed method handles $\ell_2$ data-fidelity constraint, whereas PnP-FBS handles $\ell_2$ data-fidelity term. Following~\eqref{idealParameterForPnPPDSGaussian}, we set $\varepsilon$ in~\eqref{GaussianPDSFormulation} as} \replySec{$\varepsilon = \alpha \varepsilon_\mathrm{opt}$ where $\varepsilon_\mathrm{opt}:=\sigma \sqrt{K}$ and evaluated the performance over $\alpha \in [0.8,\, 1.25]$. For PnP-FBS, we followed the heuristic in~\eqref{HeuristicsForLambdaInPnPFBS} and set $\lambda = \zeta \lambda_{\mathrm{opt}}$ where $\lambda_\mathrm{opt}:=\sigma_J/(2\sigma\|h\|)$ and $\|h\|$ is the Frobenius norm of the blur kernel $h$. We evaluated the performance over $\zeta \in [0.8,\, 1.25]$.} \replyThird{When the algorithm diverged within this range, we instead used $\lambda = 1.99$.}

\replySec{Table~\ref{tab:PSNR_blurKernels} shows the variation in restoration performance across different noise levels and blur kernels. At low noise levels and for kernels with small Frobenius norms, PnP-FBS diverges with $\lambda$ close to $\lambda_\mathrm{opt}$ as discussed in Section~\ref{ComparisonWithPnPFBS}. For example, in the case of $\sigma = 0.01$ and kernel~(b), we obtain $\lambda_{\mathrm{opt}} \approx 2.587$ because $\sigma_J = 0.01$ and $2\sigma\|h\| \approx 0.0039$. However, the convergence condition~\eqref{ConditionForLambdaInPnPFBS} requires $\lambda < 2$\replyThird{, forcing the use of suboptimal parameter choices}. This discrepancy stems from the dependence of $\lambda_\mathrm{opt}$ on $\sigma_J$.}

\replySec{In contrast, the proposed method exhibits reasonable performance across all noise levels. This consistency demonstrates the high versatility of the given denoiser, thanks to the decoupling of $\varepsilon_\mathrm{opt}$ and $\sigma_J$ (see also the discussion in Section~\ref{ComparisonWithPnPFBS}). Moreover, we note that $\varepsilon_\mathrm{opt}$ is independent of the blur kernels, which enables stable performance across all kernels under a consistent parameter choice.}

\subsubsection{\reply{Restoration Performance}}
\begin{table}[t]
\centering
\setlength{\tabcolsep}{9pt}
\resizebox{\linewidth}{!}{%
\begin{tabular}{lcccccc} \toprule
    \reply{Noise level $\sigma$} & \reply{$0.0025$} & \reply{$0.005$} & \reply{$0.01$} & \reply{$0.02$} & \reply{$0.04$}  \\ \midrule
   \reply{Deblurring} & \reply{$0.82$} & \reply{$0.86$} & \reply{$0.92$} & \reply{$0.96$} & \reply{$1.00$}  \\
   \reply{Inpainting} & \reply{$0.90$} & \reply{$0.82$} & \reply{$0.82$} & \reply{$1.00$} & \reply{$1.00$}\\
   \bottomrule
   
\end{tabular}
\caption{\reply{The Values of $\alpha$ Used in Each Setting for the Proposed Method.}}
\label{GaussianParameterTable}
}
\end{table}

\begin{table*}[t]
\centering
\setlength{\tabcolsep}{8pt}
\resizebox{\textwidth}{!}{%
\begin{tabular}{lcccccccccc} \toprule
    & \multicolumn{5}{c}{Deblurring} & \multicolumn{5}{c}{Inpainting} \\ \cmidrule(lr){2-6}\cmidrule(lr){7-11}
   Noise level $\sigma$ & $0.0025$ & $0.005$ & $0.01$ & $0.02$ & $0.04$ & $0.0025$ & $0.005$ & $0.01$ & $0.02$ & $0.04$ \\ \midrule
   PnP-FBS~\cite{Pesquet2021} & \reply{$31.28$} & \reply{$31.16$} & \reply{$30.59$} & \reply{$28.19$} & \reply{$23.95$} & \reply{$- -$} & \reply{$- -$} & \reply{$- -$} & \reply{$- -$} & \reply{$- -$} \\
   PnP-PDS (Unstable)~\cite{PnPPDS} & $23.79$ & $23.67$ & $23.78$ & $24.60$ & $23.35$ & $33.65$ & $33.36$ & $32.79$ & $31.51$ & $29.58$ \\
   PnP-PDS (\replySec{NoBox})~\cite{PnPPDSByPesquet} & $\underline{37.23}$ & $34.03$ & $31.00$ & $28.24$ & $\mathbf{26.00}$ & $- -$ & $- -$ & $- -$ & $- -$ & $- -$ \\
   TV~\cite{constraint0} & $31.97$ & $30.40$ & $28.78$ & $27.10$ & $25.38$ & $\underline{34.19}$ & $\underline{33.90}$ & $\underline{33.22}$ & $\underline{31.90}$ & $\underline{29.85}$ \\
   RED~\cite{RED} & $34.11$ & $33.38$ & $30.89$ & $\mathbf{28.30}$ & $24.87$ & $- -$ & $- -$ & $- -$ & $- -$ & $- -$ \\
   \reply{DRUNet~\cite{DRUNet}} & \replySec{$32.60$} & \replySec{$31.80$} & \replySec{$28.54$} & \replySec{$- -$} & \replySec{$- -$} & \replySec{$33.21$} & \replySec{$32.68$} & \replySec{$31.36$} & \replySec{$26.71$} & \replySec{$- -$}\\
   \replyThird{DRUNet (8 iter)~\cite{DRUNet}} & \replyThird{$35.33$} & \replyThird{$\mathbf{34.50}$} & \replyThird{$\mathbf{31.74}$} & \replyThird{$21.65$} & \replyThird{$11.38$} & \replyThird{$13.84$} & \replyThird{$13.84$} & \replyThird{$13.82$} & \replyThird{$13.77$} & \replyThird{$13.57$}\\
   PnP-PDS (Proposed) & $\mathbf{37.24}$ & $\underline{34.04}$ & $\underline{31.01}$ & $\underline{28.26}$ & $\mathbf{26.00}$ & $\mathbf{35.61}$ & $\mathbf{35.25}$ & $\mathbf{34.43}$ & $\mathbf{33.04}$ & $\mathbf{31.00}$ \\
   \bottomrule
  %\multicolumn{11}{l}{\small{$^{*}$ \replySec{When the number of iterations is extremely small, higher results are obtained.}}}
   \end{tabular}
}
\caption{The Average PSNR [dB] Values for Deblurring and Inpainting on the 7 Images From ImageNet. The Best Value is Highlighted in Bold and the Second Best Value is Underlined.}
\label{GaussianResultTablePerformance}
\end{table*}

\begin{figure}[t]
    \centering
    \includegraphics[keepaspectratio, width=\linewidth]{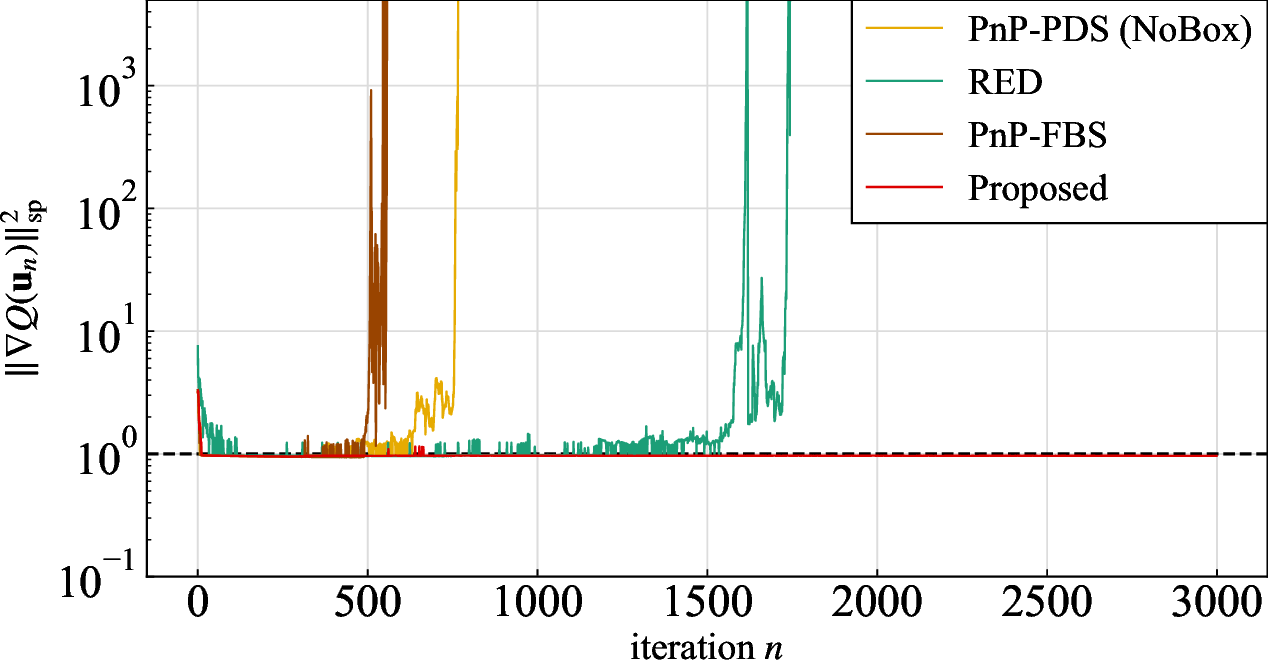}
    \caption{The evolution of $\|\nabla Q(\mathbf{u}_n)\|_{\mathrm{sp}}^2$ of PnP-PDS (NoBox), PnP-FBS, RED, and the proposed method for inpainting at $\sigma=0.01$. The values in each plot are computed as the mean over 7 images.}
    \label{GaussianBoxConstraintStudy}
\end{figure}

\newcommand{\myimgDiv}[4]{%
  \begin{subfigure}[t]{#4\linewidth}
    \centering
    \includegraphics[width=\linewidth]{#1}
    \vspace{-5mm}
    \caption*{\replySec{#3}}%
  \end{subfigure}%
}
\def\divimgsize{0.113}
\begin{figure*}[t]
    \centering
    \captionsetup[subfigure]{justification=centering}
    \begin{minipage}[t]{\linewidth}
        \begin{minipage}[c]{0.02\textwidth}
            \raggedleft
            \rotatebox{90}{}
        \end{minipage}
        \begin{minipage}[c]{0.02\textwidth}
            \raggedleft
            \rotatebox{90}{}
        \end{minipage}
        \replySec{
        \makebox[\divimgsize\textwidth]{$\mathbf{u}_1$}
        \makebox[\divimgsize\textwidth]{$\mathbf{u}_{753}$}
        \makebox[\divimgsize\textwidth]{$\mathbf{u}_{755}$}
        \makebox[\divimgsize\textwidth]{$\mathbf{u}_{757}$}
        \makebox[\divimgsize\textwidth]{$\mathbf{u}_{759}$}
        \makebox[\divimgsize\textwidth]{$\mathbf{u}_{761}$}
        \makebox[\divimgsize\textwidth]{$\mathbf{u}_{1001}$}
        \makebox[\divimgsize\textwidth]{$\mathbf{u}_{3000}$}}
    \end{minipage}
    \begin{minipage}[t]{\linewidth}
        \vspace{-1.8mm}
        \replySec{
            \begin{minipage}[c]{1em}
                \rotatebox{90}{\hspace{5.7em}PnP-PDS}
            \end{minipage}
            \begin{minipage}[c]{1em}
                \rotatebox{90}{\hspace{5.7em}(NoBox)}
            \end{minipage}
        }
        \myimgDiv{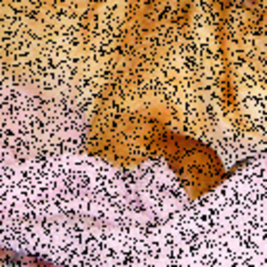}{$\mathbf{u}_1$}{$3.285$}{\divimgsize}
        \myimgDiv{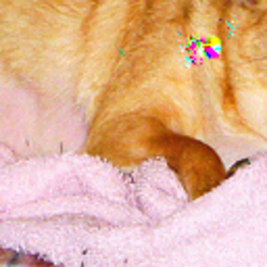}{$\mathbf{u}_{753}$}{$2492.4$}{\divimgsize}
        \myimgDiv{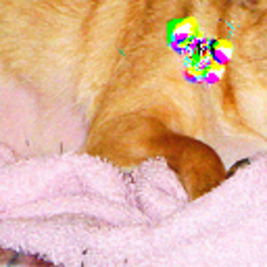}{$\mathbf{u}_{755}$}{$- -$}{\divimgsize}
        \myimgDiv{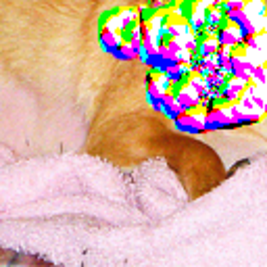}{$\mathbf{u}_{757}$}{$- -$}{\divimgsize}
        \myimgDiv{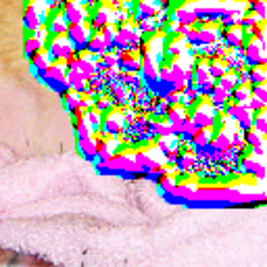}{$\mathbf{u}_{759}$}{$- -$}{\divimgsize}
        \myimgDiv{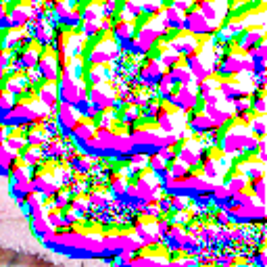}{$\mathbf{u}_{761}$}{$- -$}{\divimgsize}
        \myimgDiv{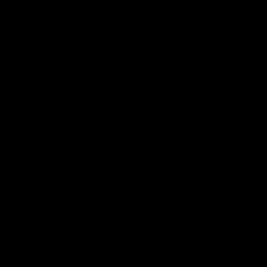}{$\mathbf{u}_{1001}$}{$- -$}{\divimgsize}
        \myimgDiv{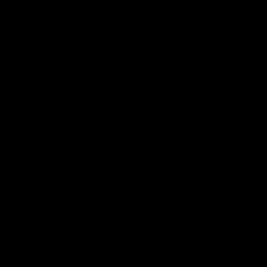}{$\mathbf{u}_{3000}$}{$- -$}{\divimgsize}
    \end{minipage}
    \begin{minipage}[t]{\linewidth}
       \vspace{-2em}
       \replySec{
        \begin{minipage}[c]{1em}
            \rotatebox{90}{}
        \end{minipage}
        \begin{minipage}[c]{1em}
            \rotatebox{90}{\hspace{5.7em}Proposed}
        \end{minipage}
        }
        \myimgDiv{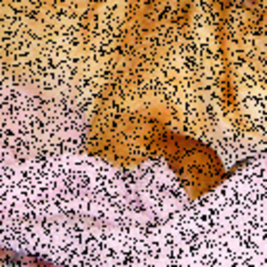}{$\mathbf{u}_1$}{$3.327$}{\divimgsize}
        \myimgDiv{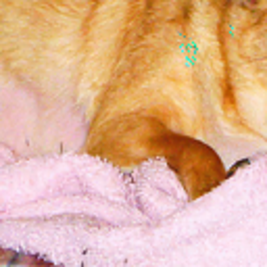}{$\mathbf{u}_{753}$}{$1.043$}{\divimgsize}
        \myimgDiv{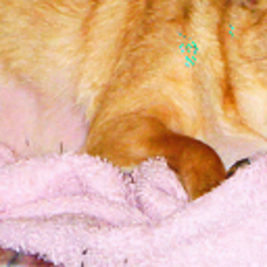}{$\mathbf{u}_{755}$}{$1.063$}{\divimgsize}
        \myimgDiv{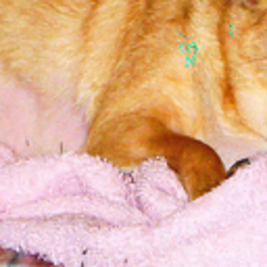}{$\mathbf{u}_{757}$}{$1.039$}{\divimgsize}
        \myimgDiv{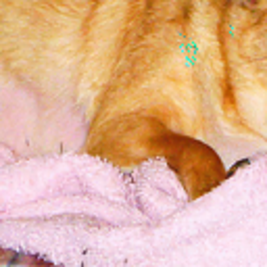}{$\mathbf{u}_{759}$}{$1.068$}{\divimgsize}
        \myimgDiv{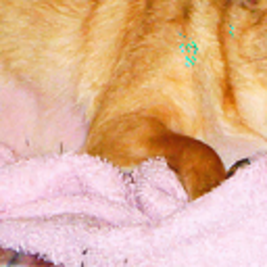}{$\mathbf{u}_{761}$}{$1.036$}{\divimgsize}
        \myimgDiv{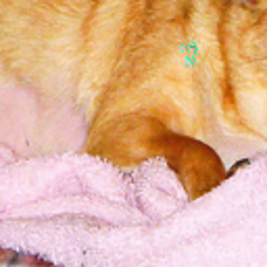}{$\mathbf{u}_{1001}$}{$1.036$}{\divimgsize}
        \myimgDiv{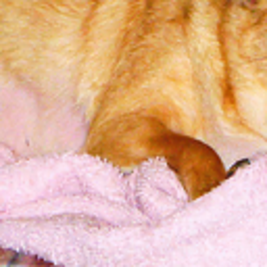}{$\mathbf{u}_{3000}$}{$1.035$}{\divimgsize}
       \vspace{-3em}
    \end{minipage}

    \caption{\replySec{Visualization of $\mathbf{u}_n$ of PnP-PDS (NoBox) and the proposed method, along with the corresponding values of $\|\nabla Q(\mathbf{u}_n)\|_\mathrm{sp}^2$ for each $\mathbf{u}_n$. For visualization, each pixel value is clipped to the range $[0,1]$. The symbol ``$- -$'' indicates that the value of $\|\nabla Q(\mathbf{u}_n)\|_\mathrm{sp}^2$ could not be computed. Since $\mathbf{u}_{1001}$ and $\mathbf{u}_{3000}$ of PnP-PDS (NoBox) contained NaN values in Python, they could not be visualized.}}
    \label{DivergenceVisualize}
\end{figure*}

\begin{figure*}[t]
    \centering
    \includegraphics[keepaspectratio, width=\linewidth]{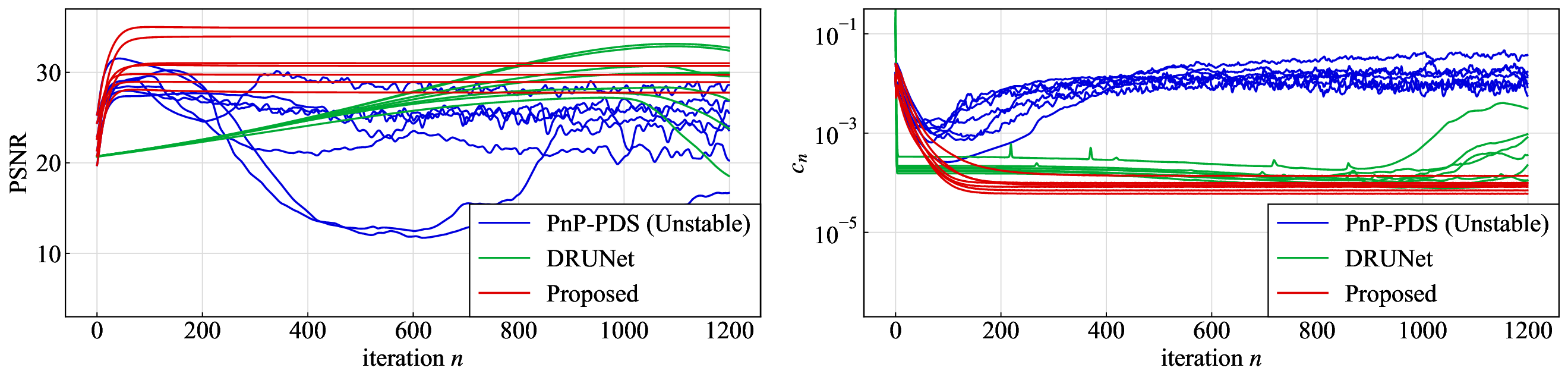}
    \caption{\vspace{-1mm}The evolution of PSNR [dB] and $c_n$ of PnP-PDS (Unstable), \replyThird{DRUNet, and }the proposed method over iterations for deblurring at $\sigma=0.01$. Each plot in the graph corresponds to each image extracted from ImageNet~\cite{ImageNet}.}
    \label{GaussianResultGraph}
\end{figure*}

\newcommand{\myimg}[4]{
    \begin{minipage}{#4\textwidth}
        \centering
        \includegraphics[keepaspectratio, width=\textwidth]{#1}
        \subcaption*{\footnotesize #3\\#2}
    \end{minipage}
    \hspace{-3.5mm}
}

\begin{figure*}[t]
    \captionsetup[subfigure]{justification=centering}
        \def\imgWidthForGaussian{0.11}
        \resizebox{\textwidth}{!}{%
        \myimg{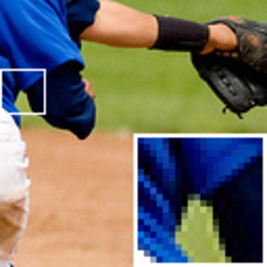}{Ground truth\vspace{3mm}}{(PSNR,\,SSIM)}{\imgWidthForGaussian}
        \myimg{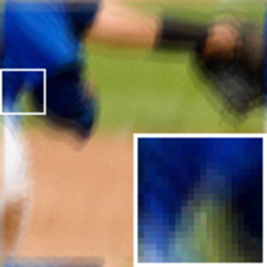}{Observation\vspace{3mm}}{$(21.29,\,0.6381)$}{\imgWidthForGaussian}
        \myimg{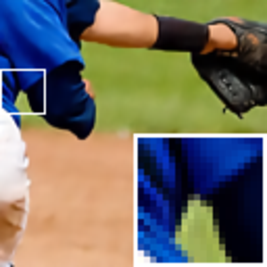}{PnP-FBS\vspace{3mm}}{$(33.47,\,0.9273)$}{\imgWidthForGaussian}
        \myimg{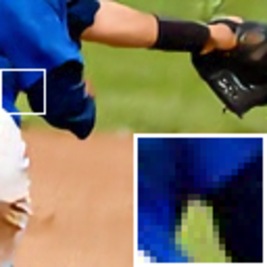}{PnP-PDS\\(Unstable)}{$(27.55,\,0.8465)$}{\imgWidthForGaussian}
        \myimg{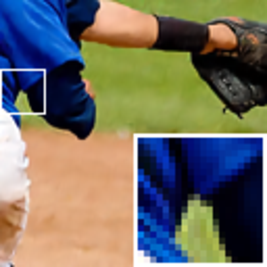}{PnP-PDS\\ (\replySec{NoBox})}{$(\underline{33.93},\,\underline{0.9313})$}{\imgWidthForGaussian}
        \myimg{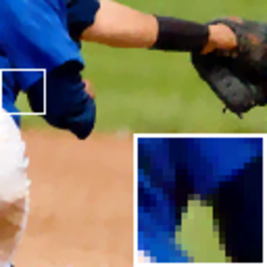}{TV\vspace{3mm}}{$(30.53,\,0.8875)$}{\imgWidthForGaussian}
        \myimg{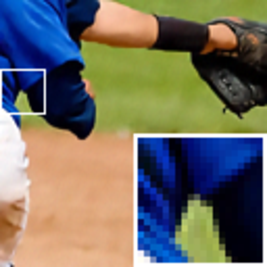}{RED\vspace{3mm}}{$(33.75,\,0.9261)$}{\imgWidthForGaussian}
        \myimg{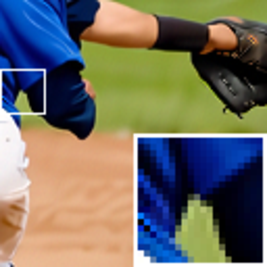}{\reply{DRUNet}\vspace{3mm}}{\reply{$(33.75,\,0.9261)$}}{\imgWidthForGaussian}
        \myimg{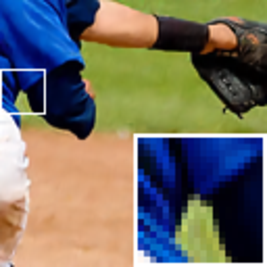}{Proposed\vspace{3mm}}{$(\mathbf{33.97},\,\mathbf{0.9317})$}{\imgWidthForGaussian}
        }
    \caption{Visual results of image restoration on ImageNet~\cite{ImageNet} for deblurring at $\sigma = 0.01$. The best results are indicated in bold, and the second-best results are underlined.}
    \label{GaussianResultVisual}
\end{figure*}

We compare the performance of the proposed method with other methods, namely PnP-FBS\cite{Pesquet2021}, PnP-PDS (Unstable)\cite{PnPPDS}, PnP-PDS (\replySec{NoBox})~\cite{PnPPDSByPesquet}, TV\cite{constraint0}, RED\cite{RED}, and DRUNet~\cite{DRUNet}. For constrained formulations such as PnP-PDS (Unstable), PnP-PDS (\replySec{NoBox}), TV, and the proposed method, we set $\varepsilon=\alpha\varepsilon_\mathrm{opt}$ in the same way as in the previous section. \reply{In particular, we provide the values of $\alpha$ for the proposed method in Table~\ref{GaussianParameterTable}. For existing methods, we adopted the recommended values if parameter settings were suggested by the authors. Otherwise, we performed a linear search over the balancing parameter.
}

Table~\ref{GaussianResultTablePerformance} summarizes the average PSNR values over seven images, obtained with the optimal parameters for each method. The proposed method demonstrates superior performance compared to PnP-FBS, even though both methods employ the same denoiser for regularization. \reply{This is attributed to the difference in the versatility of the denoiser, as mentioned in Section \ref{ComparisonOfParameterRobustness}.}
\replyThird{Moreover, we observe that DRUNet (8 iter) achieves higher PSNR values than the proposed method in deblurring at $\sigma=0.005$ and $\sigma=0.01$.} However, the proposed method outperforms DRUNet for all cases in larger iterations because of the difference in stability of the algorithms\replyThird{, which we further examine later with their convergence curves}. This instability can be attributed to the lack of a theoretical convergence guarantee, and this observation is consistent with the literature, for instance in~\cite{Terris_2024_CVPR}.

In Table~\ref{GaussianResultTablePerformance}, we also note that PnP-FBS, PnP-PDS (\replySec{NoBox}), and RED fail to converge, and thus PSNR values cannot be obtained for the inpainting. \reply{To clarify this point, recall that the denoiser was trained with the penalty in~\eqref{PenaltyOfDNCNN}, which was designed to promote nonexpansiveness for low-noise images sampled according to~\eqref{Qselection}. During the PnP iterations, however, the inputs to the denoiser can become highly noisy and may have characteristics that differ from those of the training images. In such cases, the denoiser may no longer maintain its firm nonexpansiveness, potentially amplifying instability through successive iterations.}

For numerical verification, we present the evolution of $\|\nabla Q(\mathbf{u}_n)\|_{\mathrm{sp}}^2$ at each iteration in Fig.~\ref{GaussianBoxConstraintStudy}. Although PnP-PDS (NoBox), PnP-FBS, RED, and the proposed method employ the same denoiser, the values of $\|\nabla Q(\mathbf{u}_n)\|_{\mathrm{sp}}^2$ in PnP-PDS (NoBox), PnP-FBS, and RED exhibit unstable behavior and tend to diverge. In contrast, the proposed method consistently maintains values of $\|\nabla Q(\mathbf{u}_n)\|_{\mathrm{sp}}^2$ that satisfy the condition given in~\eqref{NonexpansivenessAndJacobian}. \replySec{In Fig.~\ref{DivergenceVisualize}, we further compare the visualization of $\mathbf{u}_n$ obtained by PnP-PDS (NoBox) in a diverging case with that of the proposed method. For PnP-PDS (NoBox), erratic regions appear in the image, and the surrounding areas are affected in the proceeding iterations, leading to divergence of $\mathbf{u}_n$. Although similar irregular regions are observed at $\mathbf{u}_{753}$ in the proposed method, $\mathbf{u}_n$ remains stable and those regions are effectively removed in the subsequent iterations.}

\replyThird{To study the impact of enforcing firm nonexpansiveness on the denoiser, we compare the stability of the proposed method with that of PnP-PDS (Unstable) and DRUNet, whose denoisers are not guaranteed to be firmly nonexpansive.} Fig.~\ref{GaussianResultGraph} shows the evolution of PSNR and the update rate $c_n$ for the deblurring task with $\sigma = 0.01$. \replyThird{PnP-PDS (Unstable) exhibits unstable behavior throughout the iterations, and DRUNet shows similar behavior, becoming unstable in the later iterations.} In contrast, the proposed method converges stably, with $c_n$ decreasing to approximately $10^{-5}$.

In Fig. \ref{GaussianResultVisual}, we provide the visual results for deblurring at $\sigma = 0.01$, accompanied by PSNR (left) and SSIM (right). The images obtained by PnP-FBS and TV are overly-smoothed, resulting in a loss of fine detail. \reply{For the PnP methods without theoretical convergence guarantees, namely PnP-PDS (Unstable) and DRUNet, strong artifacts are generated because of the unstable behavior of the algorithm (see for example the area above the glove and the texture on the ground in the image, respectively).} While RED achieves high PSNR and SSIM, there are some areas where the brightness appears unnaturally altered compared to the original image. The proposed method achieves the highest PSNR and SSIM for this image, presenting the most natural appearance.

\subsection {Deblurring/Inpainting under Poisson Noise}
\label{PoissonExperiment}

\begin{table*}[t]
\centering
\setlength{\tabcolsep}{14pt}
\resizebox{\linewidth}{!}{%
\begin{tabular}{lccccccccc} \toprule
   \reply{Scaling coefficient $\eta$} & \replySec{$1$}& \replySec{$2$} &  \replySec{$10$}& \reply{$50$} & \reply{$100$} & \reply{$200$} \\ \midrule
  \reply{Deblurring} &\replySec{$2.00\times10^{-3}$}&\replySec{$2.00\times10^{-3}$} &\replySec{$1.50\times10^{-3}$} & \reply{$1.25\times10^{-3}$} & \reply{$1.25\times10^{-3}$} & \reply{$1.00\times10^{-3}$} \\
\reply{Inpainting} & \replySec{$1.25\times10^{-3}$}&\replySec{$1.25\times10^{-3}$} &\replySec{$1.00\times10^{-3}$}&\reply{$7.50\times10^{-4}$} & \reply{$5.00\times10^{-4}$} & \reply{$5.00\times10^{-4}$} \\

   \bottomrule
   \end{tabular}
\caption{\reply{The Values of $\lambda$ Used in Each Setting for the Proposed Method.}}
\label{PoissonParameterTable}
}
\end{table*}

\begin{table*}[t]
\centering
\resizebox{\linewidth}{!}{%
\begin{tabular}{lcccccccccccc} \toprule
    & \multicolumn{6}{c}{Deblurring} & \multicolumn{6}{c}{Inpainting} \\ \cmidrule(lr){2-7} \cmidrule(lr){8-13}
   Scaling coefficient $\eta$ & \replySec{$1$} & \replySec{$2$} & \replySec{$10$} & $50$ & $100$ & $200$ & \replySec{$1$} & \replySec{$2$} & \replySec{$10$} & $50$  & $100$ & $200$ \\ \midrule
   PnP-ADMM~\cite{ConvergentPnPADMM} & \replySec{$10.00$} & \replySec{$7.58$} & \replySec{$\mathbf{22.95}$} & $25.16$ & $26.28$ & $26.55$ & \replySec{$7.74$} & \replySec{$9.04$} & \replySec{$25.83$} & $11.45$ & $27.87$ & $28.10$ \\
   PnP-PDS (Unstable)~\cite{PnPPDS} & \replySec{$7.07$} & \replySec{$13.86$} & \replySec{$21.06$} & $23.98$ & $25.18$ & $26.51$ & \replySec{$5.38$} & \replySec{$8.25$} & \replySec{$20.29$} &$\mathbf{29.75}$ & $\mathbf{31.90}$ & $\mathbf{33.43}$ \\
   PnP-PDS (\replySec{NoBox})~\cite{PnPPDSByPesquet}  & \replySec{$\underline{18.85}$} & \replySec{$20.08$} & \replySec{$22.85$} & $\mathbf{25.17}$ & $\underline{26.42}$ & $\mathbf{27.77}$ & \replySec{$\underline{18.09}$} & \replySec{$20.47$} & \replySec{$\underline{25.95}$} &$\mathbf{29.75}$ & $\underline{31.21}$ & $\underline{32.74}$ \\
   \reply{BPG~\cite{BregmanProximalGradient}}  & \replySec{$5.41$} & \replySec{$6.86$} & \replySec{$16.27$} & \reply{$21.76$} & \reply{$26.36$} & \reply{$27.45$}  & \replySec{$6.52$} & \replySec{$7.85$} & \replySec{$10.19$} & \reply{$11.84$} & \reply{$12.06$} & \reply{$12.18$} \\
   RED~\cite{RED}  & \replySec{$18.64$} & \replySec{$\underline{20.12}$} & \replySec{$\underline{22.91}$} & $25.13$ & $26.39$ & $27.72$ & \replySec{$18.02$} & \replySec{$\underline{20.75}$} & \replySec{$25.88$} & $29.71$ & $31.19$ & $32.69$ \\
   PnP-PDS (Proposed) & \replySec{$\mathbf{19.04}$} & \replySec{$\mathbf{20.19}$} & \replySec{$22.87$} & $\mathbf{25.17}$ & $\mathbf{26.44}$ & $\mathbf{27.77}$ & \replySec{$\mathbf{18.59}$} & \replySec{$\mathbf{21.16}$} &  \replySec{$\mathbf{25.96}$} & $\mathbf{29.75}$ & $\underline{31.21}$ & $\underline{32.74}$ \\
   \bottomrule
   \end{tabular}
\caption{The Average PSNR [dB] Values for Deblurring and Inpainting on the Three Images From Set3\cite{Pesquet2021}. The Best Value is Highlighted in Bold, and the Second Best Value is Underlined.}
\label{PoissonResultTablePerformance}
}
\end{table*}

\begin{table}[t]
\centering
\setlength{\tabcolsep}{18pt}
\resizebox{\linewidth}{!}{%
\begin{tabular}{lcc} \toprule
    & Deblurring & Inpainting \\ \midrule
   PnP-ADMM~\cite{ConvergentPnPADMM} & \replySec{$1.525$} & \replySec{$0.5558$} \\
   PnP-PDS (Proposed) & \replySec{$\mathbf{0.5972}$} & \replySec{$\mathbf{0.3153}$} \\
   \bottomrule
   \end{tabular}}
\caption{The Average CPU Time [s] per Iteration Across Scaling Coefficients on the Three Images From Set3~\cite{Pesquet2021}. The Better Result is Highlighted in Bold.}
\label{PoissonResultTableCPUTime}
\end{table}

\def\imsize{0.243}
\begin{figure}[t]
    \centering
    \captionsetup[subfigure]{justification=centering}
    \begin{minipage}[t]{\linewidth}
    \centering
    
        \myimg{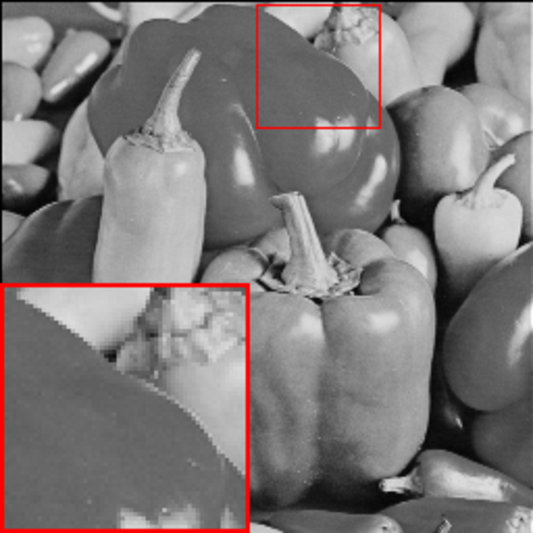}{\replySec{Ground truth}}{\replySec{(PSNR, SSIM)}}{\imsize}
        \myimg{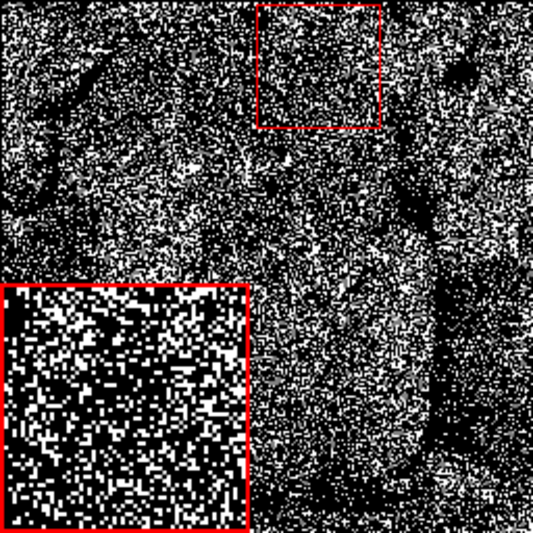}{\replySec{Observation}}{\replySec{$(3.471,\,0.0257)$}}{\imsize}
        \myimg{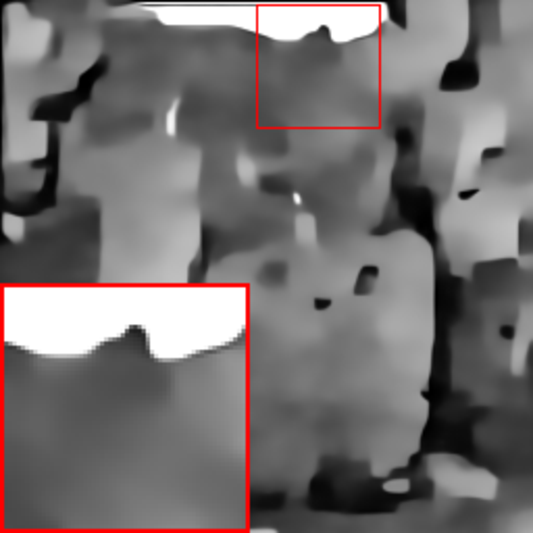}{\replySec{PnP-PDS (NoBox)}}{\replySec{$(16.64,\,0.6095)$}}{\imsize}
        \myimg{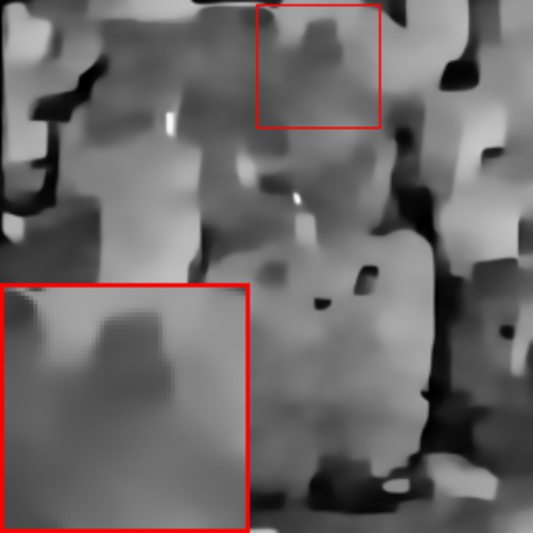}{\replySec{Proposed}}{\replySec{$(\mathbf{18.10},\,\mathbf{0.6146})$}}{\imsize}
    \end{minipage}
    \caption{\replySec{Visual results of PnP-PDS (NoBox) and the proposed method for inpainting with $\eta = 1$. The higher PSNR and SSIM values are highlighted in bold.}}
    \label{poisson_intensive_visual}
\end{figure}

\newcommand{\imgsep}{0.123}
\begin{figure*}[t]
    \captionsetup[subfigure]{justification=centering}
        \myimg{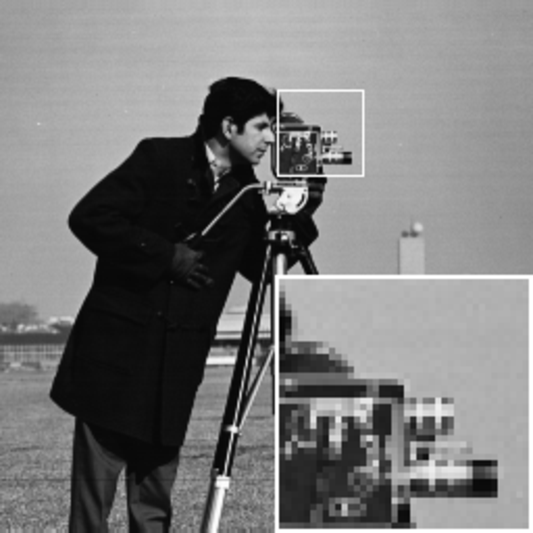}{Ground truth\vspace{3.1mm}}{(PSNR,\,SSIM)}{\imgsep}
        \myimg{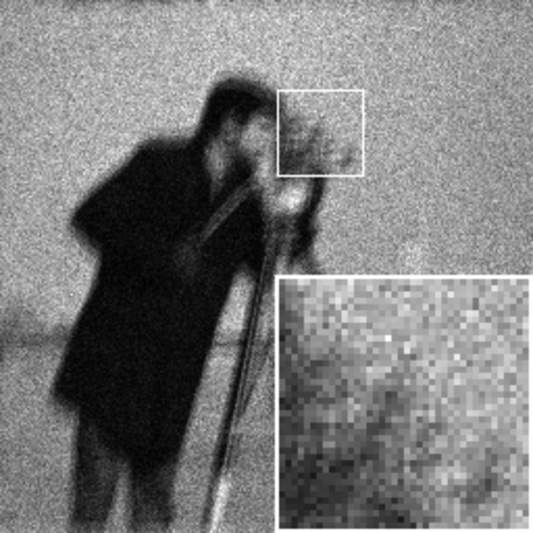}{Observation\vspace{3.1mm}}{$(19.26,\,0.2653)$}{\imgsep}
        \myimg{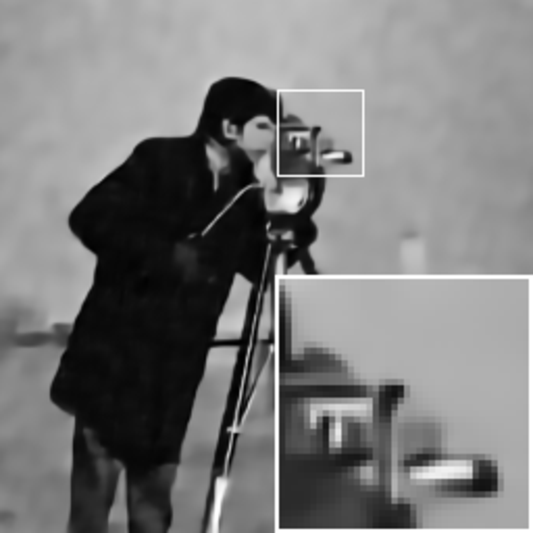}{PnP-ADMM\vspace{3.1mm}}{$(25.17,\,\underline{0.7743})$}{\imgsep}
        \myimg{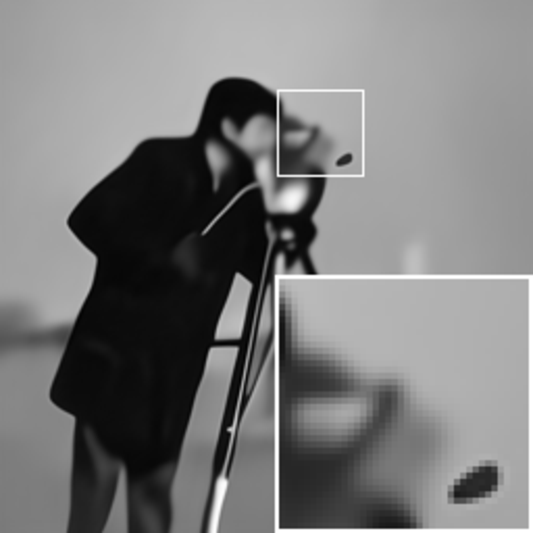}{PnP-PDS\\ (Unstable)}{$(24.05,\,0.7618)$}{\imgsep}
        \myimg{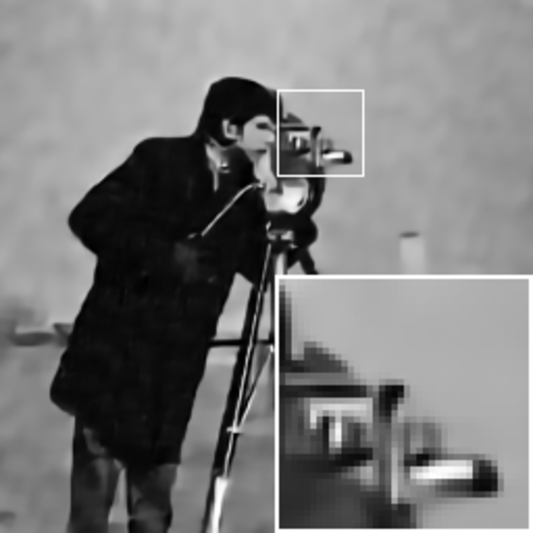}{PnP-PDS\\ (\replySec{NoBox})}{$(\underline{25.31},\,0.7731)$}{\imgsep}
        \myimg{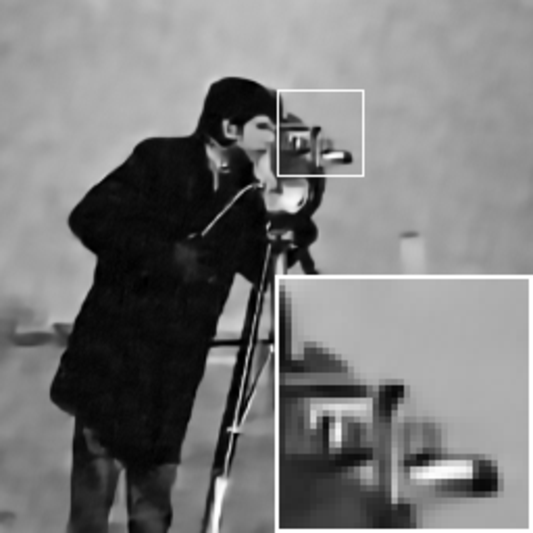}{RED\vspace{3.1mm}}{$(\underline{25.31},\,0.7678)$}{\imgsep}
        \myimg{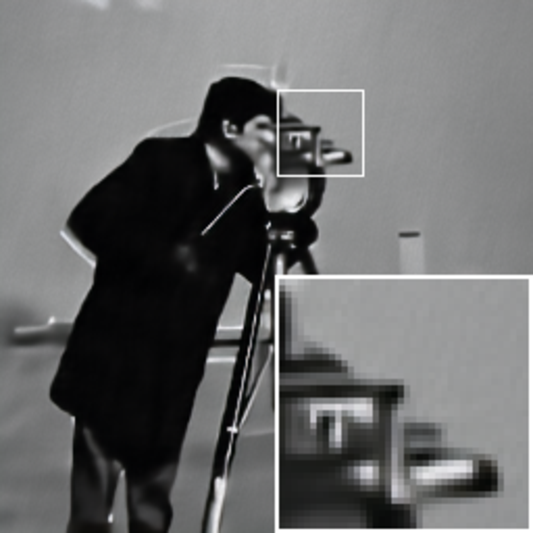}{\reply{BPG}\vspace{3.1mm}}{\reply{$(24.74,\,\mathbf{0.7829})$}}{\imgsep}
        \myimg{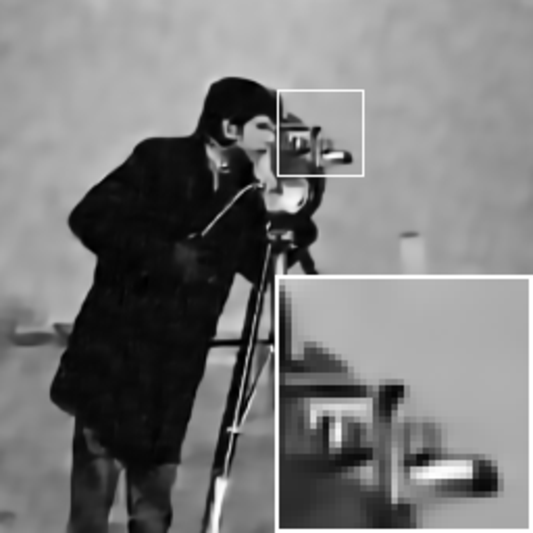}{Proposed\vspace{3.1mm}}{$(\mathbf{25.33},\,0.7733)$}{\imgsep}
    \caption{Visual results of deblurring with $\eta = 100$. The best results are indicated in bold, and the second-best results are underlined.}
    \label{PoissonResultVisual}
\end{figure*}

\begin{figure*}[t]
    \centering
    \includegraphics[keepaspectratio, width=\linewidth]{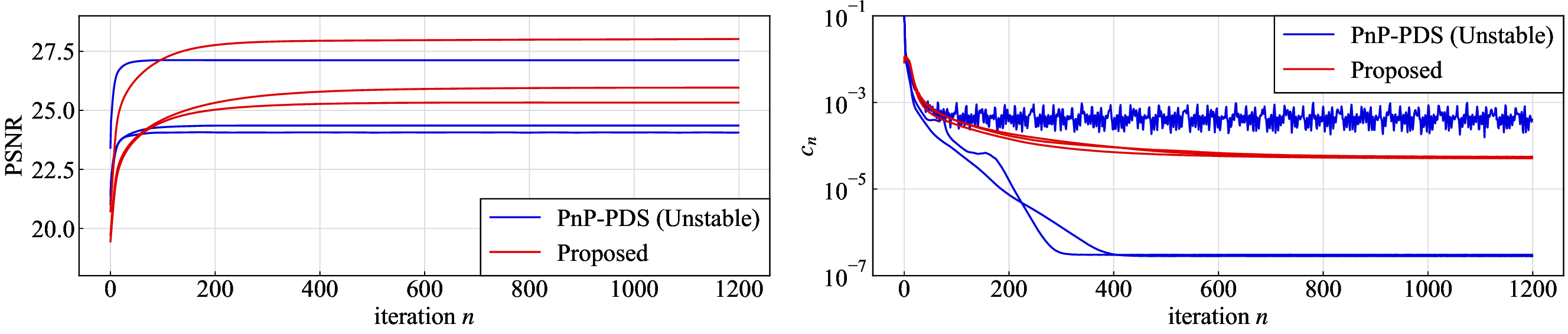}
        \caption{The evolution of PSNR~[dB] and $c_n$ of PnP-PDS (Unstable) and the proposed method over iterations for deblurring at $\eta=100$. Each plot in the graph corresponds to each image extracted from Set3\cite{Pesquet2021}.}
    \label{PoissonEvolutionGraph}
\end{figure*}

Let us now present the experiments on deblurring/inpainting under Poisson noise. \replySec{To vary the noise intensity, we considered six different values for the scaling coefficients $\eta$ between $1$ and $200$.} Smaller values of $\eta$ correspond to stronger noise, while larger values indicate weaker noise. We compare the performance of the proposed method with other methods applicable to this scenario, specifically PnP-ADMM~\cite{ConvergentPnPADMM}, PnP-PDS (Unstable)~\cite{PnPPDS}, PnP-PDS (\replySec{NoBox})~\cite{PnPPDSByPesquet}, RED~\cite{RED}\reply{, and BPG~\cite{BregmanProximalGradient}}. For methods whose parameter settings are suggested by the authors, we followed their recommendations. For the others, we performed a linear search over the balancing parameters between the data-fidelity and regularization terms. \reply{In particular, we report the best values of $\lambda$ for the proposed method in Table~\ref{PoissonParameterTable}.}

The average PSNR values over three images are presented in Table~\ref{PoissonResultTablePerformance}. The proposed method shows high restoration performance across all settings \reply{and outperforms the other methods with higher PSNR values.} \replySec{Especially, we observe that, at high noise levels (i.e., $\eta\in\{1,\,2\}$), the proposed method outperforms PnP-PDS (NoBox). This indicates that the box constraint not only enhances stability as shown in Section~\ref{GaussianExperiment}, but also contributes positively to performance under severely degraded conditions.} \replySec{In Fig.~\ref{poisson_intensive_visual}, we compare the visual results in such cases. It can be observed that PnP-PDS (NoBox) tends to generate unpleasant artifacts when the noise level becomes high, whereas the proposed method effectively suppresses such artifacts.}

\reply{
\begin{rem}[Comparison with Clamping of Denoiser Output]
An alternative approach to enforcing the box constraint is to clamp the denoiser output to $[0,\,1]$ after each iteration. This simple post-processing step can help stabilize the algorithm, and using it instead of the box constraint achieves almost the same performance as the proposed method in all settings. On the other hand, the proposed formulation explicitly incorporates the box constraint into the monotone inclusion formulation. This allows for a clearer characterization of the optimality condition and improves the interpretability of the solution.
\end{rem}
}

We investigate the computational efficiency of the proposed method, particularly in comparison with PnP-ADMM. Table~\ref{PoissonResultTableCPUTime} provides the CPU times per iteration for PnP-ADMM and the proposed method. As discussed in Section~\ref{ComparisonWithPnPADMM}, we need to perform inner iterations for PnP-ADMM, leading to relatively high computational costs. The proposed method reduces CPU times to approximately half of those of PnP-ADMM in all settings, since it solves the problem in~\eqref{PoissonPDSFormulation} without matrix inversions.

To evaluate stability, we present the PSNR and the update rate $c_n$ at each iteration in the proposed method and PnP-PDS (Unstable) in Fig.~\ref{PoissonEvolutionGraph}. For PnP-PDS (Unstable), $c_n$ behaves unstably depending on the image. In contrast, the proposed method shows a consistent decrease in $c_n$ over iterations, along with stable PSNR values.

Fig.~\ref{PoissonResultVisual} shows the visual results of the deblurring at $\eta = 100$. PnP-PDS (Unstable) loses the details of the original images, while PnP-ADMM and RED show slight changes in brightness or fine shapes. 
\reply{Although BPG preserves the structure of the image well, it also produced several visually unpleasant artifacts (e.g., above the man's shoulder and head).} 
In contrast, the proposed method recovers fine details, yielding a visually natural image. 

\subsection{\reply{Influence of the Prior on the Proposed Method}}
\label{PriorExperiment}

\begin{figure*}[t]
    \centering
    \includegraphics[keepaspectratio, width=\linewidth]{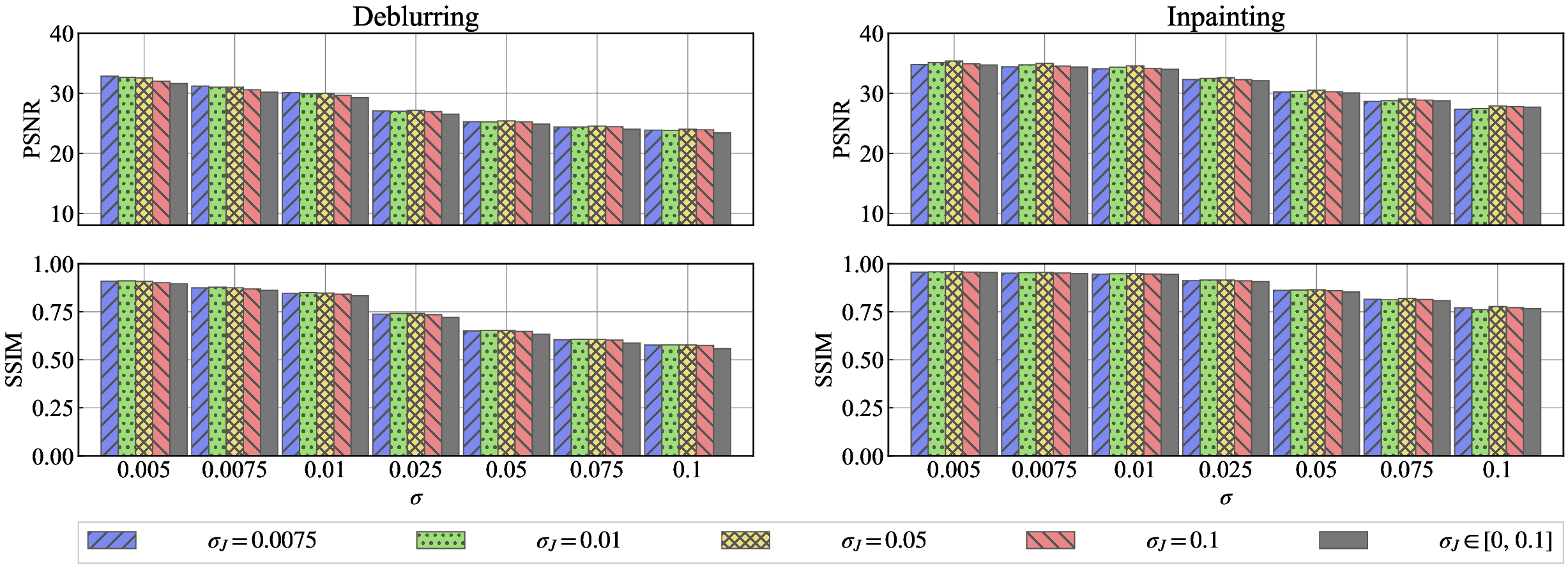}
    \caption{\reply{The influence of the variation in $\sigma_J$ on the restoration performance for the proposed method. The plotted values are the average of seven images from ImageNet~\cite{ImageNet}.}}
    \label{sigmaJ_Performance}
\end{figure*}

\reply{
We study how the choice of prior influences the performance of the proposed method. As a representative case, we focus on how the noise level $\sigma_J$ assumed during the training of the Gaussian denoiser affects reconstruction performance of deblurring/inpainting under Gaussian noise.}

\reply{We trained denoisers with four fixed noise levels, specifically $\sigma_J \in \{0.0075,\, 0.01,\, 0.05,\, 0.1\}$ where $\sigma_J$ denotes the standard deviation of the Gaussian noise added to inputs during denoiser training (i.e., in~\eqref{PenaltyOfDNCNN}, $\mathbf{x}_\ell$ is corrupted with Gaussian noise of standard deviation $\sigma_J$). In addition, we trained a mixed-noise-level model, where $\sigma_J$ was randomly sampled from the range $[0,\, 0.1]$ for each image. In all cases, the values of $\tau$ and $\xi$ in~\eqref{PenaltyOfDNCNN} were manually tuned. For more details, we follow the training procedure described in \cite{Pesquet2021}.
}

\reply{
The performance comparison is shown in Fig.~\ref{sigmaJ_Performance}. The noise level $\sigma$ of the observation was selected from seven values between $0.005$ and $0.1$. For each fixed $\sigma$, the proposed method exhibits nearly identical performance across all values of $\sigma_J$, with the maximum PSNR and SSIM differences being $1.215$ dB (in deblurring at $\sigma = 0.005$) and $0.021$ (in deblurring at $\sigma = 0.025$), respectively. These results indicate that the proposed method can effectively leverage the prior of denoisers trained with a wide range of $\sigma_J$. Furthermore, it achieves similarly competitive performance even when using a denoiser trained with randomly varying noise levels $\sigma_J \in [0, 0.1]$, suggesting that its regularization effect is not highly sensitive to the specific training noise level in the proposed method.
}

\reply{
Finally, we note that the consistent performance regardless of $\sigma_J$ and $\sigma$ can be attributed to expressing the data-fidelity as a constraint. By formulating it as a constraint, the consistency between the reconstruction and the observation can be explicitly controlled via the parameter $\varepsilon$. This helps prevent overly aggressive denoising that could arise when $\sigma_J$ is much larger than $\sigma$.
}

\section{Conclusion}
In this paper, we have proposed a general PnP-PDS method with a theoretical convergence guarantee under realistic assumptions in real-world settings, supported by extensive experimental results. First, we established the convergent PnP-PDS using a firmly nonexpansive denoiser and theoretically characterized its solution set. Then, we showed that it efficiently solves various image restoration problems involving nonsmooth data-fidelity terms and additional hard constraints without the need to compute matrix inversions or perform inner iterations. Finally, through numerical experiments on deblurring/inpainting under Gaussian noise and deblurring/inpainting under Poisson noise, we demonstrated that the proposed PnP-PDS outperforms existing methods with stable convergence behavior. We believe that the proposed PnP-PDS holds great potential for a wide range of applications including biomedical imaging, astronomical imaging, and remote sensing. 

\reply{
\appendix[Proofs of Proposition~\ref{propOfConvergence} and Proposition~\ref{propOfConvergenceWhenBetaIsZero}]}
%Now, we provide the proofs of Proposition~\ref{propOfConvergence} and Proposition~\ref{propOfConvergenceWhenBetaIsZero}, respectively. 
%\label{AppendixOfProofs}
We start by introducing several key facts, followed by the proofs.

\begin{fact}[{\hspace{-0.1mm}\cite[Lemma 4.4]{PDS2}}]
\label{LemmaOfFBS}
Let $B_1:\mathcal{H}\to2^\mathcal{H}$ be a maximally monotone operator, and $B_2:\mathcal{H}\to\mathcal{H}$ be a $\kappa$-cocoercive operator for some $\kappa\in(0,\,\infty)$. Suppose that $\mathrm{zer}(B_1+B_2)$ is nonempty, and the following conditions hold:
    \begin{enumerate}[(i)]
    \item $\gamma\in\left(0,\,2\kappa\right)$,
    \item $\rho_n\in\left(0,\,\delta\right)$,
    \item $\sum_{n\in\mathbb{N}}\rho_n(\delta-\rho_n)=\infty$,
    \end{enumerate}
where $\delta:=2-\gamma\,(2\kappa)^{-1}$.
%is defined as follows:
%\begin{align}
%\delta:=2-\dfrac{\gamma}{2\kappa}.
%\end{align}
Consider the sequence $\{\mathbf{s}_n\}_{n\in\mathbb{N}}$ generated by the following update:
\begin{align}
\label{FBSUpdate}
    \begin{split}
        \mathbf{s}_{n+1}=\rho_n((\mathrm{Id}+\gamma B_1)^{-1}(\mathrm{Id}-\gamma B_2)(\mathbf{s}_n)&)+\\
        &(1-\rho_n)\mathbf{s}_n.
    \end{split}
\end{align}
The sequence $\{\mathbf{s}_n\}_{n\in\mathbb{N}}$ generated by~\eqref{FBSUpdate} weakly converges to $\hat{\mathbf{s}}$ s.t. $\hat{\mathbf{s}}\in\mathrm{zer}(B_1+B_2)$.
\end{fact}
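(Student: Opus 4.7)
The plan is to recognize~\eqref{FBSUpdate} as the relaxed Krasnosel'skii--Mann (KM) iteration applied to the forward-backward operator $T:=(\mathrm{Id}+\gamma B_1)^{-1}(\mathrm{Id}-\gamma B_2)$, and then reduce the convergence claim to the standard theory of averaged operators in Hilbert space. The first step is to verify that $T$ is averaged. Since $B_1$ is maximally monotone, its resolvent $(\mathrm{Id}+\gamma B_1)^{-1}$ is firmly nonexpansive, hence $\tfrac{1}{2}$-averaged. Since $B_2$ is $\kappa$-cocoercive and $\gamma\in(0,2\kappa)$, the forward step $\mathrm{Id}-\gamma B_2$ is $\gamma/(2\kappa)$-averaged. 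Applying the classical composition rule $\alpha=(\alpha_1+\alpha_2-2\alpha_1\alpha_2)/(1-\alpha_1\alpha_2)$ with $\alpha_1=\tfrac{1}{2}$ and $\alpha_2=\gamma/(2\kappa)$ yields that $T$ is $\alpha$-averaged with $\alpha=2\kappa/(4\kappa-\gamma)$, and the clean identity $1/\alpha=2-\gamma/(2\kappa)=\delta$ is exactly what makes the admissible range $\rho_n\in(0,\delta)$ in the statement tight.

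Next I would identify $\mathrm{Fix}(T)=\mathrm{zer}(B_1+B_2)$: the equation $T\mathbf{s}=\mathbf{s}$ rearranges to $-\gamma B_2(\mathbf{s})\in\gamma B_1(\mathbf{s})$, i.e.\ $\mathbf{0}\in(B_1+B_2)(\mathbf{s})$, and this set is assumed nonempty. Rewriting~\eqref{FBSUpdate} via the $\alpha$-averaged decomposition $T=(1-\alpha)\mathrm{Id}+\alpha R$ with $R$ nonexpansive gives $\mathbf{s}_{n+1}=(1-\rho_n\alpha)\mathbf{s}_n+\rho_n\alpha R(\mathbf{s}_n)$; the hypothesis $\rho_n\in(0,\delta)$ translates to the standard KM relaxation parameter $\lambda_n:=\rho_n\alpha\in(0,1)$, and the summability $\sum_n\rho_n(\delta-\rho_n)=\infty$ becomes $\sum_n\lambda_n(1-\lambda_n)=\infty$. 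The classical KM theorem then delivers Fej\'er monotonicity of $\{\mathbf{s}_n\}$ with respect to $\mathrm{Fix}(T)$ together with asymptotic regularity $\|T\mathbf{s}_n-\mathbf{s}_n\|\to 0$; weak convergence of $\{\mathbf{s}_n\}$ to a point of $\mathrm{Fix}(T)=\mathrm{zer}(B_1+B_2)$ then follows from Opial's lemma combined with the demiclosedness of $\mathrm{Id}-T$ at $\mathbf{0}$ (which is standard for nonexpansive $T$).

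The main obstacle is the bookkeeping of the averaging constant in the composition step: any slack in that bound would force $\rho_n$ to lie in a strictly smaller interval than $(0,\delta)$, and would then be too restrictive to supply the range needed when this fact is invoked in the proofs of Propositions~\ref{propOfConvergence} and~\ref{propOfConvergenceWhenBetaIsZero}. The other delicate point is the passage from asymptotic regularity to weak (rather than merely subsequential) convergence in an infinite-dimensional Hilbert space, which hinges on the demiclosedness principle and Opial's uniqueness argument; these are routine in the monotone-operator-splitting literature but are the nontrivial ingredients that the citation to~\cite[Lemma 4.4]{PDS2} encapsulates.
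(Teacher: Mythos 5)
Your proposal is correct, and it reconstructs precisely the argument that the citation encapsulates: the paper itself offers no proof of this statement, presenting it as an imported Fact quoted from \cite[Lemma~4.4]{PDS2}, where it is established exactly by the averaged-operator route you describe. Your bookkeeping checks out: the composition rule with $\alpha_1=\tfrac12$ and $\alpha_2=\gamma/(2\kappa)$ gives $\alpha=\bigl(2-\gamma/(2\kappa)\bigr)^{-1}=1/\delta$, so $\lambda_n=\rho_n\alpha\in(0,1)$ and $\sum_n\lambda_n(1-\lambda_n)=\alpha^2\sum_n\rho_n(\delta-\rho_n)=\infty$, which is exactly the hypothesis of the Krasnosel'skii--Mann theorem; together with $\mathrm{Fix}(T)=\mathrm{zer}(B_1+B_2)$ and the demiclosedness/Opial argument, this is a complete and tight proof of the stated fact.
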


\begin{fact}[{\hspace{-0.1mm}\cite[Lemma 4.2]{PDS2}}]
\label{LemmaOfProximalPoint}
Let $B:\mathcal{H}\to2^\mathcal{H}$ be a maximally monotone operator. Suppose that $\mathrm{zer}(B)$ is nonempty, and the following conditions hold:
    \begin{enumerate}[(i)]
    \item $\rho_n\in\left(0,\,2\right)$,
    \item $\sum_{n\in\mathbb{N}}\rho_n(2-\rho_n)=\infty$.
    \end{enumerate}
Consider the sequence $\{\mathbf{s}_n\}_{n\in\mathbb{N}}$ generated by the following update:
\begin{align}
\label{ProximalPointUpdate}
    \mathbf{s}_{n+1}=\rho_n((\mathrm{Id}+B)^{-1}(\mathbf{s}_n)&)+(1-\rho_n)\mathbf{s}_n.
\end{align}
The sequence $\{\mathbf{s}_n\}_{n\in\mathbb{N}}$ generated by~\eqref{ProximalPointUpdate} weakly converges to $\hat{\mathbf{s}}$ s.t. $\hat{\mathbf{s}}\in\mathrm{zer}(B)$.
\end{fact}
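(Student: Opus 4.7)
The plan is to recognize the relaxed iteration in \eqref{ProximalPointUpdate} as a Krasnosel'skii--Mann iteration associated with the reflected resolvent of $B$, and then invoke the classical weak-convergence theorem for such iterations. The key observation is that the resolvent of a maximally monotone operator is always firmly nonexpansive, hence $\tfrac{1}{2}$-averaged, which allows the relaxation range $\rho_n \in (0,2)$ to be rescaled onto the range $\lambda_n \in (0,1)$ for a genuinely nonexpansive operator.

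First, I would recall Minty's theorem (\cite[Prop.~23.7 and Cor.~23.11]{ConvexBook}): since $B$ is maximally monotone, the resolvent $J_B := (\mathrm{Id}+B)^{-1}$ is single-valued, has full domain $\mathcal{H}$, and is firmly nonexpansive. Equivalently (\cite[Prop.~4.4]{ConvexBook}), the reflected resolvent $R_B := 2J_B - \mathrm{Id}$ is nonexpansive. A direct check from $0 \in B\mathbf{s} \iff \mathbf{s} \in \mathbf{s} + B\mathbf{s} \iff \mathbf{s} = J_B\mathbf{s}$ shows $\mathrm{Fix}(J_B) = \mathrm{zer}(B)$, and clearly $\mathrm{Fix}(R_B) = \mathrm{Fix}(J_B)$, so $\mathrm{Fix}(R_B) = \mathrm{zer}(B)$ is nonempty by assumption.

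Next, I would rewrite the update. Using the identity $J_B\mathbf{s}_n - \mathbf{s}_n = \tfrac{1}{2}(R_B\mathbf{s}_n - \mathbf{s}_n)$ and setting $\lambda_n := \rho_n/2$, the recursion \eqref{ProximalPointUpdate} becomes
\begin{equation}
\mathbf{s}_{n+1} = (1-\lambda_n)\mathbf{s}_n + \lambda_n R_B \mathbf{s}_n.\nonumber
\end{equation}
Conditions (i) and (ii) translate, respectively, into $\lambda_n \in (0,1)$ and $\sum_{n\in\mathbb{N}} \lambda_n(1-\lambda_n) = \tfrac{1}{4}\sum_{n\in\mathbb{N}} \rho_n(2-\rho_n) = \infty$. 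I would then apply the Krasnosel'skii--Mann theorem (\cite[Thm.~5.15]{ConvexBook}): for a nonexpansive operator $T$ on $\mathcal{H}$ with $\mathrm{Fix}(T)\neq\emptyset$ and relaxation parameters $\lambda_n \in [0,1]$ satisfying the divergence condition, the iterates $(1-\lambda_n)\mathbf{s}_n + \lambda_n T\mathbf{s}_n$ converge weakly to some point of $\mathrm{Fix}(T)$. Applying this with $T = R_B$ yields weak convergence of $\{\mathbf{s}_n\}$ to an element of $\mathrm{zer}(B)$.

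There is no serious obstacle, since the argument is essentially a repackaging of the Krasnosel'skii--Mann theorem via the reflected-resolvent trick. The only points requiring care are the precise identification $\mathrm{Fix}(R_B) = \mathrm{zer}(B)$ and the exact translation between the $\rho_n$-conditions in the Fact and the $\lambda_n$-conditions required by Krasnosel'skii--Mann; both are elementary. Notably, the proof uses only maximal monotonicity of $B$ (to guarantee that $J_B$ is a well-defined firmly nonexpansive operator on the full space) and nonemptiness of $\mathrm{zer}(B)$, and it is valid in any real Hilbert space without further assumptions.
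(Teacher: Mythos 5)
The paper itself offers no proof of Fact~\ref{LemmaOfProximalPoint}: it is imported verbatim as \cite[Lemma~4.2]{PDS2}, which is the classical weak convergence result for the relaxed proximal point algorithm. Your argument is correct and complete, and it is exactly the standard proof underlying that cited lemma. Each step checks out: Minty's theorem gives that $J_B=(\mathrm{Id}+B)^{-1}$ is single-valued, everywhere defined, and firmly nonexpansive, so $R_B=2J_B-\mathrm{Id}$ is nonexpansive on all of $\mathcal{H}$; the identification $\mathrm{Fix}(R_B)=\mathrm{Fix}(J_B)=\mathrm{zer}(B)$ is immediate from $J_B\mathbf{s}=\mathbf{s}\iff 0\in B\mathbf{s}$; the substitution $\lambda_n=\rho_n/2$ turns \eqref{ProximalPointUpdate} into a Krasnosel'skii--Mann iteration for $R_B$ with $\lambda_n\in(0,1)$ and $\sum_n\lambda_n(1-\lambda_n)=\tfrac14\sum_n\rho_n(2-\rho_n)=\infty$, which is precisely the hypothesis of the KM theorem. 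Since the paper simply cites the result, your proof cannot ``differ'' from the paper's; it supplies the missing justification in the expected way, and no gap remains.
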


\begin{fact}[{\hspace{-0.1mm}\cite[Lemma 4.5]{PDS2}}]
\label{LemmaOfNonexpasiveness}
Let $J:\mathcal{H}\to\mathbb{R}$ be a convex differentiable function. If $\kappa\nabla J$ is nonexpansive for some $\kappa\in(0,\,\infty)$, then $\kappa\nabla  J$ becomes firmly nonexpansive.
\end{fact}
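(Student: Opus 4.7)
The statement is the classical Baillon--Haddad theorem phrased in a rescaled form. The plan is to first rewrite the hypothesis and conclusion in terms of Lipschitz continuity and cocoercivity of $\nabla J$, and then give a proof via the descent inequality induced by a Lipschitz gradient.

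First, observe that $\kappa \nabla J$ being nonexpansive is equivalent to $\nabla J$ being $(1/\kappa)$-Lipschitz. Likewise, $\kappa \nabla J$ is firmly nonexpansive iff $\langle \kappa \nabla J(x)-\kappa \nabla J(y),\,x-y\rangle \geq \|\kappa \nabla J(x)-\kappa \nabla J(y)\|^{2}$, which after dividing by $\kappa$ reduces exactly to $\kappa$-cocoercivity of $\nabla J$:
\begin{equation*}
\langle \nabla J(x)-\nabla J(y),\,x-y\rangle \;\geq\; \kappa\,\|\nabla J(x)-\nabla J(y)\|^{2}.
\end{equation*}
Thus the lemma reduces to showing that a convex differentiable $J$ with $(1/\kappa)$-Lipschitz gradient automatically has a $\kappa$-cocoercive gradient.

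For this, I would fix $y \in \mathcal{H}$ and introduce the auxiliary function
\begin{equation*}
g_y(x) := J(x) - \langle \nabla J(y),\,x\rangle.
\end{equation*}
This $g_y$ inherits convexity and the same Lipschitz constant for its gradient $\nabla g_y(x)=\nabla J(x)-\nabla J(y)$, and since $\nabla g_y(y)=0$ its global minimum is attained at $y$. Applying the standard descent inequality (the quadratic upper bound that follows from $\nabla g_y$ being $(1/\kappa)$-Lipschitz) to the point $x - \kappa\,\nabla g_y(x)$ yields
\begin{equation*}
g_y\bigl(x-\kappa\,\nabla g_y(x)\bigr) \;\leq\; g_y(x) - \tfrac{\kappa}{2}\,\|\nabla g_y(x)\|^{2}.
\end{equation*}
Combining with $g_y(y)\leq g_y(x-\kappa\,\nabla g_y(x))$ and unfolding the definition of $g_y$ gives the ``one-sided'' inequality
\begin{equation*}
J(y)-J(x)-\langle \nabla J(y),\,y-x\rangle \;\leq\; -\tfrac{\kappa}{2}\,\|\nabla J(x)-\nabla J(y)\|^{2}.
\end{equation*}
Swapping the roles of $x$ and $y$ and adding the two resulting inequalities eliminates the function values and yields the desired cocoercivity bound, completing the proof.

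The only delicate ingredient is the descent inequality used on $g_y$; once one accepts that quadratic upper bound (which is the usual consequence of a Lipschitz gradient for a convex function and is essentially immediate from the fundamental theorem of calculus applied to $t\mapsto g_y(x+t(z-x))$), the rest is bookkeeping. I would expect the main conceptual obstacle for a reader to be recognizing why the minimization of the shifted convex function $g_y$ is the right trick: it converts the one-point Lipschitz control of $\nabla J$ into a global variational inequality between two arbitrary points $x$ and $y$, which is precisely what cocoercivity requires.
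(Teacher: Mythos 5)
Your proof is correct. Note first that the paper itself gives no proof of this statement: it is imported verbatim as a ``Fact'' with a citation to Condat's Lemma~4.5, which is in turn the classical Baillon--Haddad theorem. So there is no in-paper argument to compare against; what you have supplied is the standard self-contained proof of that theorem. Your reduction is accurate on both ends ($\kappa\nabla J$ nonexpansive $\Leftrightarrow$ $\nabla J$ is $(1/\kappa)$-Lipschitz, and $\kappa\nabla J$ firmly nonexpansive $\Leftrightarrow$ $\nabla J$ is $\kappa$-cocoercive, using the characterization $\langle Tx-Ty,\,x-y\rangle\geq\|Tx-Ty\|^{2}$ of firm nonexpansiveness), the auxiliary function $g_y$ is the right device, the descent-lemma computation at the point $x-\kappa\nabla g_y(x)$ gives exactly the stated bound, and symmetrizing in $x$ and $y$ yields cocoercivity. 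The two places where hypotheses are consumed are both handled: convexity of $g_y$ is what lets you pass from $\nabla g_y(y)=0$ to $g_y(y)\leq g_y(z)$ for all $z$, and the Lipschitz bound is what gives the quadratic upper estimate; since $J$ is finite-valued and Fr\'echet differentiable on all of $\mathcal{H}$, the descent inequality is indeed available by integrating along segments as you indicate. The argument is complete as written.
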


\begin{proof}[Proof of Proposition \ref{propOfConvergence}]
We follow the proof structure as in~\cite[Theorem 3.1 for Algorithm 3.1]{PDS2}. Consider the real Hilbert space 
$\mathcal{Z}_I=\mathcal{X}\times\mathcal{Y}$ equipped with the inner product defined as $\ip{\mathbf{z}}{\mathbf{z'}}_I:=\ip{\mathbf{x}}{\mathbf{x'}}+\ip{\mathbf{y}}{\mathbf{y'}}$
%\begin{align}
%\ip{\mathbf{z}}{\mathbf{z'}}_I:=\ip{\mathbf{x}}{\mathbf{x'}}+\ip{\mathbf{y}}{\mathbf{y'}},
%\end{align}
where $\mathbf{z}=(\mathbf{x},\,\mathbf{y}),\,\mathbf{z'}=(\mathbf{x'},\,\mathbf{y'})$. Now, we define $P$ as 
\begin{align}
P:\begin{pmatrix}\mathbf{x}\\\mathbf{y}\end{pmatrix}\to
\begin{pmatrix}
    \mathrm{Id}/\gamma_1 & -\mathbf{L}^{*}\\
    -\mathbf{L} &\mathrm{Id}/\gamma_2
\end{pmatrix}
\begin{pmatrix}\mathbf{x}\\\mathbf{y}\end{pmatrix}.
\end{align}
Then, as shown in~\cite{PDS2}, $P$ is bounded, self-adjoint, and strictly positive. Hence, we obtain the real Hilbert space 
$\mathcal{Z}_P=\mathcal{X}\times\mathcal{Y}$ equipped with the following inner product: %$\ip{\mathbf{z}}{\mathbf{z'}}_P:=\ip{\mathbf{z}}{P\mathbf{z'}}_I$, 
\begin{align}
\ip{\mathbf{z}}{\mathbf{z'}}_P:=\ip{\mathbf{z}}{P\mathbf{z'}}_I,
\end{align}
Since convergence in $\mathcal{Z}_I$ and $\mathcal{Z}_P$ are equivalent~\cite[Theorem 3.1 for Algorithm 3.1]{PDS2}, we show that the sequence generated by the iteration in~\eqref{PnP-PDSIteration} weakly converges to the solution to~\eqref{PnP-PDSSolutionSet} in $\mathcal{Z}_P$.

For simple representation of Algorithm~\eqref{PnP-PDSIteration}, let us define operators $T_1$ and $T_2$ as follows:
\begin{align}
T_1(\mathbf{z}_{n}):=\begin{pmatrix}
    \gamma_1^{-1}A_J(\mathbf{x}_{n})+\mathbf{L}^{*}\mathbf{y}_{n}\\
    -\mathbf{L}\mathbf{x}_{n}+\partial h^{*}(\mathbf{y}_{n})
\end{pmatrix},
T_2(\mathbf{z}_n):=\begin{pmatrix}
    \nabla f(\mathbf{x}_n)\\0
\end{pmatrix},\hspace{-2mm}
\end{align}
where $\mathbf{z}_{n}:=(\mathbf{x}_n^\top,\,\mathbf{y}_n^\top)^{\top}$. Then, we rewrite Algorithm~\eqref{PnP-PDSIteration} as
\begin{align}
    \label{PnPPDSUpdateFBSVer}
    \mathbf{z}_{n+1}=\rho_n(\mathrm{Id}+M_1)^{-1}(\mathrm{Id}-M_2)\mathbf{z}_n+(1-\rho_n)\mathbf{z}_n,
\end{align}
where $M_1:=P^{-1}\circ T_1$ and $M_2:=P^{-1}\circ T_2$. Since the update in~\eqref{PnPPDSUpdateFBSVer} has the same form as~\eqref{FBSUpdate} with $\gamma=1$, it is sufficient to demonstrate that each assumption in Fact \ref{LemmaOfFBS} is satisfied. Hereunder, we define $\kappa$ as follows:
\begin{align}
    \label{definitionOfKappa}
    \kappa:=\dfrac{1}{\beta}\left(\dfrac{1}{\gamma_1}-\gamma_2\|\mathbf{L}\|_{\mathrm{op}}^2\right).
\end{align}

\begin{itemize}
\setlength{\leftskip}{-10pt}
    \item We see that the set-valued operator $A_J=J^{-1}-\mathrm{Id}$ is maximally monotone from \cite[Proposition 23.8 (iii)]{ConvexBook}. Hence, $\gamma_1^{-1}A_J$ is also maximally monotone \cite[Proposition 20.22]{ConvexBook}. The operator $\partial h^{*}$ is maximally monotone from \cite[Theorem 20.48, Proposition 20.22, and Corollary 16.30]{ConvexBook}. Therefore, the operator $(\mathbf{x},\mathbf{y})\mapsto \gamma_1^{-1}A_J(\mathbf{x})\times \partial h^{*}(\mathbf{y})$ is maximally monotone in $\mathcal{Z}_I$ from \cite[Proposition 20.23]{ConvexBook}. Furthermore, according to \cite[Example 20.35]{ConvexBook}, $(\mathbf{x},\mathbf{y})\mapsto \mathbf{L}^{*}\mathbf{y}\times-\mathbf{L}\mathbf{x}$ is maximally monotone and has a full domain. Therefore, $T_1$ becomes a maximally monotone operator on $\mathcal{Z}_I$ \cite[Corollary 25.5 (i)]{ConvexBook}. Since $P$ is injective, $M_1=P^{-1}\circ T_1$ is maximally monotone on $\mathcal{Z}_P$.
    \item From Fact~\ref{LemmaOfNonexpasiveness} and the computation presented in \cite[Theorem 3.1 for Algorithm 3.1]{PDS2}, $M_2$ is cocoercive, i.e., $\kappa M_2$ is firmly nonexpansive in $\mathcal{Z}_P$.
    \item We set $\gamma=1$. Since $0<2\kappa<1$ from (i) of Proposition \ref{propOfConvergence}, the assumption (i) of Fact \ref{LemmaOfFBS} is satisfied. 
    \item The assumption (ii) of Fact \ref{LemmaOfFBS} is satisfied from~\eqref{definitionOfKappa}.
    \item The set of solutions to~\eqref{PnP-PDSSolutionSet} coincides $\mathrm{zer}(T_1+T_2)=\mathrm{zer}(M_1+M_2)$, and this set is nonempty by the assumption. 
 \end{itemize}

Therefore, by applying Fact \ref{LemmaOfFBS}, we have that $\mathbf{z}_n$ weakly converges to $\mathbf{\hat{z}}\in\mathrm{zer}(T_1+T_2)$ on $\mathcal{Z}_P$, which implies $\{\mathbf{x}_n,\mathbf{y}_n\}_{n\in\mathbb{N}}$ weakly converges to a solution to~\eqref{PnP-PDSSolutionSet}.
\end{proof}

\begin{proof}[Proof of Proposition \ref{propOfConvergenceWhenBetaIsZero}]
Following the definition of $P$ as in Proposition 3.1, $P$ is bounded, linear, and self-adjoint. Since condition (i) does not include equality, $P$ is strictly positive. Therefore, same as the case in Proposition 3.1, we can define a Hilbert space $\mathcal{Z}_P$ equipped with the inner product $\langle \cdot|\cdot \rangle_P$. Since $f(\mathbf{x})=0$ , we have $T_2=0$, and the update in~\eqref{PnPPDSUpdateFBSVer} becomes
\begin{align}
    \label{PnPPDSUPdateFBSVerWhenBetaIsZero}
    \mathbf{z}_{n+1}=\rho_n(\mathrm{Id}+M_1)^{-1}\mathbf{z}_n+(1-\rho_n)\mathbf{z}_n.
\end{align}
The update in~\eqref{PnPPDSUPdateFBSVerWhenBetaIsZero} is the same form as Fact \ref{LemmaOfProximalPoint} by setting $B=M_1$, so we need to show that the assumptions of Fact \ref{LemmaOfProximalPoint} are satisfied. Same as the proof of Proposition \ref{propOfConvergence}, $M_1$ becomes a maximally monotone operator, and from the assumption, $\mathrm{zer}(T_1 + T_2) = \mathrm{zer}(T_1) = \mathrm{zer}(M_1)$ is nonempty. Furthermore, the assumption on $\rho_n$ in Proposition \ref{propOfConvergence} is the same as in Fact \ref{LemmaOfProximalPoint}. Therefore, by Fact \ref{LemmaOfProximalPoint}, $\mathbf{z_n}$ weakly converges to $\mathrm{zer}(T_1 + T_2)$, which means $\{\mathbf{x}
_n,\,\mathbf{y}_n\}$ weakly converges to the solution to~\eqref{PnP-PDSSolutionSet}.
\end{proof}

% or
%\appendix  % for no appendix heading
% do not use \section anymore after \appendix, only \section*
% is possibly needed

% use appendices with more than one appendix
% then use \section to start each appendix
% you must declare a \section before using any
% \subsection or using \label (\appendices by itself
% starts a section numbered zero.)
%

%\appendices
%\section{Proof of the First Zonklar Equation}
%Appendix one text goes here.

% you can choose not to have a title for an appendix
% if you want by leaving the argument blank

% use section* for acknowledgment
%\section*{Acknowledgment}
%This work was supported in part by JST PRESTO under Grant JPMJPR21C4, JST AdCORP under Grant JPMJKB2307, JST ACT-X under Grant JPMJAX24C1, JST BOOST, Japan Grant Number JPMJBS2417, and in part by JSPS KAKENHI under Grant 22H03610, 22H00512, 23H01415, 23K17461, 24K03119, and 24K22291.

% Can use something like this to put references on a page
% by themselves when using endfloat and the captionsoff option.
\ifCLASSOPTIONcaptionsoff
  \newpage
\fi

% trigger a \newpage just before the given reference
% number - used to balance the columns on the last page
% adjust value as needed - may need to be readjusted if
% the document is modified later
%\IEEEtriggeratref{8}
% The "triggered" command can be changed if desired:
%\IEEEtriggercmd{\enlargethispage{-5in}}

% references section

% can use a bibliography generated by BibTeX as a .bbl file
% BibTeX documentation can be easily obtained at:
% http://mirror.ctan.org/biblio/bibtex/contrib/doc/
% The IEEEtran BibTeX style support page is at:
% http://www.michaelshell.org/tex/ieeetran/bibtex/
%\bibliographystyle{IEEEtran}
% argument is your BibTeX string definitions and bibliography database(s)
%\bibliography{IEEEabrv,../bib/paper}
%
% <OR> manually copy in the resultant .bbl file
% set second argument of \begin to the number of references
% (used to reserve space for the reference number labels box)
\bibliographystyle{IEEEtran}
% Generated by IEEEtran.bst, version: 1.14 (2015/08/26)

% biography section
% 
% If you have an EPS/PDF photo (graphicx package needed) extra braces are
% needed around the contents of the optional argument to biography to prevent
% the LaTeX parser from getting confused when it sees the complicated
% \includegraphics command within an optional argument. (You could create
% your own custom macro containing the \includegraphics command to make things
% simpler here.)
%\begin{IEEEbiography}[{\includegraphics[width=1in,height=1.25in,clip,keepaspectratio]{mshell}}]{Michael Shell}
% or if you just want to reserve a space for a photo:
\setlength{\parskip}{0pt}
\begin{IEEEbiography}[{\includegraphics[width=1in,height=1.25in,clip,keepaspectratio]{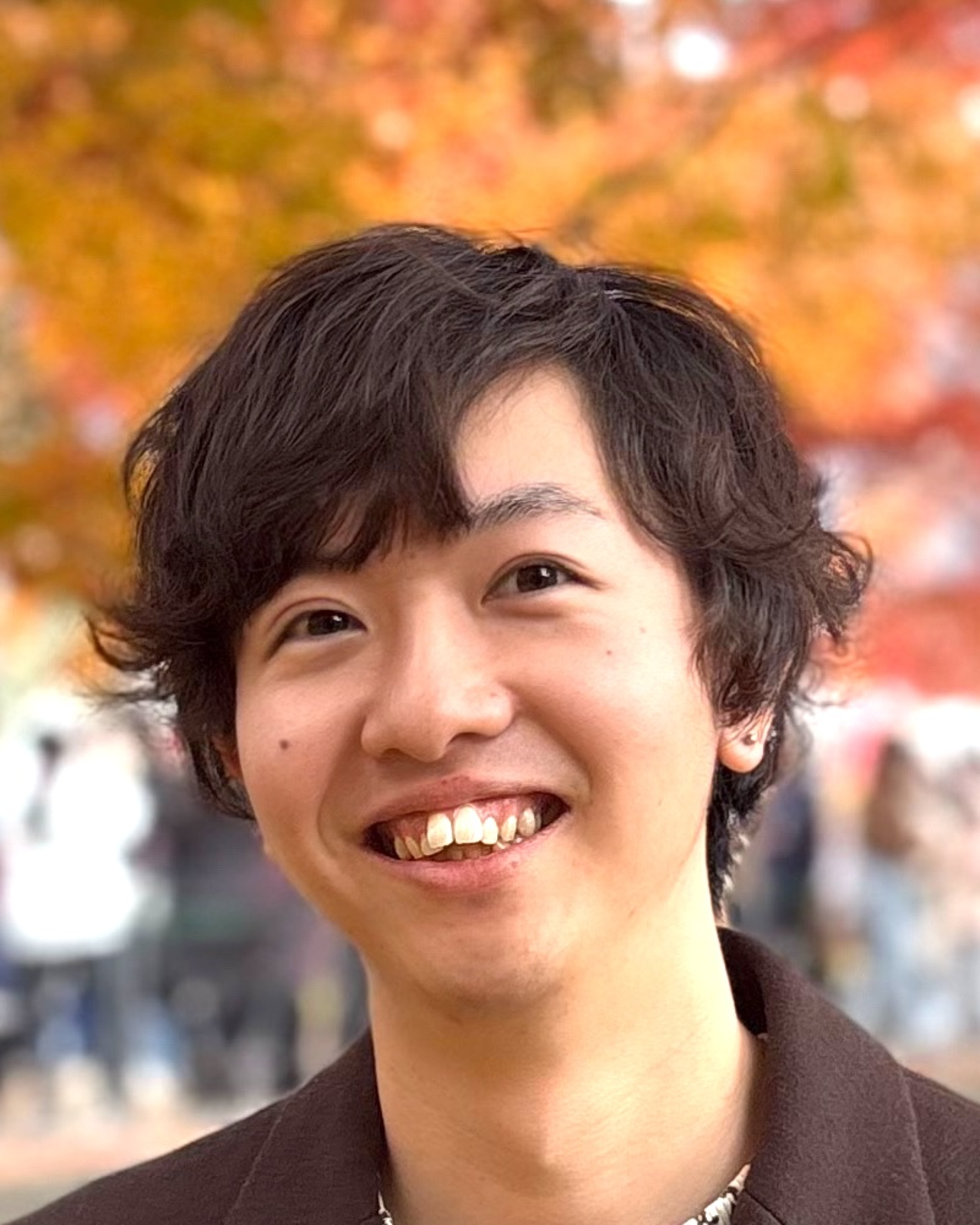}}]{Yodai Suzuki}
(S’24) received a B.E. degree in Information and Computer Science in 2024 from the Tokyo Institute of Technology. He is currently pursuing a master's degree with the Department of Computer Science at the Institute of Science Tokyo. His current research interests include signal and image processing, mathematical optimization, neural networks, and convex analysis. He received the Young Researchers' Award from the IEICE and the PCSJ/IMPS Student Paper Award in 2025.
\end{IEEEbiography}
\begin{IEEEbiography}[{\includegraphics[width=1in,height=1.25in,clip,keepaspectratio]{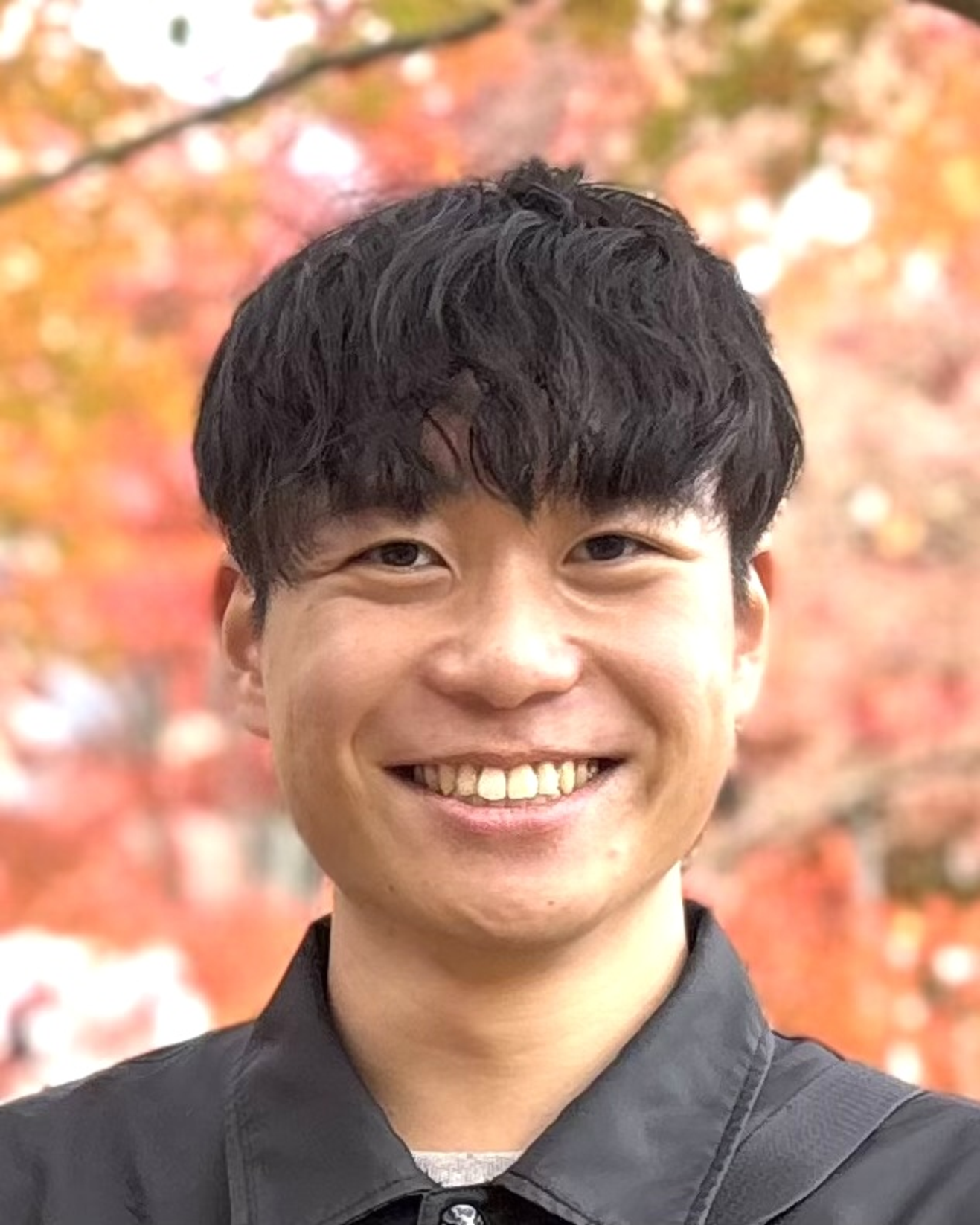}}]{Ryosuke Isono}
(S’23) received B.E. and M.E. degrees in Information and Computer Science in 2022 from the Osaka University and from the Tokyo Institute of Technology, respectively. He is currently pursuing a Ph.D. degree with the Department of Computer Science at the Institute of Science Tokyo. His current research interests include signal and image processing, mathematical optimization, and remote sensing. He received the PCSJ/IMPS Best Poster Award in 2024.
\end{IEEEbiography}
\begin{IEEEbiography}[{\includegraphics[width=1in,height=1.25in,clip,keepaspectratio]{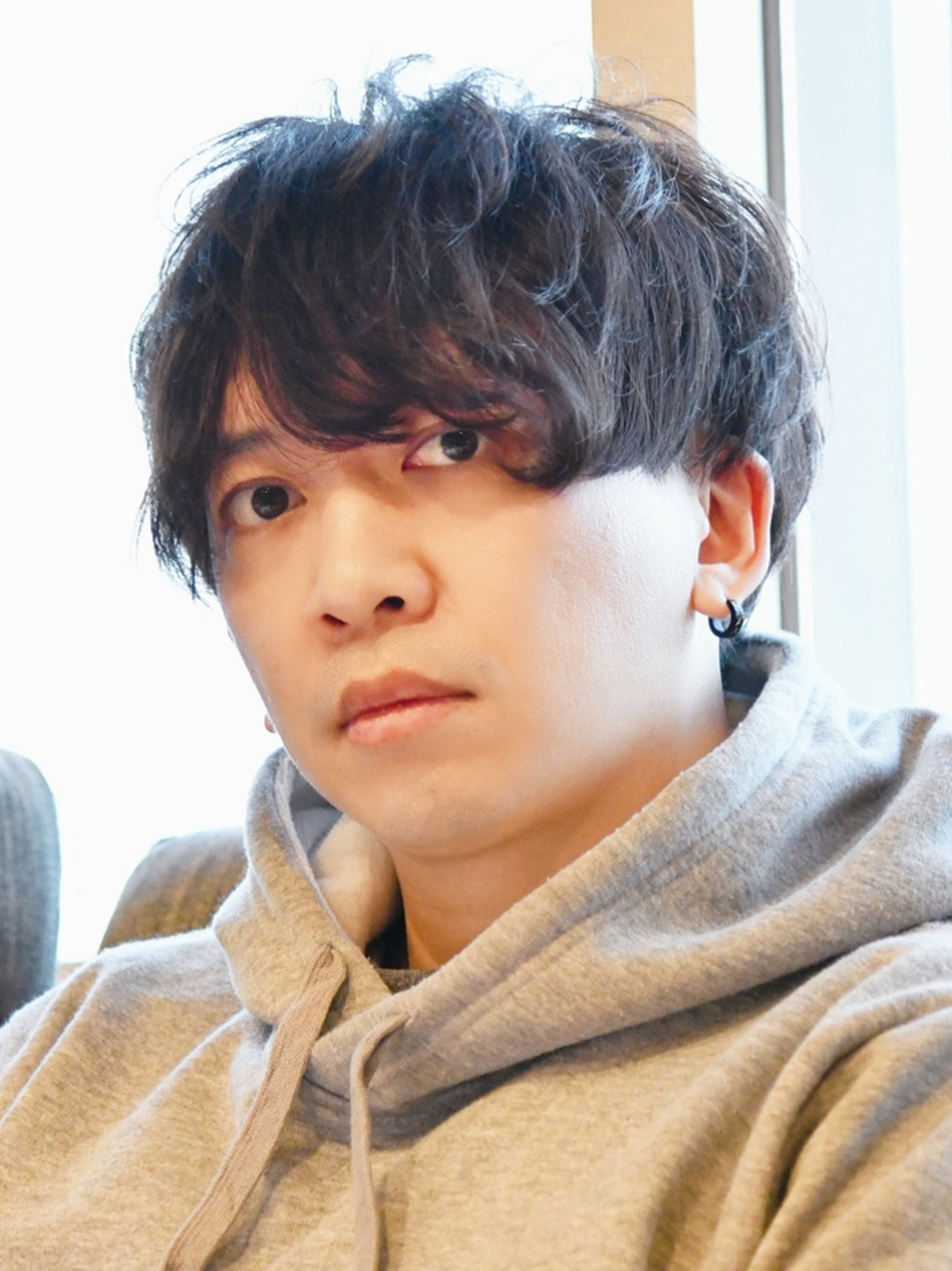}}]{Shunsuke Ono}
(S’11–M’15–SM'23) received a B.E. degree in Computer Science in 2010 and M.E. and Ph.D. degrees in Communications and Computer Engineering in 2012 and 2014 from the Tokyo Institute of Technology, respectively. From 2012 to 2014, he was a Research Fellow (DC1) of the Japan Society for the Promotion of Science (JSPS). He was an Assistant, then an Associate Professor with Tokyo Institute of Technology (TokyoTech), Tokyo, Japan, from 2014 to 2024. From 2016 to 2020, he was a Researcher of Precursory Research for Embryonic Science and Technology (PRESTO), Japan Science and Technology Agency (JST), Tokyo, Japan. Currently, he is an Associate Professor with Institute of Science Tokyo (Science Tokyo), Tokyo, Japan. His research interests include signal processing, image analysis, optimization, remote sensing, and measurement informatics. He has served as an Associate Editor for IEEE TRANSACTIONS ON SIGNAL AND INFORMATION PROCESSING OVER NETWORKS (2019--2024). Dr. Ono was a recipient of the Young Researchers’ Award and the Excellent Paper Award from the IEICE in 2013 and 2014, respectively, the Outstanding Student Journal Paper Award and the Young Author Best Paper Award from the IEEE SPS Japan Chapter in 2014 and 2020, respectively, and the Best Paper Award in APSIPA ASC 2024. He also received the Funai Research Award in 2017, the Ando Incentive Prize in 2021, the MEXT Young Scientists’ Award in 2022, the IEEE SPS Outstanding Editorial Board Member Award in 2023, and the KDDI Foundation Award in 2025. 
\end{IEEEbiography}

% if you will not have a photo at all:
%\begin{IEEEbiographynophoto}{John Doe}
%Biography text here.
%\end{IEEEbiographynophoto}

% insert where needed to balance the two columns on the last page with
% biographies
%\newpage

% You can push biographies down or up by placing
% a \vfill before or after them. The appropriate
% use of \vfill depends on what kind of text is
% on the last page and whether or not the columns
% are being equalized.

%\vfill

% Can be used to pull up biographies so that the bottom of the last one
% is flush with the other column.
%\enlargethispage{-5in}

% that's all folks
\end{document}